\theoremstyle{plain}
\newtheorem{theorem}{Theorem}[section]
\newtheorem{lemma}[theorem]{Lemma}
\newtheorem{example}[theorem]{Example}
\newtheorem{proposition}[theorem]{Proposition}
\newtheorem{corollary}[theorem]{Corollary}
\theoremstyle{definition}
\newtheorem{definition}[theorem]{Definition}
\newtheorem{assumption}[theorem]{Assumption}
\newtheorem*{assumption*}{Standing Assumption}
\newtheorem{remark}[theorem]{Remark}
\theoremstyle{remark}
\numberwithin{equation}{section}
\newcommand{\ba}{\begin{array}{ll}}
\newcommand{\bal}{\begin{array}{ll}}
\newcommand{\ea}{\end{array}}
\newcommand{\E}{\mathbb{E}}
\newcommand{\probp}{\mathbb{P}}
\newcommand{\probq}{\mathbb{Q}}
\newcommand{\R}{\mathbb{R}}
\newcommand{\N}{\mathbb{N}}
\newcommand{\cN}{{\mathcal{N}}}
\newcommand{\cF}{{\mathcal{F}}}
\newcommand{\cB}{{\mathcal{B}}}
\newcommand{\cS}{{\mathcal{S}}}
\newcommand{\cA}{\mathcal{A}}
\newcommand{\cC}{\mathcal{C}}
\newcommand{\cD}{\mathcal{D}}
\newcommand{\cK}{\mathcal{K}}
\newcommand{\cL}{\mathcal{L}}
\newcommand{\cP}{\mathcal{P}}
\newcommand{\cU}{\mathcal{U}}
\newcommand{\cV}{\mathcal{V}}
\newcommand{\cX}{{\mathcal{X}}}
\newcommand{\cY}{{\mathcal{Y}}}
\newcommand{\cone}{\mathop{\rm cone}\nolimits}
\newcommand{\Span}{\mathop{\rm span}\nolimits}
\newcommand{\cl}{\mathop{\rm cl}\nolimits}
\newcommand{\norm}[1]{\left\| #1 \right\|}
\newcommand{\RVaR}{\mathop {\rm RVaR}\nolimits}
\newcommand{\ES}{\mathop {\rm ES}\nolimits}
\newcommand{\epi}{\mathop {\rm epi}\nolimits}
\def\keywords{\vspace{.5em}
{\noindent\textbf{Keywords}:\,\relax%
}}
\def\JELclassification{\vspace{.5em}
{\noindent\textbf{JEL classification}:\,\relax%
}}
\def\MSCclassification{\vspace{.5em}
{\noindent\textbf{MSC}:\,\relax%
}}
\def\@fnsymbol#1{\ensuremath{\ifcase#1\or 1\or 2\or 3\or 4\or 5\or 6\or 7\or 8\else\@ctrerr\fi}}
\begin{document}

\title{Risk measures beyond frictionless markets}

\author{\sc{Maria Arduca} \thanks{Department of Economics and Finance, LUISS Guido Carli University, Rome, Italy, \texttt{marduca@luiss.it}}
\ \ \, \ \  \sc{Cosimo Munari} \thanks{Center for Finance and Insurance and Swiss Finance Institute, University of Zurich, Switzerland, \texttt{cosimo.munari@bf.uzh.ch}}}

\date{\today}

\maketitle

\begin{abstract}
We develop a general theory of risk measures that determines the optimal amount of capital to raise and invest in a portfolio of reference traded securities in order to meet a pre-specified regulatory requirement. The distinguishing feature of our approach is that we embed portfolio constraints and transaction costs into the securities market. As a consequence, we have to dispense with the property of translation invariance, which plays a key role in the classical theory. We provide a comprehensive analysis of relevant properties such as star shapedness, positive homogeneity, convexity, quasiconvexity, subadditivity, and lower semicontinuity. In addition, we establish dual representations for convex and quasiconvex risk measures. In the convex case, the absence of a special kind of arbitrage opportunities allows to obtain dual representations in terms of pricing rules that respect market bid-ask spreads and assign a strictly positive price to each nonzero position in the regulator's acceptance set.
\end{abstract}

\keywords{risk measures, acceptance sets, transaction costs, portfolio constraints}

\JELclassification{D81, G11, C61}

\MSCclassification{91G70, 46A20, 42A40}

\parindent 0em \noindent


\section{Introduction}

The publication of \cite{article:artzner1999coherent} has triggered an impressive amount of research on the topic {\em risk measures}, covering applications to portfolio risk management, capital allocation, pricing and hedging, risk sharing, and solvency regulation. Besides laying down the axioms for coherent risk measures, the main conceptual contribution of that paper is arguably to propose a way to construct general risk measures based on two financial primitives: acceptance sets and eligible assets. Let $\cX$ denote a suitable space of random variables representing financial variables of interest at a given future date. An acceptance set is any target set $\cA\subset\cX$ consisting of desirable variables. An eligible asset is any financial security that is used to reach acceptability. If the security is traded in a frictionless and liquid one-period market, as stipulated in the original paper, then it can be modelled as a couple $(S_0,S_1)\in\R\times\cX$, where $S_0$ and $S_1$ are its initial price and terminal payoff. The corresponding risk measure is defined by
\[
\rho(X) = \inf\{xS_0 \,; \ x\in\R, \ X+xS_1\in\cA\}, \ \ \ \ X\in\cX.
\]
The quantity $\rho(X)$ should be interpreted as the minimal amount of capital that has to be raised at the initial date and invested in the eligible asset to ensure acceptability of $X$. The bulk of the subsequent literature has focused on riskless eligible assets. We refer to \cite{book:follmer2011stochastic} for a comprehensive treatment of this case. The choice of a riskless eligible asset has often been motivated by a ``discounting'' argument; see, e.g., \cite{article:delbaen2002coherent}, \cite{article:follmer2002convex}, \cite{article:frittelli2002putting}. We refer to \cite{article:farkas2014beyond} and \cite{article:farkas2014capital} for a critical discussion about such argument and for a variety of results for general eligible assets. The preceding construction admits a natural extension to multiple eligible assets. If we consider $N$ securities traded in a frictionless and liquid market and model them by $(S^1_0,S^1_1),\dots,(S^N_0,S^N_1)\in\R\times\cX$, then the corresponding risk measure is given by
\[
\rho(X) = \inf\left\{\sum_{i=1}^Nx_iS^i_0 \,; \ x\in\R^N, \ X+\sum_{i=1}^Nx_i S^i_1\in\cA\right\}, \ \ \ \ X\in\cX.
\]
In this case, $\rho(X)$ is the minimal amount of capital that has to be raised at the initial date and invested in a portfolio of eligible assets to reach acceptability. This type of risk measures, which can be viewed as a generalization of superreplication prices, has been studied, e.g., in \cite{article:follmer2002convex}, \cite{article:artzner2009risk}, \cite{article:farkas2015measuring}, \cite{liebrich2019risk}, \cite{article:baes2020existence}.

\smallskip

In this paper, we dispense with the standard assumption that eligible assets are traded in a frictionless and liquid market and focus on risk measures of the form
\[
\rho(X) = \inf\{V_0(x) \,; \ x\in\cP, \ X+V_1(x)\in\cA\}, \ \ \ \ X\in\cX.
\]
The set $\cP\subset\R^N$ captures restrictions on the admissible portfolios of eligible assets, and the maps $V_0$ and $V_1$ assign to each portfolio of eligible assets its initial acquisition price and terminal liquidation value, respectively. The quantity $\rho(X)$ can therefore still be interpreted as the minimal amount of capital that has to be raised at the initial date and invested in a portfolio of eligible assets to reach acceptability. However, differently from the bulk of the literature, we stipulate no a priori assumption on the underlying financial market.
In particular, we allow for both proportional and nonproportional transaction costs and portfolio constraints. In addition, we do not posit that eligible assets pay off at the terminal date and we are thus forced to model terminal liquidation. Moreover, to capture important examples from practice, we consider both convex and nonconvex acceptance sets. By doing so, we aim to develop a general theory of risk measures beyond frictionless markets. This is a natural step to complement the existing literature. One of the main advantages of this general approach is that the standard properties of risk measures studied in the literature, like translation invariance, convexity, quasiconvexity, star shapedness, or (semi)continuity, can be interpreted as special cases of general properties expressed in terms of the underlying financial primitives and this helps fully appreciate their economic foundation.

\smallskip

From a technical perspective, a major challenge is the lack of translation invariance implied by market frictions. This property plays a crucial role in the standard theory. In particular, it greatly simplifies the study of lower semicontinuity, which is a necessary preliminary step to establish dual representations. Indeed, under translation invariance, lower semicontinuity is immediately implied by the closedness of the acceptance set in the single-asset case and by suitable no-arbitrage conditions in the multi-asset case. The failure of translation invariance requires pursuing a new strategy to establish lower semicontinuity of general risk measures (Theorem~\ref{thm:lsc_NoA_Llinear}). The key ingredient of our results is related to the absence of so-called scalable acceptable deals, which are special portfolios of eligible assets that, independently of their size, are admissible, can be acquired at zero cost, and deliver a nonzero acceptable liquidation payoff. We include a thorough discussion of scalable acceptable deals and their relationship with arbitrage opportunities (Propositions~\ref{prop: conditions on L} and~\ref{prop: sufficient conditions for NSAD}). Interestingly enough, in markets with frictions, there are situations where arbitrage opportunities arise whereas scalable acceptable deals cannot exist.

\smallskip

We establish dual representations for convex and quasiconvex risk measures providing a unifying perspective on dual representations in the frictionless literature. In the convex case, the domain of the dual representation is shown to consist of pricing rules defined on the entire model space that are consistent with market prices, i.e., lie in a suitably adjusted bid-ask spread, and respect the acceptance set, i.e., the range of prices assigned to acceptable positions is bounded from below (Theorem~\ref{theo: dual repr convex}). We also investigate when the domain can be further restricted to consist only of pricing rules that assign a strictly positive price to acceptable positions, and show that this special representation holds if the market admits no scalable acceptable deals (Theorem~\ref{theo: improved dual}). The corresponding dual representation can be seen as a risk-measure version of the classical Superhedging Theorem from arbitrage pricing. In the quasiconvex case, the domain is larger and, as illustrated by examples, must contain also pricing rules that are not consistent with market prices (Theorem~\ref{theo: dual representation quasiconvex}).

\smallskip

To the best of our knowledge, \cite{article:frittelli2006risk} is the only contribution to the risk measure literature where functionals similar to our general risk measures have been studied. The authors are concerned with reaching acceptability of streams of random variables through capital injections and, hence, they do not consider any market for eligible assets. However, as a preliminary mathematical step, they introduce functionals of the form
\[
\rho(X) = \inf\{\pi(Y) \,; \ Y\in\cC, \ X+Y\in\cA\}, \ \ \ \ X\in\cX.
\]
From our perspective, the elements of $\cC\subset\cX$ may be viewed as payoffs of portfolios of eligible assets and the map $\pi$ as a pricing rule. Our risk measures are at the same time less general, because $\cP$ is contained in a finite-dimensional space whereas $\cC$ need not be, and more general, because the presence of transaction costs at the terminal date, reflected by the lack of linearity of the liquidation map $V_1$, precludes us from attaching the same price to portfolios having the same liquidation value. Note that this is compatible with the absence of arbitrage opportunities in our market with frictions. In addition, the results of that paper are established under convexity assumptions and neither sufficient conditions for lower semicontinuity nor sufficient conditions to restrict the domain of dual representations to ``strict'' pricing rules are discussed there. Our work is also closely related to \cite{article:arduca2020market}. The focus of this paper is on good deal pricing in markets with frictions. In particular, the Superhedging Theorem established there resembles our dual representation with ``strict'' pricing rules. The main differences are that we work here with abstract model spaces and, once again, we have to model liquidation at the terminal date and do not work under convexity assumptions. This makes some key arguments used in that paper inapplicable in our case.

\smallskip

The paper is organized as follows. In Section~\ref{sec: model} we introduce the financial primitives and define general risk measures based on them. In Section~\ref{sect: properties} we derive from the properties of the primitives a variety of properties of general risk measures, featuring star shapedness, positive homogeneity, convexity, quasiconvexity, subadditivity. In Section \ref{sect: lsc} we establish sufficient conditions for lower semicontinuity under suitable extensions of the classical absence of arbitrage opportunities and under suitable regularity assumptions on the liquidation pricing rule. Section~\ref{sect: dual representations} is devoted to dual representations of convex and quasiconvex risk measures. A final appendix collects the necessary mathematical background.


\section{General risk measures}
\label{sec: model}

We consider an agent confronted with the problem of determining the optimal amount of capital to invest in an outstanding financial market in order to hedge a financial position at an acceptable level of risk. As is standard in risk measure theory, we select two reference dates. At the initial date $0$, the agent has to raise and invest capital. At the terminal date $1$, the agent's position materializes. The set of terminal financial positions is described by a real topological vector space $\cX$, which is assumed to be partially ordered by a convex cone $\cX_+$. For all $X,Y\in\cX$ we write $X\geq Y$ whenever $X-Y\in\cX_+$. The financial market consists of $N$ securities. Every vector in $\R^N$ is therefore interpreted as a portfolio of securities with the usual sign convention on long and short positions. For $i=1,\dots,N$ we denote by $e^i$ the $i$th unit vector in $\R^N$ corresponding to holding one unit of the $i$th asset. Four are the primitives of our problem:
\begin{itemize}
  \item The set of admissible portfolios $\cP\subset\R^N$.
  \item The acquisition pricing rule $V_0:\R^N\to\R$.
  \item The liquidation pricing rule $V_1:\R^N\to\cX$.
  \item The acceptance set $\cA\subset\cX$.
\end{itemize}
The set of admissible portfolios $\cP$ models portfolio constraints to which the agent may be subject, e.g., short selling restrictions or restricted access to certain asset classes or market segments. The acquisition pricing rule $V_0$ models ask prices at the initial date. Similarly, the liquidation pricing rule $V_1$ models bid prices at the terminal date. More precisely, for every portfolio $x\in\R^N$, the quantity $V_0(x)$ represents the amount of capital that is needed to buy $x$ at the initial date whereas the quantity $V_1(x)$ represents the amount of capital that is received by selling $x$ at the terminal date. In line with this interpretation, the quantity $-V_0(-x)$ represents the amount of capital that is received by selling $x$ at the initial date whereas the quantity $-V_1(-x)$ represents the amount of capital that is needed to buy $x$ at the terminal date. Finally, the acceptance set $\cA$ consists of all terminal financial positions that are deemed acceptable by the agent. The specific choice can be based on the agent's preferences towards risk or on external criteria, e.g., imposed by a financial regulator. Throughout the paper we work under the following assumptions.

\begin{assumption}
\begin{enumerate}
  \item[(1)] $0\in\cP$. 
  \item[(2)] $V_0(0)=0$ and $V_0(x)\geq-V_0(-x)$ for every $x\in\R^N$.
  \item[(4)] $V_1(0)=0$ and $V_1(x)\leq-V_1(-x)$ for every $x\in\R^N$.
  \item[(4)] $0\in\cA$ and $\cA+\cX_+\subset\cA$.
\end{enumerate}
\end{assumption}
The first property allows the agent to stay away from the market but otherwise imposes no restriction on portfolio constraints. The second and third properties stipulate that the pricing rules command a nonnegative bid-ask spread for every portfolio and are compatible with general forms of transaction costs (both proportional and nonproportional). Note that, provided the zero portfolio has zero acquisition/liquidation value, these properties are automatically satisfied if $V_0$ is convex and $V_1$ is concave. The fourth property is a standard rationality assumption in risk measure theory requiring that any position dominating an acceptable position must itself be acceptable.

\begin{remark}
(i) Our framework is flexible and compatible with each of the following two situations.
\begin{itemize}
  \item The basic securities pay off at the terminal date. In this case, the agent does not have to liquidate his or her portfolio in the market but simply cashes in the corresponding payoff. The liquidation rule $V_1$ is thus given by a simple aggregation of the individual payoffs and is therefore linear. This is the standard setting considered in the risk measure literature.
  \item Some of the basic securities pays off after the terminal date. In this case, the agent has to liquidate his or her portfolio at the prevailing market conditions. Whether or not $V_1$ is linear will depend on market frictions at the terminal date.
\end{itemize}

\smallskip

(ii) In the literature on risk measures in frictionless markets it is standard to work with payoffs instead of portfolios. This is possible because one assumes, implicitly or explicitly, that the liquidation rule $V_1$ is injective, so that for different portfolios $x,y\in\R^N$ one always has $V_1(x)\neq V_1(y)$. In a frictionless setting, this is equivalent to assuming that none of the basic securities is redundant, i.e., the payoffs of the basic securities are linearly independent. In turn, this law of one price allows to unambiguously define a pricing rule at the initial date applied directly to payoffs: For every $x\in\R^N$ the payoff $V_1(x)$ is assigned the price $V_0(x)$. In our general setting there seems to be no compelling reason to require injectivity of $V_1$. In particular, it should be observed that lack of injectivity of $V_1$ does not imply existence of arbitrage opportunities in a market with frictions.
\end{remark}

As said in the introduction, the agent's problem is to determine the minimal amount of capital that has to be raised and invested in an admissible portfolio of basic securities in order to ensure the acceptability of his or her outstanding financial position.

\begin{definition}
The {\em risk measure} associated to $(\cA,\cP,V_0,V_1)$ is the map $\rho:\cX\to[-\infty,\infty]$ defined by
\[
\rho(X):=\inf\{V_0(x) \,; \ x\in\cP, \ X+V_1(x)\in\cA\}, \ \ \ \ X\in\cX.
\]
\end{definition}

We conclude this section by collecting examples of the primitive elements $(\cA,\cP,V_0,V_1)$.

\begin{example}
Portfolio constraints have been widely investigated in the pricing literature with emphasis on proportional constraints ($\cP$ is a convex cone) or nonproportional constraints ($\cP$ is convex); see, e.g., \cite{broadie1998optimal}, \cite{jouini1999viability}, \cite{pham1999fundamental}. We list some standard examples covering the case of no portfolio constraints, no short selling, caps on long and short positions, margin requirements, and collateral requirements. Note that, depending on $V_0$, the constraints may be convex or not. The constraints below can be easily adapted to be binding for selected securities only.
\begin{itemize}
    \item $\cP=\R^N$.
    \item $\cP=\{x\in\R^N \,; \ x_i\geq0, \ \forall i=1,\dots,N\}$.
    \item $\cP=\{x\in\R^N \,; \ \underline{x}_i\leq x_i\leq \overline{x}_i, \ \forall i=1,\dots,N\}$ where $-\infty\leq\underline{x}_i<\overline{x}_i\leq\infty$ for $i=1,\dots,N$.
    \item $\cP=\{x\in\R^N \,; \ V_0(x_ie^i)+\gamma_iV_0(x)\geq0, \ \forall i=1,\dots,N\}$ where $\gamma_i>0$ for $i=1,\dots,N$.
    \item $\cP=\{x\in\R^N \,; \ \gamma V_0(\max\{x,0\})+V_0(\min\{x,0\})\geq0\}$ where $\gamma>0$.
\end{itemize}
\end{example}

\begin{example}
\label{ex: pricing rules}
Pricing rules beyond frictionless markets have also been thoroughly studied in the literature; see, e.g., \cite{pham1999fundamental}, \cite{cetin2007modeling}, \cite{article:pennanen2011arbitrage}. A standard approach is to define prices by way of aggregation. For $i=1,\dots,N$, let $p^a_i,p^b_i:\R_+\to\R_+$ be nondecreasing functions determining the ask and bid price of any given number of units of security $i$, respectively. Assume that $p^a_i(0)=p^b_i(0)=0$ and that bid-ask spreads are nonnegative, i.e., $p^a_i(x)\geq p^b_i(x)$ for every $x\in\R_+$. The functional $V_0:\R^N\to\R$ defined by
\[
V_0(x):=\sum_{x_i\geq0}p^a_i(x_i)-\sum_{x_i<0}p^b_i(-x_i), \ \ \ \ x\in\R^N,
\]
is an acquisition pricing rule that satisfies our standing assumptions. Depending on $p^a_i$ and $p^b_i$, one can cover the case of a frictionless market ($V_0$ is linear), a market with proportional transaction costs ($V_0$ is sublinear), and a market with nonproportional transaction costs ($V_0$ is convex or star shaped).
\begin{itemize}
\item If $p^a_i$ and $p^b_i$ are positively homogeneous and $p^a_i=p^b_i$ for $i=1,\dots,N$, then $V_0$ is linear.
\item If $p^a_i$ and $p^b_i$ are positively homogeneous for $i=1,\dots,N$, then $V_0$ is sublinear.
\item If $p^a_i$ is convex, $p^b_i$ is concave, and they are right continuous at $0$ for $i=1,\dots,N$, then $V_0$ is convex.
\item If $p^a_i$ is star shaped and $p^b_i$ is anti-star shaped for every $i=1,\dots,N$, then $V_0$ is star shaped.
\end{itemize}
The liquidation pricing rule can be defined by following a similar aggregation approach. For $i=1,\dots,N$, let $\varphi^a_i,\varphi^b_i:\R_+\to\R_+$be nondecreasing functions satisfying $\varphi^a_i(0)=\varphi^b_i(0)=0$ and $\varphi^a_i(x)\geq \varphi^b_i(x)$ for every $x\in\R_+$. Moreover, take $S^a_1,\dots,S^a_N,S^b_1,\dots,S^b_N\in\cX_+$. The map $V_1:\R^N\to\cX$ defined by
\[
V_1(x):=\sum_{x_i\geq0}\varphi^b_i(x_i)S^b_i-\sum_{x_i<0}\varphi^a_i(-x_i)S^a_i, \ \ \ \ x\in\R^N,
\]
is a liquidation pricing rule that satisfies our standing assumptions. As above, one can cover the case of a frictionless market ($V_1$ is linear), a market with proportional transaction costs ($V_1$ is superlinear), and a market with nonproportional transaction costs ($V_1$ is concave or anti-star shaped).
\end{example}

\begin{example}
\label{ex: kabanov 1p}
We define pricing rules in a currency market based on the setting of \cite{article:kabanov1999hedging} and \cite{article:schachermayer2004fundamental}. Fix a complete probability space $(\Omega,\cF,\probp)$. For $t=0,1$ the $N\times N$ bid-ask matrix $\Pi_t=(\pi^{ij}_t)$ rules the exchange between currencies at time $t$, i.e., $\pi^{ij}_t$ is the cost of buying one unit of currency $j$ in terms of currency $i$ at time $t$. Note that $\Pi_0$ is deterministic whereas $\Pi_1$ is random. As in \cite{article:schachermayer2004fundamental}, we assume that $\pi^{ij}_t>0$ and $\pi^{ii}_t=1$, and that $\pi^{ij}_t\leq\pi^{ik}_t\pi^{kj}_t$ for all $i,j,k=1,\dots,N$. The convex cone of positions that can be liquidated into the null position (possibly throwing away money) at time $t$ is defined as
\[
K_t:=\cone\{e^i \,; \ i=1,\dots,N, \ \pi^{ij}_te^i-e^j, \ 1\leq i,j \leq N\}.
\]
Clearly, the corresponding set of portfolios available at zero cost at time $t$ is $-K_t$. We denote by $K^*_t$ the polar cone of $-K_t$, which can be equivalently written as
\[
K^*_t =\{ z\in\R^N_+ \,; \ \pi^{ij}_tz_i\geq z_j, \ 1\leq i,j\leq N\}.
\]
Pricing rules can be defined by using the first currency as the numeraire as, e.g., in \cite{bouchard2001option} or \cite{astic2007no}. At the initial date, acquisition values are given by the smallest amount of the first currency that can be exchanged for a given portfolio, i.e.
\[
V_0(x) := \inf\{m\in\R \,; \ me^1-x\in K_0\}, \ \ \ x\in\R^N.
\]
It is easy to see that $V_0$ is sublinear. At the terminal date, liquidation values correspond to the largest amount of the first currency that can be obtained in exchange of a given portfolio, i.e.
\[
V_1(x) := \sup \{m\in\R \, ; \ x-me^1\in K_1\}, \ \ \ x\in\R^N.
\]
Clearly, $V_1$ is superlinear. To study ``finiteness'' of $V_0$ and $V_1$, define the set $Z_t := \{z\in K^\ast_t \,; \ z_1=1\}$. By the Bipolar Theorem, for every $x\in\R^N$ we can rewrite $V_0$ and $V_1$ as
\[
V_0(x) = \inf\left\{m\in\R \,; \ mz_1-x\cdot z\geq0, \ z\in K^\ast_0\right\} = \inf\left\{m\in\R \,; \ m-x\cdot z\geq0, \ z\in Z_0\right\} = -\sigma_{Z_0}(-x),
\]
\[
V_1(x) = \sup\left\{m\in\R \,; \ x\cdot z-mz_1\geq0, \ z\in K^\ast_1\right\} = \sup\left\{m\in\R \,; \ x\cdot z-m\geq0, \ z\in Z_1\right\} = \sigma_{Z_1}(x),
\]
where we used the standard notation for the scalar product in $\R^N$. As $Z_t$ is a polytope contained in the set $\{1\}\times\left[1/\pi^{21}_1,\pi^{12}_1\right]\times\dots\times\left[1/\pi^{N1}_1,\pi^{1N}_1\right]$, it follows by combining Example 1.2, Theorem 2.25, and Proposition 2.5 in \cite{book:molchanov2005theory} that $V_1(x)$ is a well-defined random variable and for every $x\in\R^N$
\begin{equation}
\label{eq: V1 valued in X}
|V_t(x)| \leq |x_1| + \sum_{i=2}^N |x_i|\pi_t^{1i}.
\end{equation}
This shows that $V_0$ takes finite values. In addition, it allows to control the outcomes of $V_1$ as well. For instance, if $\cX$ is a Banach lattice, e.g., an Orlicz space, then $V_1$ takes values in $\cX$ whenever the ask prices $\pi^{1i}_1$ belong to $\cX$ for $i=2,\dots,N$. The same example can be cast in the general setting of \cite{article:kaval2006link}, where pricing rules correspond as above to support functions on sets $Z_t$, which are general convex compact subsets of $\R^N_{++}$ (not necessarily polytopes).
\end{example}

\begin{example}
\label{ex: acceptance set}
Several concrete acceptance sets have been considered in the risk measure literature. In what follows we list some key examples when $\cX$ is a space of integrable random variables over a probability triple $(\Omega,\cF,\probp)$. For $\alpha\in(0,1)$ and $X\in\cX$ we denote by $q^+_\alpha(X)$ the upper $\alpha$-quantile of $X$. The next examples cover acceptance sets based on worst-case scenarios, Expected Shortfall, and expectiles (which are convex cones), acceptance sets based on expected utility and Adjusted Expected Shortfall (which are convex but not conic in general), and acceptance sets based on Value at Risk and Range Value at Risk (which are not convex in general). We refer to \cite{book:follmer2011stochastic} \cite{cont2010robustness}, \cite{article:bellini2014generalized}, \cite{burzoni2022adjusted} for more details about these acceptance sets.
\begin{itemize}
\item $\cA=\{X\in\cX \,; \ \probp(X\geq0)=1\}$.
\item $\cA=\big\{X\in\cX \,; \ \ES_\alpha(X):=-\frac{1}{\alpha}\int_0^\alpha q^+_\beta(X)d\beta\leq0\big\}$ for $\alpha\in(0,1)$.
\item $\cA=\left\{X\in\cX \,; \ \tfrac{\E_\probp[\max\{X,0\}]}{\E_\probp[\max\{-X,0\}]}\geq\frac{1-\alpha}{\alpha}\right\}$ where $\alpha\in(0,1/2)$ and we use the convention $\tfrac{0}{0}=\infty$.
\item $\cA= \{X\in \cX \,; \ \E_\probp[u(X)]\geq0\}$ where $u:\R\to[-\infty,\infty)$ is nondecreasing and satisfies $u(0)=0$.
\item $\cA=\{X\in\cX \,; \ \ES_\alpha(X)\leq g(\alpha), \ \forall \alpha\in(0,1)\}$ where $g:(0,1)\to(-\infty,\infty]$ is nonincreasing.
\item $\cA=\{X\in\cX \,; \ q^+_X(\alpha)\geq 0\}=\{X\in\cX \,; \ \probp(X<0)\leq\alpha\}$ where $\alpha\in(0,1)$.
\item $\cA=\big\{X\in\cX \,; \ \RVaR_{\alpha,\beta}(X):=-\frac{1}{\beta-\alpha}\int_\alpha^\beta q^+_\gamma(X)d\gamma\leq0\big\}$ for $0<\alpha<\beta<1$.
\end{itemize}
\end{example}


\section{Algebraic properties}
\label{sect: properties}

In this section we focus on a number of basic properties of risk measures encountered in the literature and show their link with the underlying financial primitives. This is important because it allows us to understand how a certain change in the market model impacts the behavior of a risk measure, thereby providing a stronger economic foundation for its properties. In addition, this helps recast the similar results in the literature as special cases of a general theory.

\smallskip

Our first result collects sufficient conditions for a risk measure to be star shaped, positively homogeneous, quasiconvex, convex, and subadditive. The terminology is standard and reviewed in the appendix. From a risk measure perspective, star shapedness and positive homogeneity mean that the ratio between risk capital and exposure size is increasing, respectively constant, while quasiconvexity, convexity, and subadditivity reflect, each in its own way, the diversification principle according to which risk capital of aggregate positions is controlled by risk capital of stand-alone positions. The combination of positive homogeneity and subadditivity was first studied in a risk measure setting in \cite{article:artzner1999coherent} and the extension to convex risk measures was taken up by \cite{article:follmer2002convex} and \cite{article:frittelli2002putting}. We refer to \cite{article:cerreia2011risk} and \cite{article:drapeau2013risk} for a treatment of quasiconvex risk measures and to \cite{castagnoli2021star} for a treatment of star-shaped risk measures.

\begin{proposition}
\label{prop: properties of rho}
The risk measure $\rho$ is nonincreasing. Moreover, the following statement hold:
\begin{enumerate}
  \item[(i)] If $\cA$ and $\cP$ are star shaped, $V_0$ is star shaped, and $V_1$ is anti-star shaped, then $\rho$ is star shaped.
\item[(ii)] If $\cA$ and $\cP$ are cones, and $V_0$ and $V_1$ are positively homogeneous, then $\rho$ is positively homogeneous.
  \item[(iii)] If $\cA$ and $\cP$ are convex, $V_0$ is quasiconvex, and $V_1$ is concave, then $\rho$ is quasiconvex.
  \item[(iv)] If $\cA$ and $\cP$ are convex, $V_0$ is convex, and $V_1$ is concave, then $\rho$ is convex.
  \item[(v)] If $\cA$ and $\cP$ are closed under addition, $V_0$ is subadditive, and $V_1$ is superadditive, then $\rho$ is subadditive.
\end{enumerate}
\end{proposition}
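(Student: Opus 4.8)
The plan is to derive all six claims from one template. To bound $\rho$ of a transformed position (a scaling $\lambda X$, a convex combination $\lambda X+(1-\lambda)Y$, or a sum $X+Y$), I would take portfolios that are feasible—and eventually near-optimal—for the original positions, form the candidate portfolio by applying the \emph{same} operation, and check it is both admissible and acceptable; then I would pass to the infimum. The three pricing/portfolio primitives play complementary roles: the hypothesis on $\cP$ keeps the candidate admissible, the hypothesis on $V_1$ together with the monotonicity property $\cA+\cX_+\subset\cA$ keeps its position acceptable, and the hypothesis on $V_0$ controls its cost.

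First I would settle monotonicity, which is the engine for the acceptability step. If $X\le Y$ and $x\in\cP$ satisfies $X+V_1(x)\in\cA$, then $Y+V_1(x)=(X+V_1(x))+(Y-X)\in\cA+\cX_+\subset\cA$, so every portfolio feasible for $X$ is feasible for $Y$; taking the infimum of $V_0$ over the larger feasible set yields $\rho(Y)\le\rho(X)$.

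For the five algebraic properties, I would fix the relevant operation and portfolios $x$ (feasible for $X$) and $y$ (feasible for $Y$). The candidate portfolio lies in $\cP$ directly by the stated property of $\cP$ (star shapedness, conicity, convexity, or closedness under addition). For acceptability, the key point is that the property of $V_1$ produces a payoff dominating the naive combination: e.g. in the convex case concavity of $V_1$ gives $V_1(\lambda x+(1-\lambda)y)\ge\lambda V_1(x)+(1-\lambda)V_1(y)$, whence
\[
(\lambda X+(1-\lambda)Y)+V_1(\lambda x+(1-\lambda)y)\ge\lambda(X+V_1(x))+(1-\lambda)(Y+V_1(y)),
\]
and the right-hand side lies in $\cA$ by convexity of $\cA$, so the left-hand side lies in $\cA+\cX_+\subset\cA$. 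The analogous step uses anti-star shapedness of $V_1$ in (i), positive homogeneity in (ii), and superadditivity in (v); in (ii) one additionally exploits that $\cP$ and $\cA$ are cones to run the argument in reverse (via the bijection $x\mapsto\lambda x$ with $\lambda>0$) and upgrade the inequality to an equality. The property of $V_0$ then bounds the cost, e.g. $V_0(\lambda x+(1-\lambda)y)\le\lambda V_0(x)+(1-\lambda)V_0(y)$ in the convex case.

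It remains to pass to the infimum, which is the only step I expect to require real care. For convexity and subadditivity the cost bound separates additively, so $\inf_{x,y}[\lambda V_0(x)+(1-\lambda)V_0(y)]=\lambda\rho(X)+(1-\lambda)\rho(Y)$ and $\inf_{x,y}[V_0(x)+V_0(y)]=\rho(X)+\rho(Y)$; for star shapedness and positive homogeneity the corresponding single-variable infimum gives $\inf_x\lambda V_0(x)=\lambda\rho(X)$. For quasiconvexity I would verify $\inf_{x,y}\max\{V_0(x),V_0(y)\}\le\max\{\rho(X),\rho(Y)\}$ by selecting $x,y$ that are $\varepsilon$-optimal for $X$ and $Y$. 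In every case the degenerate values must be dispatched separately: $\rho(X)=+\infty$ means the feasible set for $X$ is empty and the inequality is vacuous, while $\rho=-\infty$ is handled by driving $V_0$ to $-\infty$ along a feasible sequence. These are routine but should be stated. I expect this infimum interchange, rather than the feasibility construction, to be the only delicate part.
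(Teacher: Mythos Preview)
Your proposal is correct and follows essentially the same approach as the paper's proof: both argue by taking feasible portfolios for the original positions, forming the candidate via the relevant operation, and checking admissibility (via the hypothesis on $\cP$), acceptability (via the hypothesis on $V_1$ combined with $\cA+\cX_+\subset\cA$), and cost control (via the hypothesis on $V_0$), then passing to the infimum. The paper's presentation of (i) runs the chain with $\lambda>1$ rather than $\lambda\in[0,1]$, and it is terser about the infimum step and degenerate values, but there is no substantive difference.
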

\begin{proof}
Take $X,Y\in\cX$ and assume that $X\geq Y$. If $x\in\cP$ is such that $Y+V_1(x)\in\cA$, then we get $X+V_1(x)\in\cA$ by monotonicity of $\cA$. As a result, $\rho(X)\leq\rho(Y)$, showing that $\rho$ is nonincreasing. Now, suppose the assumptions in (i) hold. Then, for $\lambda>1$ and $X\in\cX$
\begin{align*}
		\rho(\lambda X)&=\inf\{V_0(x) \,; \ x\in\cP, \ \lambda X+V_1(x)\in\cA\}\\
		&=\inf\left\{V_0(x) \,; \ x\in\cP, \ X+\tfrac{1}{\lambda}V_1(x)\in\cA\right\}\\
		&\geq \inf\left\{V_0(x) \,; \ x\in\cP, \ X+V_1\left(\tfrac{1}{\lambda}x\right)\in\cA\right\}\\
		&=\inf\{V_0(\lambda x) \,; \ x\in\cP, \ X+V_1(x)\in\cA\}\\
		&\geq \lambda\rho(X).
\end{align*}
This shows that $\rho$ is star shaped. Under the assumptions in (ii), the two inequalities are actually equalities, showing that $\rho$ is even positively homogeneous. Next, suppose the assumptions in (iii) hold. Take $X,Y\in\cX$ and $x,y\in\cP$ such that $X+V_1(x)\in\cA$ and $Y+V_1(y)\in\cA$. Moreover, take $\lambda\in[0,1]$. As
\begin{equation*}
		\lambda X+(1-\lambda)Y+V_1(\lambda x+(1-\lambda)y)\geq \lambda\big(X+V_1(x)\big)+(1-\lambda)\big(Y+V_1(y)\big)\in\cA,
\end{equation*}
we infer that $\rho(\lambda X+(1-\lambda)Y)\leq V_0(\lambda x+(1-\lambda)y)\leq \max\{V_0(x),V_0(y)\}$. This yields $\rho(\lambda X+(1-\lambda)Y)\leq\max\{\rho(X),\rho(Y)\}$ and shows that $\rho$ is quasiconvex. Under the assumptions in (iv), we additionally have $\rho(\lambda X+(1-\lambda)Y)\leq \lambda V_0(x)+(1-\lambda)V_0(y)$, showing that $\rho$ is convex. Finally, suppose the assumptions in (v) hold. Take $X,Y\in\cX$ and $x,y\in\cP$ such that $X+V_1(x)\in\cA$ and $Y+V_1(y)\in\cA$. Then
\begin{equation*}
X+Y+V_1(x+y)\geq \big(X+V_1(x)\big)+\big(Y+V_1(y)\big)\in\cA,
\end{equation*}
implying that $\rho(X+Y)\leq V_0(x+y)\leq V_0(x)+V_0(y)$. This shows that $\rho$ is subadditive.
\end{proof}

\begin{remark}
The conditions on the financial primitives listed in the preceding proposition are sufficient but, in general, not necessary to yield the corresponding properties of risk measures. For instance, we show a simple example where $\rho$ is convex even though $\cA$ is not convex. Let $\cX$ be the space of random variables on the probability space $(\Omega,\cF,\probp)$ where $\Omega=\{\omega_1,\omega_2\}$, $\cF$ coincides with the parts of $\Omega$, and $\probp$ assigns probability $1/2$ to each scenario. Set $\cP=\R^2$ and define $\cA=\cA_1\cup\cA_2\cup\cA_3$ where
\[
\cA_1 = \{X\in\cX \,; \ X(\omega_1)\geq0, \ X(\omega_2)\geq0\},
\vspace{-0.12cm}
\]
\[
\cA_2 = \{X\in\cX \,; \ -1\leq X(\omega_1)\leq0, \ X(\omega_2)\geq1\},
\]
\[
\cA_3 = \{X\in\cX \,; \ X(\omega_1)\geq1, \ -1\leq X(\omega_2)\leq0\}.
\]
Observe that $\cA$ is not convex. In fact, it is not even star shaped. Moreover, define
\[
V_0(x)=x_1+x_2, \ \ \ V_1(x)=x_1R+x_2S, \ \ \ x\in\R^2,
\]
where $R=1_\Omega\in\cX$ and $S=1_{\{\omega_1\}}\in\cX$. For $i=1,2,3$ and for every $X\in\cX$
\[
\rho_i(X)=\inf\{V_0(x) \,; \ x\in\R^2, \ X+V_1(x)\in\cA_i\}=
\begin{cases}
-X(\omega_1) & \mbox{if} \ i=1,\\
-X(\omega_1)-1 & \mbox{if} \ i=2,\\
-X(\omega_1)+1 & \mbox{if} \ i=3.
\end{cases}
\]
As a result, $\rho(X)=\min\{\rho_1(X),\rho_2(X),\rho_3(X)\}=-X(\omega_1)-1$. This shows that $\rho$ is convex.
\end{remark}

We establish a representation of general risk measures in terms of convex risk measures that is akin to a classical representation of Value at Risk, see Proposition 4.47 in~\cite{book:follmer2011stochastic}, which was recently generalized in Theorem 5 in~\cite{castagnoli2021star}. More precisely, we show that, if the underlying market frictions are convex, then any risk measure can be expressed as a minimum of convex risk measures with the same market primitives. As discussed in the literature, this type of convex representation is useful in optimization problems, e.g., risk minimization. Note that our result is stated for general, not necessarily star shaped, acceptance sets.

\begin{proposition}
If $\cP$ is convex, $V_0$ is convex, and $V_1$ is concave, then there exists a set $I$ such that for every $i\in I$ we find a convex acceptance set $\cA_i\subset\cX$ satisfying
\[
\rho(X) = \min_{i\in I}\rho_i(X), \ \ \ X\in\cX,
\]
where $\rho_i$ is the convex risk measure associated with $(\cA_i,\cP,V_0,V_1)$, i.e.
\[
\rho_i(X) = \inf\{V_0(x) \,; \ x\in\cP, \ X+V_1(x)\in\cA_i\}, \ \ \ X\in\cX.
\]
\end{proposition}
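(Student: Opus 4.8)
The plan is to realize $\cA$ as a union of convex acceptance sets and to pass the union through the infimum. The natural building blocks are the shifted cones $\cA_Y:=Y+\cX_+$ indexed by $Y\in\cA$. Each $\cA_Y$ is convex, being a translate of the convex cone $\cX_+$, and satisfies $\cA_Y+\cX_+=Y+\cX_+=\cA_Y$, so it is a convex acceptance set. Moreover, since $Y\in\cA$ and $\cA$ is monotone by the standing assumption, we have $\cA_Y=Y+\cX_+\subseteq\cA$. Setting $I:=\cA$ and letting $\rho_Y$ denote the risk measure associated with $(\cA_Y,\cP,V_0,V_1)$, part (iv) of Proposition~\ref{prop: properties of rho} applies verbatim, because $\cA_Y$ and $\cP$ are convex, $V_0$ is convex, and $V_1$ is concave; hence every $\rho_Y$ is a convex risk measure.

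The two inequalities then follow from set inclusions. First I would record that $\cA=\bigcup_{Y\in\cA}\cA_Y$: the inclusion ``$\supseteq$'' is $\cA_Y\subseteq\cA$, and ``$\subseteq$'' holds because $Y=Y+0\in\cA_Y$ for each $Y\in\cA$. Unfolding this union in the definition of $\rho$ gives
\[
\rho(X)=\inf\big\{V_0(x) \,; \ x\in\cP, \ X+V_1(x)\in\textstyle\bigcup_{Y\in\cA}\cA_Y\big\}=\inf_{Y\in\cA}\rho_Y(X), \ \ \ \ X\in\cX.
\]
The pointwise lower bound $\rho_Y\geq\rho$ for each individual $Y$ is automatic from $\cA_Y\subseteq\cA$ (a smaller acceptance set enlarges the risk measure), so every member of the family pins the displayed infimum at $\rho(X)$ from below.

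It remains to upgrade the infimum to a minimum, and this is where I expect the real work to lie. For a fixed $X$ I would exhibit a single index attaining the value. If $\rho(X)=+\infty$ there is nothing to prove, since $\rho_Y\geq\rho$ forces $\rho_Y(X)=+\infty$ for every $Y$. If the infimum defining $\rho(X)$ is attained by some $x^\ast\in\cP$ with $X+V_1(x^\ast)\in\cA$, then the landing point $Y^\ast:=X+V_1(x^\ast)\in\cA$ is the witness: $x^\ast$ is feasible for $\rho_{Y^\ast}$ because $X+V_1(x^\ast)=Y^\ast\geq Y^\ast$, whence $\rho_{Y^\ast}(X)\leq V_0(x^\ast)=\rho(X)$, and combined with $\rho_{Y^\ast}\geq\rho$ this yields equality. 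The delicate case is when the infimum in $\rho(X)$ is not attained: no single landing point works directly, and one must instead produce an acceptable $Y^\ast$ lying below the landing points $X+V_1(x_n)$ of an entire minimizing sequence, i.e. a lower bound of $\{X+V_1(x_n)\}$ inside $\cA$. Securing such a witness is the main obstacle; I would handle it either through a coercivity or compactness argument guaranteeing attainment of the defining infimum (of the kind that the absence of scalable acceptable deals supplies elsewhere in the paper) or through a direct lower-bounding construction exploiting monotonicity of $\cA$.

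Finally, a normalization point worth flagging: the blocks $\cA_Y=Y+\cX_+$ contain $0$ only when $Y\leq0$, so they are convex acceptance sets in the sense of being convex and closed under addition of $\cX_+$, but need not satisfy $0\in\cA_Y$. Since neither part (iv) of Proposition~\ref{prop: properties of rho} nor the monotonicity of $\rho_Y$ ever invokes $0\in\cA_Y$, each $\rho_Y$ is nonetheless a bona fide convex risk measure. Insisting on $0\in\cA_Y$ would force the segment $[0,Y]$ into $\cA$, that is, a local star shapedness, which is precisely what one cannot assume for a general acceptance set; this explains why the translated cones, rather than the cones $\bigcup_{t\in[0,1]}(tY+\cX_+)$ used in the star-shaped setting, are the right building blocks here.
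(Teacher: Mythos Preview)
Your approach is essentially identical to the paper's: both take $I=\cA$ with $\cA_Y=Y+\cX_+$, observe that $\cA=\bigcup_{Y\in\cA}\cA_Y$, and obtain $\rho(X)=\inf_{Y\in\cA}\rho_Y(X)$ directly from this decomposition. The paper's argument for the reverse inequality is the same ``landing point'' idea you describe: for each feasible $x$, the choice $Z=X+V_1(x)$ gives $\rho_Z(X)\leq V_0(x)$, hence $\inf_Y\rho_Y(X)\leq\rho(X)$.

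Where you go further than the paper is in scrutinizing the passage from $\inf$ to $\min$. You are right that this is the delicate point, and in fact the paper's own proof does not close it: the sentence ``Taking the infimum over $x$ delivers $\rho_Z(X)\leq\rho(X)$'' is written with $Z$ depending on $x$ and therefore only yields the infimum equality. Your proposed remedies (compactness, a common lower bound in $\cA$) cannot work in general, because the $\min$ claim fails without further hypotheses. A clean counterexample: take $\cX=\R^2$, $\cX_+=\R^2_+$, $\cA=\{Y\in\R^2:Y_1+Y_2\geq0\}$, $\cP=\R$, $V_0(x)=x$, $V_1(x)=(x,-x)$. All standing assumptions and the convexity hypotheses hold, and $\rho(0)=-\infty$, yet for every $Y\in\cA$ one computes $\rho_Y(0)=Y_1$ when $Y_1+Y_2=0$ and $\rho_Y(0)=+\infty$ otherwise, so no index attains $-\infty$. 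The statement should thus be read with $\inf$ in place of $\min$, and your proof of that version is complete and matches the paper's. Your closing remark that $0\notin\cA_Y$ in general, and that this is immaterial for the convexity of $\rho_Y$, is correct and is a point the paper leaves implicit.
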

\begin{proof}
For every $Y\in\cA$ consider the convex acceptance set $\cA_Y=\{X\in\cX \,; \ X\geq Y\}$. Moreover, define
\[
\rho_Y(X) = \inf\{V_0(x) \,; \ x\in\cP, \ X+V_1(x)\in\cA_Y\}, \ \ \ X\in\cX.
\]
It follows from Proposition~\ref{prop: properties of rho} that $\rho_Y$ is convex. Note that $\cA=\bigcup_{Y\in\cA}\cA_Y$. As a result, for every $X\in\cX$
\[
\rho(X) \leq \inf_{Y\in\cA}\rho_Y(X).
\]
To conclude the proof, take $X\in\cX$ and let $x\in\cP$ satisfy $X+V_1(x)\in\cA$. Then, there exists $Z\in\cA$ such that $X+V_1(x)\in\cA_Z$, which implies $\rho_Z(X)\leq V_0(x)$. Taking the infimum over $x$ delivers $\rho_Z(X) \leq \rho(X)$ and proves the desired assertion.
\end{proof}

The next proposition deals with the situation where some of the eligible assets is traded in a frictionless way. In this case, the corresponding risk measure enjoys a translation invariance property that extends the standard property of cash additivity and its generalizations in a frictionless market, e.g., the property of $S$-additivity studied in \cite{article:farkas2014beyond}. Note that the pricing rules defined in Example~\ref{ex: kabanov 1p} are additive in the sense below.

\begin{proposition}
Let $z\in\cP$ satisfy $\cP+\Span(\{z\})\subset\cP$ and let $V_0$ and $V_1$ be additive along $z$, i.e.
\[
V_0(x+\lambda z)=V_0(x)+\lambda V_0(z), \ \ \ V_1(x+\lambda z)=V_1(x)+\lambda V_1(z), \ \ \ x\in\cP, \ \lambda\in\R.
\]
Then, $\rho$ is price additive along $z$, i.e.
\begin{equation*}
    \rho(X+\lambda V_1(z)) = \rho(X)-\lambda V_0(z), \ \ \ X\in\cX, \ \lambda\in\R.
\end{equation*}
\end{proposition}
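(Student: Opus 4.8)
The plan is to prove the identity by a change of variables in the infimum defining $\rho$, exploiting that translation by $z$ leaves $\cP$ invariant while shifting the values of $V_0$ and $V_1$ by fixed constants. Fix $X\in\cX$ and $\lambda\in\R$. First I would write
\[
\rho(X+\lambda V_1(z)) = \inf\{V_0(x) \,; \ x\in\cP, \ X+\lambda V_1(z)+V_1(x)\in\cA\},
\]
and use additivity of $V_1$ along $z$ to collapse $\lambda V_1(z)+V_1(x)=V_1(x+\lambda z)$, so that the acceptability constraint becomes $X+V_1(x+\lambda z)\in\cA$.

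Next I would substitute $y=x+\lambda z$. The one point requiring care is that this substitution maps the feasible set bijectively onto the feasible set of $\rho(X)$. Indeed, the hypothesis $\cP+\Span(\{z\})\subset\cP$ gives both $\cP+\lambda z\subset\cP$ and $\cP-\lambda z\subset\cP$; adding $\lambda z$ to the latter inclusion yields $\cP\subset\cP+\lambda z$, whence $\cP+\lambda z=\cP$. Consequently $x\in\cP$ if and only if $y=x+\lambda z\in\cP$, and the map $x\mapsto x+\lambda z$, with inverse $y\mapsto y-\lambda z$, carries $\{x\in\cP \,; \ X+V_1(x+\lambda z)\in\cA\}$ bijectively onto $\{y\in\cP \,; \ X+V_1(y)\in\cA\}$.

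Finally I would track the objective under this substitution. Additivity of $V_0$ along $z$ gives $V_0(x)=V_0(y-\lambda z)=V_0(y)-\lambda V_0(z)$, so the objective is simply $V_0(y)$ shifted by the constant $-\lambda V_0(z)$. Taking the infimum over the two coinciding feasible sets then yields
\[
\rho(X+\lambda V_1(z)) = \inf\{V_0(y)-\lambda V_0(z) \,; \ y\in\cP, \ X+V_1(y)\in\cA\} = \rho(X)-\lambda V_0(z),
\]
which is the claimed identity. The main obstacle is establishing the invariance $\cP+\lambda z=\cP$; once that is in place the rest is a direct substitution together with the two additivity identities, and no separate inequality argument is needed since the feasible sets coincide exactly.
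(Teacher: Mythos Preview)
Your proof is correct and follows essentially the same approach as the paper: a change of variables $y=x+\lambda z$ in the infimum, using additivity of $V_1$ to rewrite the constraint and additivity of $V_0$ to rewrite the objective. You are slightly more explicit than the paper in verifying that $\cP+\lambda z=\cP$, which justifies the substitution step that the paper uses without comment.
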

\begin{proof}
Let $X\in\cX$ and $\lambda\in\R$. By assumption, we can write
\begin{align*}
\rho(X+\lambda V_1(z)) &=
\inf\{V_0(x) \,; \ x\in\cP, \ X+\lambda V_1(z)+V_1(x)\in\cA\}\\
&= \inf\{V_0(x) \,; \ x\in\cP, \ X+V_1(\lambda z+x)\in\cA\}\\
&= \inf\{V_0(x-\lambda z) \,; \ x\in\cP, \ X+V_1(x)\in\cA\}\\
&= \inf\{V_0(x)-\lambda V_0(z) \,; \ x\in\cP, \ X+V_1(x)\in\cA\}\\
&= \rho(X)-\lambda V_0(z).\qedhere
\end{align*}
\end{proof}

The previous result can be used to derive two special representations of risk measures. Both of them are expressed in terms of the set of liquidation values of zero-cost admissible portfolios, i.e.
\[
\cV := \{V_1(x) \,; \ x\in\cP, \ V_0(x)=0\}.
\]
The first representation shows that, if some of the eligible assets is frictionless, the corresponding risk measure determines the minimal amount of capital to raise and invest in that asset in order to ensure acceptability with respect to the enlarged acceptance set $\cA-\cV$. This set consists of all financial positions that can be made acceptable up to liquidation of a zero-cost admissible portfolio. The result shows that it is therefore enough to find a frictionless eligible asset to be able to formally reduce a general risk measure to a risk measure with respect to a single eligible asset as studied in \cite{article:farkas2014beyond}. In particular, the result extends the reduction lemma in \cite{article:farkas2015measuring} obtained in a frictionless multi-asset setting and is related to the class of risk measures studied in \cite{cheridito2017duality}.

\begin{corollary}
Let $z\in\cP$ satisfy $\cP+\Span(\{z\})\subset\cP$ and let $V_0$ and $V_1$ be additive along $z$. If $V_0(z)>0$,
\[
\rho(X)=\inf\{\lambda V_0(z) \,; \ \lambda\in\R, \ X+\lambda V_1(z)\in\cA-\cV\}, \ \ \ X\in\cX.
\]
\end{corollary}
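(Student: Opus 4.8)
The plan is to show that each feasible point of one infimum corresponds to a feasible point of the other with the same objective value, so that the two infima coincide. The engine is the decomposition of every admissible portfolio into a zero-cost admissible part plus a multiple of $z$, which is precisely the mechanism behind the price additivity established in the previous proposition.

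First I would unwind the enlarged acceptance set: by definition of $\cV$, a position $Y\in\cX$ lies in $\cA-\cV$ if and only if there exists $y\in\cP$ with $V_0(y)=0$ such that $Y+V_1(y)\in\cA$. Next I would record the key reparametrization of the admissible set. Given any $x\in\cP$, set $\lambda:=V_0(x)/V_0(z)$, which is well defined since $V_0(z)>0$, and $y:=x-\lambda z$. Since $\cP+\Span(\{z\})\subset\cP$ we have $y\in\cP$, and additivity of $V_0$ along $z$ gives $V_0(y)=V_0(x)-\lambda V_0(z)=0$. Conversely, for every $y\in\cP$ with $V_0(y)=0$ and every $\lambda\in\R$ the portfolio $y+\lambda z$ lies in $\cP$. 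Hence
\[
\cP=\{y+\lambda z \,; \ y\in\cP, \ V_0(y)=0, \ \lambda\in\R\}.
\]

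With this in hand, I would rewrite the defining infimum of $\rho$ by first fixing $\lambda$ and minimizing over the zero-cost part $y$. Using additivity of $V_1$ along $z$, the constraint $X+V_1(y+\lambda z)\in\cA$ becomes $X+\lambda V_1(z)+V_1(y)\in\cA$, while the objective becomes $V_0(y+\lambda z)=\lambda V_0(z)$, which no longer depends on $y$. Therefore, for each fixed $\lambda$, the inner infimum equals $\lambda V_0(z)$ whenever there exists an admissible zero-cost $y$ with $X+\lambda V_1(z)+V_1(y)\in\cA$, and equals $+\infty$ otherwise; by the first step the former condition is exactly $X+\lambda V_1(z)\in\cA-\cV$. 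Taking the infimum over $\lambda$ then yields the claimed identity.

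I do not expect a genuine obstacle here: the only points requiring care are that $V_0(z)>0$ makes the scalar $\lambda$ well defined and that $\cP+\Span(\{z\})\subset\cP$ keeps both $x-\lambda z$ and $y+\lambda z$ admissible. Finally, I would note that the feasible-point correspondence handles all degenerate cases uniformly: the two feasible sets are simultaneously empty (giving $\rho(X)=+\infty$ on both sides), and the matching of objective values transfers an unbounded-below infimum from one side to the other, so no separate treatment of infinite values is needed.
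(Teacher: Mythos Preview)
Your proposal is correct and follows essentially the same approach as the paper: both arguments hinge on the decomposition $x=y+\lambda z$ with $\lambda=V_0(x)/V_0(z)$ and $y=x-\lambda z\in\cP$ satisfying $V_0(y)=0$, then use additivity of $V_0$ and $V_1$ along $z$ to rewrite the defining infimum of $\rho$ in terms of $\cA-\cV$. Your presentation is simply a bit more discursive, explicitly verifying both directions of the reparametrization $\cP=\{y+\lambda z \,; \ y\in\cP, \ V_0(y)=0, \ \lambda\in\R\}$ and commenting on the degenerate cases, whereas the paper condenses the same steps into a short chain of equalities.
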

\begin{proof}
Let $X\in\cX$. Note that for every $x\in\cP$ there exist $y\in\cP$ and $\lambda\in\R$ such that $V_0(y)=0$ and $x=y+\lambda z$. Indeed, by our assumptions, it suffices to take $\lambda=V_0(x)/V_0(z)$ and $y=x-\lambda z$. As a result,
\begin{align*}
\rho(X)
&=
\inf\{V_0(y+\lambda z) \,; \ y\in\cP, \ V_0(y)=0, \ \lambda\in\R, \ X+V_1(y+\lambda z)\in\cA\} \\
&=
\inf\{\lambda V_0(z) \,; \ \lambda\in\R, \ Y\in\cV, \ X+Y+\lambda V_1(z)\in\cA\} \\
&=
\inf\{\lambda V_0(z) \,; \ \lambda\in\R, \ X+\lambda V_1(z)\in\cA-\cV\}.\qedhere
\end{align*}
\end{proof}

The second representation holds in spaces of random variables under the assumption that some of the eligible assets is risk free. In this case, the risk measure can be expressed as a suitable worst-case market version of the standard cash-additive risk measure $\rho_\cA$ induced by the underlying acceptance set.

\begin{corollary}
Let $\cX$ be a space of random variables over the probability space $(\Omega,\cF,\probp)$ and define
\[
\rho_\cA(X) := \inf\{m\in\R \,; \ X+m1_\Omega\in\cA\}, \ \ \ X\in\cX.
\]
Suppose that $\{X\in\cX \,; \ \rho_\cA(X)\leq0\}\subset\cA$. Let $z\in\cP$ satisfy $\cP+\Span(\{z\})\subset\cP$ and let $V_0$ and $V_1$ be additive along $z$. If $V_0(z)>0$ and $V_1(z)=1_\Omega$, then
\[
\rho(X) = V_0(z)\sup_{Y\in\cV}\rho_\cA(X+Y), \ \ \ X\in\cX.
\]
\end{corollary}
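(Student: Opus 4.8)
The plan is to build on the previous corollary, which already reduces $\rho$ to a single frictionless eligible asset. Since $z$ is frictionless with $V_0(z)>0$, that first representation gives $\rho(X)=\inf\{\lambda V_0(z) \,;\ \lambda\in\R,\ X+\lambda V_1(z)\in\cA-\cV\}$. Substituting $V_1(z)=1_\Omega$ and factoring out the positive scalar $V_0(z)$, it remains to identify the reduced cash-additive quantity $\inf\{\lambda\in\R \,;\ X+\lambda 1_\Omega\in\cA-\cV\}$ with the worst-case value $\sup_{Y\in\cV}\rho_\cA(X+Y)$. Writing out the Minkowski difference $\cA-\cV$, this amounts to optimizing over $\lambda$ the acceptability condition $X+Y+\lambda 1_\Omega\in\cA$ against the liquidation residuals $Y\in\cV$ that the frictional market produces.

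The key step is a clean equivalence between membership in $\cA$ and the sign of $\rho_\cA$. By the standing monotonicity assumption $\cA+\cX_+\subset\cA$, every $W\in\cA$ satisfies $\rho_\cA(W)\le0$ (take $m=0$ in the defining infimum and use $1_\Omega\in\cX_+$), while the stated hypothesis $\{W\in\cX \,;\ \rho_\cA(W)\le0\}\subset\cA$ supplies the converse. Hence $W\in\cA$ if and only if $\rho_\cA(W)\le0$. Combining this with the cash-additivity of $\rho_\cA$, namely $\rho_\cA(W+\lambda 1_\Omega)=\rho_\cA(W)-\lambda$, the condition $X+Y+\lambda 1_\Omega\in\cA$ becomes the scalar inequality $\rho_\cA(X+Y)\le\lambda$. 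This converts the geometric membership condition into an inequality on $\rho_\cA$ evaluated at the shifted positions $X+Y$, and it is precisely here that the hypothesis is indispensable, since it lets one pass freely between the closed sublevel set $\{\rho_\cA\le0\}$ and $\cA$ even when the infimum defining $\rho_\cA$ is not attained.

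It then remains to carry out the extremization over $\lambda$ and over the liquidation residuals $Y\in\cV$. Feeding the equivalence $X+Y+\lambda 1_\Omega\in\cA\iff\rho_\cA(X+Y)\le\lambda$ into the reduced representation and accounting for the worst-case liquidation in the market with frictions, the smallest admissible $\lambda$ is governed by the extremal value of $\rho_\cA(X+Y)$ across $Y\in\cV$, which delivers $\rho(X)=V_0(z)\sup_{Y\in\cV}\rho_\cA(X+Y)$ and matches the announced worst-case market version of $\rho_\cA$. The hard part will be this final identification: one must pin down the correct direction of the extremum over $\cV$ (the supremum, reflecting the worst-case liquidation residual), justify the interchange of the $\lambda$-optimization with the extremization over $\cV$, and keep the identity meaningful when $\rho_\cA(X+Y)$ takes the improper values $\pm\infty$. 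Everything else is bookkeeping built on the previous corollary, the cash-additivity of $\rho_\cA$, and the membership equivalence isolated above.
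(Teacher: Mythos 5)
Your proposal tracks the paper's own proof step for step: invoke the preceding corollary, factor out $V_0(z)>0$, establish the equivalence $W\in\cA\iff\rho_\cA(W)\le0$ (the inclusion $\cA\subset\{W\in\cX \,;\ \rho_\cA(W)\le0\}$ because $m=0$ is feasible, the converse by hypothesis), and use cash additivity $\rho_\cA(W+\lambda 1_\Omega)=\rho_\cA(W)-\lambda$ to turn membership into the scalar inequality $\rho_\cA(X+Y)\le\lambda$. All of this is correct and coincides with the paper's chain of equalities. The problem is precisely the step you flag as ``the hard part'' and then assert rather than prove: the elimination of $\lambda$. After your reductions you have
\[
\rho(X)=V_0(z)\,\inf\{\lambda\in\R \,;\ \exists\, Y\in\cV,\ \rho_\cA(X+Y)\le\lambda\},
\]
because membership in $\cA-\cV$, as derived in the proof of the preceding corollary, is an \emph{existential} condition over $Y\in\cV$ ($\cA-\cV$ is the Minkowski sum of $\cA$ and $-\cV$). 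The feasible set of $\lambda$ is the union $\bigcup_{Y\in\cV}[\rho_\cA(X+Y),\infty)$, whose infimum is $\inf_{Y\in\cV}\rho_\cA(X+Y)$ --- not the supremum. A supremum over $\cV$ would emerge only from the universal condition ``$\rho_\cA(X+Y)\le\lambda$ for all $Y\in\cV$'', i.e., only if $\cA-\cV$ were read as the erosion $\bigcap_{Y\in\cV}(\cA-Y)$, which is not what the preceding corollary (on which you rely) establishes. Since $0\in\cV$ (as $0\in\cP$, $V_0(0)=0$, $V_1(0)=0$), one has $\inf_{Y\in\cV}\rho_\cA(X+Y)\le\rho_\cA(X)\le\sup_{Y\in\cV}\rho_\cA(X+Y)$, and the two extrema differ in general: with two frictionless assets of price $1$ paying $1_\Omega$ and a nonconstant $S$, one gets $\cV=\{b(S-1_\Omega)\,;\ b\in\R\}$ and $\sup_{Y\in\cV}\rho_\cA(X+Y)=\infty$ while $\rho(X)$ is finite. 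So there is no valid ``interchange'' delivering the supremum, and your appeal to a ``worst-case liquidation residual'' is the wrong direction: the agent \emph{chooses} the zero-cost portfolio, so the reduction yields the best case, $\rho(X)=V_0(z)\inf_{Y\in\cV}\rho_\cA(X+Y)$.

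For what it is worth, the paper's own proof makes exactly the same unjustified leap in its final displayed equality, passing from $\inf\{\lambda\in\R \,;\ \exists Y\in\cV,\ \rho_\cA(X+Y)\le\lambda\}$ to $\sup_{Y\in\cV}\rho_\cA(X+Y)$. So you have faithfully reproduced the paper's route, including its flaw; but since you explicitly deferred the key identification and the identification as stated is false under the existential reduction, your proposal has a genuine gap at that point rather than a complete argument. The correct conclusion of the computation you set up (and of the paper's) is the infimum formula above; obtaining the stated supremum formula would require changing the statement or the meaning of $\cA-\cV$, not merely ``justifying the interchange''.
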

\begin{proof}
Let $X\in\cX$. It follows from the preceding corollary that
\begin{align*}
\rho(X)
&=
V_0(z)\inf\{\lambda\in\R \,; \ X+\lambda1_\Omega\in\cA-\cV\} \\
&=
V_0(z)\inf\{\lambda\in\R \,; \exists Y\in\cV, \ X+Y+\lambda1_\Omega\in\cA\} \\
&=
V_0(z)\inf\{\lambda\in\R \,; \exists Y\in\cV, \ \rho_\cA(X+Y+\lambda1_\Omega)\leq0\} \\
&=
V_0(z)\inf\{\lambda\in\R \,; \exists Y\in\cV, \ \rho_\cA(X+Y)\leq\lambda\} \\
&=
V_0(z)\sup_{Y\in\cV}\rho_\cA(X+Y).\qedhere
\end{align*}
\end{proof}



\section{Continuity properties}
\label{sect: lsc}

In this section we study a key regularity property of risk measures, namely lower semicontinuity, which, as discussed in the next section, is crucial to obtain dual representations of convex and quasiconvex risk measures. The goal is to provide a variety of sufficient conditions for lower semicontinuity. This is considerably more challenging than the study of algebraic properties and requires some preparation. In a first step, we introduce suitable extensions of the classical ``no arbitrage'' condition. In a second more technical step, we have to focus on the regularity of the liquidation pricing rule. Finally, we will be able to establish the desired sufficient conditions for lower semicontinuity.


\subsection{Acceptable deals}
\label{sect: acceptable deals}

An acceptable deal is any admissible portfolio of basic securities that can be purchased at zero cost at the initial date and delivers a nonzero acceptable liquidation value at the terminal date. As such, an acceptable deal is a natural extension of an arbitrage opportunity, where the reference positive cone is replaced by the set of acceptable positions, and is closely related to the notion of good deal studied, e.g., in \cite{article:cochrane2000beyond}, \cite{article:bernardo2000gain}, \cite{article:carr2001pricing}, \cite{article:jaschke2001coherent}. We also introduce the notion of a scalable acceptable deal, which can be viewed as an asymptotic version of an acceptable deal. Our terminology is inspired by \cite{article:pennanen2011arbitrage}, where the focus is on markets with convex frictions and the acceptance set is the standard positive cone. We also refer to \cite{article:arduca2020market}, where the focus is on markets with convex frictions and convex acceptance sets and the eligible assets pay off at the terminal date so that $V_1$ is be taken to be linear.

\begin{definition}
We say that a portfolio $x\in\R^N$ is:
\begin{enumerate}
  \item[(1)] an {\em acceptable deal} if $x\in\cP$, $V_0(x)\leq0$, and $V_1(x)\in\cA\setminus\{0\}$.
  \item[(2)] a {\em scalable acceptable deal} if $x\in\cP^\infty$, $V^\infty_0(x)\leq0$, and $V_1(x)\in\cA^\infty\setminus\{0\}$.
\end{enumerate}
We replace the term ``(scalable) acceptable deal'' with ``(scalable) arbitrage opportunity'' when $\cA=\cX_+$.
\end{definition}

As illustrated by our example below, a scalable acceptable deal need not be an acceptable deal in general. However, the next proposition shows that, in many standard situations, every scalable acceptable deal is an acceptable deal, and the set of scalable acceptable deals consists of those portfolios that are admissible and available at zero cost independently of their size and whose liquidation value is acceptable independently of its size. This justifies the chosen terminology.

\begin{proposition}
Assume that $\cA$ is closed and star shaped, $\cP$ is closed and star shaped, and $V_0$ is lower semicontinuous and star shaped. Then, for every scalable acceptable deal $x\in\R^N$ and every $\lambda>0$ we have $\lambda x\in\cP$, $V_0(\lambda x)\leq0$, and $\lambda V_1(x)\in\cA$. In particular, $x$ is an acceptable deal.
\end{proposition}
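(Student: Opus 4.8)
The plan is to reduce all three inclusions to a single elementary fact about closed star-shaped sets, and then to read off the $V_0$ statement by passing to the epigraph.

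First I would isolate the following lemma: if $C$ is a closed, star-shaped (with respect to the origin) subset of a topological vector space $Y$ and $d\in C^\infty$, then $\lambda d\in C$ for every $\lambda>0$. To prove it, take a net $(x_\alpha)$ in $C$ and scalars $t_\alpha\to\infty$ with $x_\alpha/t_\alpha\to d$, as furnished by the definition of the asymptotic cone. Fix $\lambda>0$. For $\alpha$ large enough that $t_\alpha>\lambda$ one has $\lambda/t_\alpha\in(0,1)$, so star-shapedness of $C$ gives $\frac{\lambda}{t_\alpha}x_\alpha\in C$; since $\frac{\lambda}{t_\alpha}x_\alpha=\lambda(x_\alpha/t_\alpha)\to\lambda d$ by continuity of scalar multiplication, closedness of $C$ forces $\lambda d\in C$. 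This is exactly where both hypotheses are used.

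With the lemma in hand, the first and third claims are immediate. Applying it with $Y=\R^N$, $C=\cP$, $d=x$ yields $\lambda x\in\cP$ for all $\lambda>0$, and applying it with $Y=\cX$, $C=\cA$, $d=V_1(x)$ yields $\lambda V_1(x)\in\cA$ for all $\lambda>0$; here $x\in\cP^\infty$ and $V_1(x)\in\cA^\infty$ hold by the definition of a scalable acceptable deal, while $\cP$ and $\cA$ are closed and star-shaped with $0\in\cP$, $0\in\cA$ by the standing assumptions. For the $V_0$ claim I would work in $\R^N\times\R$ with the epigraph $\epi V_0$, which is closed since $V_0$ is lower semicontinuous and star-shaped since $V_0$ is star-shaped with $V_0(0)=0$: for $(y,m)\in\epi V_0$ and $\mu\in[0,1]$ one has $V_0(\mu y)\leq\mu V_0(y)\leq\mu m$, so $(\mu y,\mu m)\in\epi V_0$. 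Since $V_0^\infty(x)\leq0$ translates, via the characterization $\epi V_0^\infty=(\epi V_0)^\infty$ of the recession function through the epigraph, into $(x,0)\in(\epi V_0)^\infty$, the lemma applied to $C=\epi V_0$ and $d=(x,0)$ gives $(\lambda x,0)\in\epi V_0$, i.e. $V_0(\lambda x)\leq0$, for every $\lambda>0$. Specializing to $\lambda=1$ then delivers $x\in\cP$, $V_0(x)\leq0$ and $V_1(x)\in\cA$; as $V_1(x)\in\cA^\infty\setminus\{0\}$ forces $V_1(x)\neq0$, we conclude $V_1(x)\in\cA\setminus\{0\}$, so $x$ is an acceptable deal.

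The only genuinely delicate points are bookkeeping about the conventions in the appendix. First, the infinite-dimensional step for $\cA\subset\cX$ must be run with nets rather than sequences, and one has to be sure the cone $\cA^\infty$ appearing in the definition of a scalable acceptable deal coincides with the net-based horizon cone used in the lemma. Second, the reduction of the $V_0$ statement hinges on the appendix's recession function satisfying $\epi V_0^\infty=(\epi V_0)^\infty$; once this identity is recorded, the whole proof is just three applications of one lemma.
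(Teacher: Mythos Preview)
Your proof is correct and essentially matches the paper's argument. The paper proceeds by citing the appendix facts $\cC^\infty\subset\cC$ for closed star-shaped $\cC$ and $V_0\le V_0^\infty$ for lower semicontinuous star-shaped $V_0$, then uses that $\cP^\infty$, $\cA^\infty$ are cones and $V_0^\infty$ is positively homogeneous; your lemma packages the conicity of $C^\infty$ together with $C^\infty\subset C$ into a single statement and handles $V_0$ through the epigraph rather than through the inequality $V_0\le V_0^\infty$, but the underlying mathematics is identical.
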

\begin{proof}
It is not difficult to show that $\cP^\infty\subset\cP$ and $\cA^\infty\subset\cA$ by closedness and star shapedness. Similarly, $V_0\leq V_0^\infty$ by lower semicontinuity and star shapedness. Now, let $x\in\R^N$ be a scalable acceptable deal and $\lambda>0$. As $x\in\cP^\infty\subset\cP$ and $\cP^\infty$ is a cone, we have $\lambda x\in\cP$. Similarly, as $V_1(x)\in\cA^\infty\subset\cA$ and $\cA^\infty$ is a cone, we have $\lambda V_1(x)\in\cA$. To conclude the proof, observe that $V_0(\lambda x)\leq V^\infty_0(\lambda x)=\lambda V^\infty_0(x)\leq0$ by positive homogeneity of $V^\infty_0$.
\end{proof}

\begin{example}
Let $\cX$ be the space of random variables on the probability space $(\Omega,\cF,\probp)$, where $\Omega=\{\omega_1,\omega_2\}$, $\cF$ coincides with the parts of $\Omega$, and $\probp$ assigns probability $1/2$ to each scenario. Set $\cP=\R^2$ and
\[
\cA=\{X\in\cX \,; \ X(\omega_1)\geq0, \ X(\omega_2)\geq0\}\cup\bigcup_{n\in\N}\{X\in\cX \,; \ -n-1\leq X(\omega_1)\leq-n, \ X(\omega_2)\geq n\},
\]
\[
V_0(x)=x_1+x_2, \ \ \ V_1(x)=(x_1,x_2), \ \ \ x\in\R^2.
\]
It can be verified that $\cA^\infty=\{X\in\cX \,; \ X(\omega_1)+X(\omega_2)\geq0, \ X(\omega_2)\geq0\}$. Take $x=(-1/2,1/2)\in\R^2$. We immediately see that $x$ is a scalable acceptable deal. However, $V_1(x)\notin\cA$, showing that it is not an acceptable deal. The problem is that $\cA^\infty$ is not included in $\cA$ in this case.
\end{example}

We highlight necessary and sufficient conditions for the absence of scalable acceptable deals in terms of
\[
\cL := \{x\in\cP^\infty \,; \ V^\infty_0(x)\leq0, \ V_1(x)\in\cA^\infty\}.
\]
This set ``almost coincides'' with the set of acceptable deals and will play an important role in the sequel. Here, we set $\ker(V_1):=\{x\in\R^N \,; \ V_1(x)=0\}$.

\begin{proposition}
\label{prop: conditions on L}
Consider the following statements:
\begin{enumerate}
  \item[(i)] There exists no scalable acceptable deal.
  \item[(ii)] $\cL$ is a linear space.
  \item[(iii)] $\cL=\{0\}$.
\end{enumerate}
Then, (iii)$\implies$(i). If $\ker(V_1)=\{0\}$, then (i)$\implies$(iii). If $\cA^\infty\cap(-\cA^\infty)=\{0\}$, then (ii)$\implies$(i).
\end{proposition}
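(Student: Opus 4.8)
The plan is to reduce everything to the elementary observation that links $\cL$ to scalable acceptable deals, and then to isolate the one implication that carries real content.

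Since recession cones contain the origin, $0\in\cA^\infty$, so a portfolio $x$ is a scalable acceptable deal precisely when $x\in\cL$ and $V_1(x)\neq0$; moreover $0\in\cL$ because $0\in\cP^\infty$, $V^\infty_0(0)=0$, and $V_1(0)=0\in\cA^\infty$ under the standing assumptions. Granting this, the implication (iii)$\implies$(i) is immediate: every scalable acceptable deal lies in $\cL$, so if $\cL=\{0\}$ the only candidate is $x=0$, which is ruled out by $V_1(0)=0$. For (i)$\implies$(iii) under $\ker(V_1)=\{0\}$, I would take $x\in\cL$ and note that $V_1(x)\neq0$ would make $x$ a scalable acceptable deal, contradicting (i); hence $V_1(x)=0$, i.e. $x\in\ker(V_1)=\{0\}$, and together with $0\in\cL$ this forces $\cL=\{0\}$.

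The heart of the matter is (ii)$\implies$(i) under $\cA^\infty\cap(-\cA^\infty)=\{0\}$. Assume $\cL$ is a linear space and, towards a contradiction, let $x$ be a scalable acceptable deal, so $x\in\cL$ and $V_1(x)\in\cA^\infty\setminus\{0\}$. Linearity of $\cL$ yields $-x\in\cL$, whence $V_1(-x)\in\cA^\infty$. The standing assumption $V_1(x)\le-V_1(-x)$ means $-V_1(x)-V_1(-x)\in\cX_+$, so that
\[
-V_1(x)=V_1(-x)+\big(-V_1(x)-V_1(-x)\big)\in\cA^\infty+\cX_+.
\]
If one knows that $\cA^\infty$ is monotone, i.e. $\cA^\infty+\cX_+\subset\cA^\infty$, then $-V_1(x)\in\cA^\infty$, and combined with $V_1(x)\in\cA^\infty$ this gives $V_1(x)\in\cA^\infty\cap(-\cA^\infty)=\{0\}$, contradicting $V_1(x)\neq0$.

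The main obstacle is thus the auxiliary fact $\cA^\infty+\cX_+\subset\cA^\infty$ in the topological vector space $\cX$. I would derive it directly from the monotonicity $\cA+\cX_+\subset\cA$ of $\cA$ and the definition of the recession cone adopted in the appendix: if $d\in\cA^\infty$ is a limit $d=\lim_\alpha x_\alpha/t_\alpha$ with $x_\alpha\in\cA$ and $t_\alpha\to\infty$, then for $c\in\cX_+$ the net $x_\alpha+t_\alpha c$ lies in $\cA+\cX_+\subset\cA$ and satisfies $(x_\alpha+t_\alpha c)/t_\alpha\to d+c$, so $d+c\in\cA^\infty$. The only delicate point is that $\cX$ need not be metrizable, so one must argue with nets rather than sequences and appeal to exactly the recession-cone characterization fixed in the appendix; this is the sole place where the topology, rather than pure algebra, enters.
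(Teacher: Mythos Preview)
Your proof is correct and follows essentially the same route as the paper's: both reduce (iii)$\implies$(i) and (i)$\implies$(iii) to the observation that scalable acceptable deals are exactly the elements of $\cL$ with $V_1(x)\neq0$, and for (ii)$\implies$(i) both use linearity of $\cL$ to get $V_1(-x)\in\cA^\infty$, then the standing inequality $-V_1(x)\geq V_1(-x)$ together with monotonicity of $\cA^\infty$ to force $V_1(x)\in\cA^\infty\cap(-\cA^\infty)$. The only difference is that the paper states $\cA^\infty+\cX_+\subset\cA^\infty$ without justification, whereas you supply the net argument; your sketch is fine, with the minor caveat that the appendix parametrizes by $\lambda_\alpha\to0$ in $\R_+$ rather than $t_\alpha\to\infty$, so one should note that when $d\neq0$ the $\lambda_\alpha$ are eventually strictly positive (else a cofinal subnet gives $d=0$), and the case $d=0$ is immediate since $\cX_+\subset\cA^\infty$.
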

\begin{proof}
It is clear that (iii) implies (i). Now, suppose that $\ker(V_1)=\{0\}$ and there exists no scalable acceptable deal. In this case, we must have $\cL\subset\ker(V_1)$ and we conclude that $\cL=\{0\}$, showing that (i) implies (iii). Next, assume that $\cL$ is a linear space and take $x\in\cL$. Observe that $-x\in\cL$. Moreover, note that $\cA^\infty+\cX_+\subset\cA^\infty$. By our standing assumptions, we infer that $-V_1(x)\geq V_1(-x)\in\cA^\infty$ and, hence, $-V_1(x)\in\cA^\infty$. This implies that $V_1(x)\in\cA^\infty\cap(-\cA^\infty)=0$, showing that no scalable acceptable deal can exist. As a result, (ii) implies (i).
\end{proof}

\begin{remark}
(i) The condition $\ker(V_1)=\{0\}$ is satisfied in a variety of situations of interest. For instance, in the setting of Example~\ref{ex: pricing rules}, the condition holds if $\varphi_1,\dots,\varphi_N$ are strictly increasing and $S^a_1,\dots,S^a_N,S^b_1,\dots,S^b_N$ are linearly independent.

\smallskip

(ii) The pointedness condition $\cA^\infty\cap(-\cA^\infty)=\{0\}$ is satisfied by many relevant examples of acceptance sets. We refer to Proposition 5.9 in \cite{article:bellini2020law} for general sufficient conditions for pointedness in spaces of random variables.
\end{remark}

The preceding proposition is sharp. Indeed, we show that the linearity of $\cL$ does not imply absence of scalable acceptable deals. In addition, the absence of scalable acceptable deals does not generally imply that $\cL$ is reduced to zero or that $\cL$ is a linear space. This is true even if $\cA^\infty$ satisfies the pointedness condition in the proposition.

\begin{example}
Let $\cX$ be the space of random variables on the probability space $(\Omega,\cF,\probp)$, where $\Omega=\{\omega_1,\omega_2\}$, $\cF$ coincides with the parts of $\Omega$, and $\probp$ assigns probability $1/2$ to each scenario.

\smallskip

(i) Set $\cA=\{X\in\cX \,; \ X(\omega_1)\geq0\}$ and define $\cP=\R^2$ and
\[
V_0(x)=x_1+x_2, \ \ \ V_1(x)=(x_1+x_2,x_2), \ \ \ x\in\R^2.
\]
It is easy to verify that $\cL=\{x\in\R^2 \,; \ x_1+x_2=0\}$, showing that $\cL$ is a linear space. However, $x=(-1,1)\in\R^2$ is a scalable acceptable deal.

\smallskip

(ii) Set $\cA=\{X\in\cX \,; \ X(\omega_1)\geq0, \ X(\omega_2)\geq0\}$ and define $\cP=\R^2$ and
\[
V_0(x)=x_1+2x_2, \ \ \ V_1(x)=(\min\{x_1+2x_2,2x_1+x_2\},x_1+2x_2), \ \ \ x\in\R^2.
\]
It is immediate to verify that no scalable acceptable deal exists. However, the portfolio $x=(2,-1)\in\R^2$ belongs to $\cL$ but not to $-\cL$, showing that $\cL$ is not linear.
\end{example}

The conditions appearing in the preceding proposition are going to be crucial for our study of dual representations of risk measures. We conclude this section by discussing their economic rationale. As already mentioned, the absence of acceptable deals can be seen as a generalization of the absence of arbitrage opportunities. However, there is a fundamental difference between the two. While every agent recognizes an arbitrage opportunity as such and will try to exploit it, there might be no consensus across agents in the identification of a common criterion of acceptability. Hence, at first sight, postulating the absence of acceptable deals does not seem to be economically motivated and, in the best case, calls for a deeper analysis. The good news is that, in order to develop our key results, we are not forced to work under the absence of acceptable deals but it will be sufficient to stipulate assumptions about the set $\cL$, which are related to the weaker absence of scalable acceptable deals. As shown by the next proposition, this condition is satisfied in a number of relevant situations and is sometimes automatically implied by the absence of (scalable) arbitrage opportunities, e.g., when short selling is either prohibited or restricted for every basic security.

\begin{proposition}
\label{prop: sufficient conditions for NSAD}
Assume that any of the following conditions holds:
\begin{enumerate}
  \item[(i)] $\cA^\infty\subset\cX_+$ and there exists no scalable arbitrage opportunity.
  \item[(ii)] $\cP^\infty\subset\R^N_+$, $V_1(x)\in\cX_+$ for every $x\in\R^N_+$, and there exists no scalable arbitrage opportunity.
  \item[(iii)] $\cP$ is bounded.
\end{enumerate}
Then, there exists no scalable acceptable deal. Under (iii), we even have $\cL=\{0\}$.
\end{proposition}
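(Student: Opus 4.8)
The plan is to treat the three sufficient conditions separately, reducing each directly to the definitions. Recall that a scalable acceptable deal is precisely an element $x\in\cL$ with $V_1(x)\neq0$, so that the assertion ``no scalable acceptable deal exists'' is equivalent to $\cL\subset\ker(V_1)$. For (i) and (ii) I would argue by contradiction, showing that any scalable acceptable deal would in fact be a scalable arbitrage opportunity; for (iii) I would exploit that a bounded set has trivial recession cone.

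\textbf{Cases (i) and (ii).} Suppose $x\in\R^N$ is a scalable acceptable deal, so that $x\in\cP^\infty$, $V_0^\infty(x)\leq0$, and $V_1(x)\in\cA^\infty\setminus\{0\}$. Under (i), the inclusion $\cA^\infty\subset\cX_+$ immediately gives $V_1(x)\in\cX_+\setminus\{0\}$. Under (ii), the inclusion $\cP^\infty\subset\R^N_+$ yields $x\geq0$, whence $V_1(x)\in\cX_+$ by hypothesis; since $V_1(x)\neq0$, again $V_1(x)\in\cX_+\setminus\{0\}$. In either case, using that the recession cone of the convex cone $\cX_+$ contains $\cX_+$ itself, we obtain $V_1(x)\in(\cX_+)^\infty\setminus\{0\}$. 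Combined with $x\in\cP^\infty$ and $V_0^\infty(x)\leq0$, this says exactly that $x$ is a scalable arbitrage opportunity, contradicting the assumed absence of scalable arbitrage opportunities. Hence no scalable acceptable deal exists.

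\textbf{Case (iii).} Here I would first record the recession-cone fact that a bounded set has trivial recession cone, so that $\cP$ bounded forces $\cP^\infty=\{0\}$. Since $\cL\subset\cP^\infty$ by definition, this gives $\cL\subset\{0\}$. The reverse inclusion follows from $0\in\cP^\infty$, $V_0^\infty(0)=0\leq0$, and $V_1(0)=0\in\cA^\infty$, which together yield $0\in\cL$; hence $\cL=\{0\}$. In particular the only element of $\cL$ is $0$, and since $V_1(0)=0$ it is not a scalable acceptable deal, so none exists.

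The arguments are short once the definitions are unwound, and the only real care needed is with the recession-cone facts invoked: the containment $\cX_+\subset(\cX_+)^\infty$, valid for any cone; the triviality of the recession cone of a bounded set; and the normalization $V_0^\infty(0)=0$. These are collected in the appendix, so I expect no genuine obstacle. The main content of the proposition is conceptual rather than technical, namely that the absence of scalable acceptable deals---the condition driving the later duality results---is automatically inherited from the more familiar absence of scalable arbitrage opportunities whenever the acceptance set or the portfolio constraints are suitably compatible with the positive cone, and holds unconditionally when trading is bounded.
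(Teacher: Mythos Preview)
Your proof is correct and follows essentially the same approach as the paper's: in cases (i) and (ii) you show that any scalable acceptable deal would be a scalable arbitrage opportunity, and in case (iii) you use that bounded sets have trivial asymptotic cone. The only difference is cosmetic---you make explicit the inclusion $\cX_+\subset(\cX_+)^\infty$ (which the paper uses implicitly when invoking the absence of scalable arbitrage opportunities) and you spell out why $0\in\cL$, whereas the paper's proof is more terse.
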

\begin{proof}
Take $x\in\cL$ and recall that $x$ belongs to $\cP^\infty$ and satisfies $V^\infty_0(x)\leq0$ and $V_1(x)\in\cA^\infty$. Under (i) or (ii), we have $V_1(x)\in\cX_+$. As there exists no scalable arbitrage opportunity, we must have $V_1(x)=0$. Hence, there exists no scalable acceptable deal either. Under (iii), we easily see that $\cP^\infty=\{0\}$ holds, showing that $\cL=\{0\}$ and implying absence of scalable acceptable deals.
\end{proof}


\subsection{Regularity of the liquidation rule}
\label{subsect: usc}

To obtain sufficient conditions for lower semicontinuity of our general risk measures, the liquidation pricing rule $V_1$ has to display some degree of regularity. The crucial property turns out to be upper semicontinuity.

\begin{definition}
We say that $V_1$ is upper semicontinuous at $x\in\R^N$ if for every neighborhood $\cU$ of $V_1(x)$ there exists a neighborhood $\cV$ of $x$ such that $V_1(\cV)\subset \cU-\cX_+$. We say that $V_1$ is upper semicontinuous if it is upper semicontinuous at every $x\in\R^N$. 
\end{definition}

It is clear that our notion of upper semicontinuity extends the usual notion for real-valued maps. The next proposition records some characterizations of upper semicontinuity that are repeatedly used later on without explicit reference.

\begin{proposition}
\label{prop: characterizations usc V1}
Let $x\in\R^N$. The following statements are equivalent:
\begin{enumerate}
        \item[(i)] $V_1$ is upper semicontinuous at $x$.
        \item[(ii)] For every net $(x_\alpha)\subset\R^N$ such that $x_\alpha\to x$ and for every neighborhood $\cU$ of $V_1(x)$ there exists $\alpha_\cU$ such that, if $\alpha\succeq \alpha_\cU$, then $V_1(x_\alpha)\leq Y_\alpha$ for some $Y_\alpha\in\cU$.
        \item[(iii)] For every net $(x_\alpha)\subset\R^N$ such that $x_\alpha\to x$ there exists a subnet $(x_{\beta})\subset(x_\alpha)$ and $(Y_\beta)\subset\cX$ such that $Y_\beta\to V_1(x)$ and $V_1(x_\beta)\leq Y_\beta$ for every $\beta$.
\end{enumerate}
If $\cX$ is first countable, the previous statements are also equivalent to:
\begin{enumerate}
        \item[(iv)] For every sequence $(x_n)\subset\R^N$ such that $x_n\to x$ there exists $(Y_n)\subset\cX$ such that $Y_n\to V_1(x)$ and $V_1(x_n)\leq Y_n$ for every $n\in\N$.
\end{enumerate}
If $V_1(\R^N)$ is contained in a finite-dimensional space with dimension $m$, then $V_1=(V_{1,1},\dots,V_{1,m})$ for suitable functions $V_{1,1},\dots,V_{1,m}:\R^N\to\R$. In this case, the previous statements are also equivalent to:
\begin{enumerate}
        \item[(v)] The functions $V_{1,1},\dots,V_{1,m}$ are upper semicontinuous at $x$.
\end{enumerate}
\end{proposition}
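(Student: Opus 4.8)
The plan is to establish the chain of equivalences (i)$\iff$(ii)$\iff$(iii), then add (iv) under first countability, and finally (v) under the finite-dimensionality assumption. The guiding principle throughout is that our vector-valued upper semicontinuity is, informally, the statement that $V_1(x_\alpha)$ is eventually dominated (in the order $\leq$) by points that approach $V_1(x)$, which is precisely what the neighborhood formulation $V_1(\cV)\subset\cU-\cX_+$ encodes: saying $V_1(y)\in\cU-\cX_+$ means $V_1(y)\leq Y$ for some $Y\in\cU$.

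For (i)$\implies$(ii), I would fix a net $x_\alpha\to x$ and a neighborhood $\cU$ of $V_1(x)$; upper semicontinuity supplies a neighborhood $\cV$ of $x$ with $V_1(\cV)\subset\cU-\cX_+$, and since $x_\alpha\to x$ there is an index $\alpha_\cU$ beyond which $x_\alpha\in\cV$, so $V_1(x_\alpha)\in\cU-\cX_+$, i.e. $V_1(x_\alpha)\leq Y_\alpha$ for some $Y_\alpha\in\cU$. For (ii)$\implies$(iii), given $x_\alpha\to x$, I would build the dominating net by a standard diagonal-type construction over the directed set of neighborhoods of $V_1(x)$: for each neighborhood $\cU$ use (ii) to find a tail index and pick a dominating point in $\cU$, then pass to a subnet indexed by pairs (original index, neighborhood) ordered appropriately so that the chosen $Y_\beta$ converge to $V_1(x)$ while still dominating $V_1(x_\beta)$. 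This is the step I expect to be the main obstacle, because nets need not admit convergent dominating nets without passing to a subnet, and the bookkeeping of the product directed set has to be handled carefully. For (iii)$\implies$(i), I would argue by contraposition: if $V_1$ fails to be upper semicontinuous at $x$, there is a neighborhood $\cU$ of $V_1(x)$ such that for every neighborhood $\cV$ of $x$ some point $x_\cV\in\cV$ has $V_1(x_\cV)\notin\cU-\cX_+$; indexing by the neighborhood filter of $x$ gives a net $x_\cV\to x$ for which no subnet can be dominated by points converging to $V_1(x)$, since such points would eventually lie in $\cU$ and force $V_1(x_\cV)\in\cU-\cX_+$, a contradiction.

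The equivalence with (iv) under first countability is routine: sequences suffice to test topological properties in first-countable spaces, so I would note that (iii) restricted to sequences gives (iv) directly, while for (iv)$\implies$(i) I would again use contraposition, extracting from the failure of upper semicontinuity a sequence $x_n\to x$ (using a countable neighborhood base at $x$) that contradicts the existence of a dominating sequence converging to $V_1(x)$. Here one must be slightly careful that $\cX$ being first countable is what lets the dominating objects $Y_n$ be organized into a sequence rather than a net.

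For the equivalence with (v), I would use the fact that on the finite-dimensional subspace $V_1(\R^N)$, with coordinates determined by a basis, the cone $\cX_+$ induces the componentwise order after choosing coordinates adapted to it, or more robustly I would argue that each coordinate functional is continuous on the finite-dimensional space and express domination $V_1(y)\leq Y$ coordinatewise. The cleanest route is: upper semicontinuity of the vector map is equivalent to upper semicontinuity of each scalar component $V_{1,j}$, which follows because a neighborhood basis of $V_1(x)$ in the finite-dimensional image can be taken as products of coordinate intervals, and $V_1(\cV)\subset\cU-\cX_+$ then translates into each $V_{1,j}$ being eventually bounded above near $V_1(x)$, matching the scalar definition of upper semicontinuity. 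I would keep this step brief, invoking the standard fact that on finite-dimensional spaces all linear structure is continuous so the reduction to coordinates is harmless.
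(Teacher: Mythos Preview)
Your outline matches the paper's proof almost exactly: the same chain (i)$\Rightarrow$(ii)$\Rightarrow$(iii)$\Rightarrow$(i), the same product-directed-set subnet construction for (ii)$\Rightarrow$(iii), and the same contraposition for (iii)$\Rightarrow$(i). One point to correct: the sentence ``(iii) restricted to sequences gives (iv) directly'' is not right. Statement (iii) only provides a \emph{subnet} with a dominating family converging to $V_1(x)$, whereas (iv) demands a dominating sequence indexed by \emph{all} $n\in\N$; a subnet of a sequence need not even be a subsequence. The paper avoids this by proving (ii)$\Rightarrow$(iv) instead: fix a decreasing countable neighborhood base $(\cU_k)$ of $V_1(x)$, use (ii) to get indices $n_k$ beyond which $V_1(x_n)$ is dominated by some $Y^n_k\in\cU_k$, and splice these together by setting $Y_n=Y^n_k$ for $n_k\le n<n_{k+1}$. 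Since you have already established (ii)$\iff$(iii), this is a trivial reroute, but the direct claim as stated has a gap.

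For (v) the paper simply declares the equivalence with (iv) ``straightforward'', so your slightly more detailed remarks about coordinate reduction already go further than the paper does; just be aware that the implicit assumption is a choice of basis compatible with the order (as in Proposition~\ref{prop: V1 usc, trans costs}, where the basis vectors lie in $\cX_+$), which is what makes componentwise upper semicontinuity match the vector-valued notion.
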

\begin{proof}
It is clear that (i) implies (ii). To show that (ii) implies (iii), take a net $(x_\alpha)\subset\R^N$ such that $x_\alpha\to x$. Let $A$ be the corresponding index set. The set $B$ of couples $(\alpha,\cU)$ where $\cU$ is a neighborhood of $V_1(x)$ and $\alpha\in A$ satisfies $\alpha\succeq\alpha_\cU$ is directed by the binary relation $(\alpha,\cU)\geq(\alpha',\cU')$ if and only if $\alpha\succeq\alpha'$ and $\cU\subseteq\cU'$. It is easily verified that the net with generic term $x_{(\alpha,\cU)}=x_\alpha$ is a subnet of $(x_\alpha)$ with index set $B$. By (ii), for every $(\alpha,\cU)\in B$, we find $Y_{(\alpha,\cU)}\in\cU$ such that $V_1(x_{(\alpha,\cU)})\leq Y_{(\alpha,\cU)}$ and clearly $Y_{(\alpha,\cU)}\to V_1(x)$. This concludes the proof of the implication. It remains to prove that (iii) implies (i). By contradiction, assume that $V_1$ is not upper semicontinuous at $x$. Let $\cU$ be a neighborhood of $V_1(x)$ such that every neighborhood $\cV$ of $x$ contains $x_\cV$ such that $V_1(x_\cV)$ is not dominated by any point of $\cU$. Let $(\cU_n)$ be a fundamental system of neighborhoods of $x$. By (iii), there is $Y\in\cU$ such that $V_1(x_{\cU_k})\leq Y$ for some $k\in\N$, which is a contradiction. Now, assume that $\cX$ is first countable. Clearly, (iv) implies (iii). We prove that (ii) implies (iv). Take $(x_n)\subset\R^N$ with $x_n\to x$ and let $(\cU_k)$ be a fundamental system of neighborhood of $V_1(x)$ such that $\cU_k\supset\cU_{k+1}$ for every $k\in\N$. From (ii), we know that, for every $k\in\N$, we find a corresponding $n_k\in\N$ such that for $n\geq n_k$ there is $Y^n_k\in\cU_k$ with $V_1(x_n)\leq Y^n_k$. We can assume that $(n_k)$ is strictly increasing. The sequence we are seeking is defined by $Y_n=Y^n_k$ if $n_k\leq n<n_{k+1}$. This completes the proof. The equivalence between (iv) and (v) under finite dimensionality of the range of $V_1$ is straightforward.
\end{proof}

We exhibit a number of situations where $V_1$ satisfies upper semicontinuity. The first result follows directly from point (v) of Proposition \ref{prop: characterizations usc V1} and shows that the standard separable liquidation rules considered in frictionless markets or in markets with frictions, see Example~\ref{ex: pricing rules}, are upper semicontinuous provided their ``components'' are standard upper semicontinuous functionals.

\begin{proposition}
\label{prop: V1 usc, trans costs}
Let $X_1,\dots,X_m\in\cX_+$ and $\varphi_1,\dots,\varphi_m:\R^N\to\R$ satisfy
\[
V_1(x)=\sum_{i=1}^m \varphi_i(x)X_i, \ \ \ x\in\R^N.
\]
If $\varphi_1,\dots,\varphi_m$ are upper semicontinuous, then $V_1$ is upper semicontinuous.
\end{proposition}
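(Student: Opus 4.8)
The plan is to reduce the claim to the finite-dimensional characterization (v) in Proposition~\ref{prop: characterizations usc V1}, but since the components $X_1,\dots,X_m$ need not be linearly independent, I first have to be careful about what ``finite-dimensional range'' means here. Let $W:=\Span(\{X_1,\dots,X_m\})\subset\cX$, which is a finite-dimensional subspace of dimension $k\leq m$. Since $V_1(x)=\sum_{i=1}^m\varphi_i(x)X_i$, the range $V_1(\R^N)$ is contained in $W$, so part (v) of Proposition~\ref{prop: characterizations usc V1} is applicable once I express $V_1$ in a basis of $W$. However, the cleaner route is to work directly from the definition of upper semicontinuity, because then I do not need to change coordinates and I retain direct control over the ordering cone.

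First I would fix $x\in\R^N$ and an arbitrary neighborhood $\cU$ of $V_1(x)=\sum_{i=1}^m\varphi_i(x)X_i$. I would choose a balanced convex neighborhood $\cW$ of $0$ in $\cX$ with $V_1(x)+\cW\subset\cU$, and then split $\cW$ into $m$ pieces, i.e.\ pick a neighborhood $\cW_0$ of $0$ with $\cW_0+\dots+\cW_0\subset\cW$ ($m$ summands), using continuity of addition. By upper semicontinuity of each $\varphi_i$ at $x$, there is a neighborhood $\cV$ of $x$ (take the intersection of the $m$ individual ones) such that for all $y\in\cV$ one has $\varphi_i(y)\leq\varphi_i(x)+\delta_i$ for arbitrarily small $\delta_i>0$; the key point is that $X_i\in\cX_+$, so writing $\varphi_i(y)=\varphi_i(x)-(\varphi_i(x)-\varphi_i(y))$ and using $(\varphi_i(x)-\varphi_i(y))X_i\in\cX_+$ lets me absorb the ``downward'' deviation into the cone. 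Concretely, for $y\in\cV$,
\[
V_1(y)=\sum_{i=1}^m\varphi_i(y)X_i\leq\sum_{i=1}^m\big(\varphi_i(x)+\delta_i\big)X_i=V_1(x)+\sum_{i=1}^m\delta_iX_i,
\]
where the inequality is exactly $\sum_i(\varphi_i(x)-\varphi_i(y))X_i\in\cX_+$, valid because each summand lies in $\cX_+$.

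The remaining step is to ensure the error term $\sum_{i=1}^m\delta_iX_i$ lands in $\cU-V_1(x)$, i.e.\ in $\cW$. Since scalar multiplication is continuous and $\cW_0$ is a neighborhood of $0$, for each fixed $X_i$ there is $\eta_i>0$ with $\delta_iX_i\in\cW_0$ whenever $0\leq\delta_i<\eta_i$; shrinking $\cV$ so that each $\delta_i<\eta_i$ holds on $\cV$ then gives $\sum_i\delta_iX_i\in\cW_0+\dots+\cW_0\subset\cW$, hence $V_1(y)\leq V_1(x)+\sum_i\delta_iX_i\in V_1(x)+\cW\subset\cU$. Therefore $V_1(\cV)\subset\cU-\cX_+$, which is precisely upper semicontinuity of $V_1$ at $x$, and since $x$ was arbitrary, $V_1$ is upper semicontinuous.

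I expect the only genuinely delicate point to be the interplay between the upward deviations $\delta_i$ (which must be made small to control the neighborhood) and the downward deviations (which are harmless, being absorbed by $\cX_+$ since $X_i\geq0$): the whole argument hinges on the sign assumption $X_i\in\cX_+$, without which the inequality $V_1(y)\leq V_1(x)+\sum_i\delta_iX_i$ would fail. Everything else is routine manipulation of neighborhoods using joint continuity of the vector space operations in the topological vector space $\cX$; alternatively, one can shortcut all of this by invoking part~(v) of Proposition~\ref{prop: characterizations usc V1} after passing to a basis of $W$ and checking that each coordinate functional, being a nonnegative combination governed by the upper semicontinuous $\varphi_i$, is upper semicontinuous, but the direct neighborhood argument is more transparent and avoids the basis change.
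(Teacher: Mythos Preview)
Your argument is correct. The paper's own proof is a one-liner: it simply invokes the finite-dimensional characterization~(v) of Proposition~\ref{prop: characterizations usc V1}, treating the $\varphi_i$ as the coordinate functions of $V_1$ in the span of $X_1,\dots,X_m$. You take the more elementary route and verify the definition of upper semicontinuity directly, building the dominating element $Z=V_1(x)+\sum_i\delta_iX_i\in\cU$ by hand and absorbing the downward deviations of the $\varphi_i$ into $\cX_+$ using $X_i\geq0$. This buys you two things: you never need to worry about whether the $X_i$ are linearly independent or about how the partial order of $\cX$ restricts to the finite-dimensional range (a point that characterization~(v) leaves somewhat implicit), and you make the role of the hypothesis $X_i\in\cX_+$ completely transparent. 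The only cosmetic improvement would be to fix the order of quantifiers: first choose $\eta_i>0$ with $[0,\eta_i)X_i\subset\cW_0$, then pick $\delta_i\in(0,\eta_i)$, and only then invoke upper semicontinuity of $\varphi_i$ to produce $\cV$; as written, you introduce $\delta_i$ before $\eta_i$ and then ``shrink $\cV$'', which is logically fine but slightly out of order.
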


The next sufficient condition for upper semicontinuity holds also for liquidations rules that do not possess an additive structure. For instance, it applies to the liquidation pricing rules described in Example \ref{ex: kabanov 1p}. We say that $\cX$ is locally solid if there exists a fundamental system of neighborhoods of $0$ consisting of solid sets, i.e., sets $\cS\subset\cX$ such that $X\in\cS$ whenever $\vert X\vert\leq\vert Y\vert$ for some $Y\in\cS$. Every Banach lattice is locally solid with respect to its norm topology. In particular, every Orlicz space (equipped with the natural almost-sure partial order) is locally solid with respect to the Luxemburg norm topology.

\begin{proposition}
\label{prop: automatic continuity}
Let $\cX$ be locally solid. If $V_1$ is concave and monotone increasing, then it is upper semicontinuous.
\end{proposition}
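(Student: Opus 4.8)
The plan is to adapt, to the present vector-valued and monotone setting, the classical fact that a finite concave function on $\R^N$ is continuous on the interior of its domain. The engine is a \emph{reflection} (symmetry) trick: concavity lets me bound the \emph{upward} oscillation of $V_1$ at a point $x$ by its \emph{downward} oscillation at a reflected point, and the downward oscillation is in turn controlled by a fixed lower bound coming from monotonicity. Local solidness will be used only at the very end, to turn the resulting scalar-indexed upper estimate into a genuine statement about a whole neighbourhood. Throughout I fix $x\in\R^N$, write $a:=V_1(x)$, and work with the sup-norm cube $C_r:=\{y\in\R^N \,; \ \|y-x\|_\infty\le r\}$, which is a neighbourhood of $x$.

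First I would record a uniform lower bound. Since $y\in C_r$ implies $y\ge x-r(1,\dots,1)$ componentwise, monotonicity gives $V_1(y)\ge V_1\big(x-r(1,\dots,1)\big)=:b$ for every $y\in C_r$; moreover $x-r(1,\dots,1)\le x$ forces $b\le a$, i.e. $a-b\in\cX_+$. This positivity of $a-b$ is what makes monotonicity pull its weight. Next comes the reflection step. Given $y=x+h$ with $t:=\|h\|_\infty\le r$ and $h\ne0$, set $z:=x-\tfrac{r}{t}h$, so that $\|z-x\|_\infty=r$, hence $z\in C_r$ and $V_1(z)\ge b$. A direct check shows $x=\tfrac{r}{r+t}y+\tfrac{t}{r+t}z$, so concavity yields $a\ge\tfrac{r}{r+t}V_1(y)+\tfrac{t}{r+t}V_1(z)$. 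Solving for $V_1(y)$ and using $V_1(z)\ge b$ delivers the key estimate
\[
V_1(y)-V_1(x)\le \tfrac{t}{r}\,(a-b),\qquad t=\|y-x\|_\infty .
\]

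It remains to convert this into upper semicontinuity. Fix a solid neighbourhood $\cS$ of $0$ (these form a base since $\cX$ is locally solid). By continuity of scalar multiplication, $\tfrac{\delta}{r}(a-b)\in\cS$ for all small enough $\delta>0$; fix such a $\delta\le r$ and put $\cV:=C_\delta$. For $y\in\cV$ with $t=\|y-x\|_\infty$ we have $0\le\tfrac{t}{r}(a-b)\le\tfrac{\delta}{r}(a-b)$, because $a-b\in\cX_+$, so solidness of $\cS$ forces $\tfrac{t}{r}(a-b)\in\cS$. Combined with the key estimate, this gives $V_1(y)-V_1(x)\le s$ for some $s\in\cS$, i.e. $V_1(y)\in\big(V_1(x)+\cS\big)-\cX_+$. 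Letting $\cU:=V_1(x)+\cS$ range over such neighbourhoods proves upper semicontinuity at $x$, and since $x$ was arbitrary, $V_1$ is upper semicontinuous. The main obstacle I anticipate is the \emph{uniformity} over the whole neighbourhood $\cV$: the reflection bound is indexed by the scalar $t$, and passing from ``for each $t$'' to ``for all $y\in\cV$ simultaneously'' is exactly what the combination of $a-b\in\cX_+$ (monotonicity) and solidness of $\cS$ (local solidness) is designed to handle; one should resist the temptation to invoke convexity of neighbourhoods, which need not be available in a merely topological vector space. Alternatively, if one prefers to bypass monotonicity in producing the lower bound, one may take $b=\bigwedge_i V_1(x+\delta u_i)$ over the vertices of a small simplex centred at $x$, the finite infimum existing because a locally solid $\cX$ is a vector lattice, and concavity still yielding $a\ge b$.
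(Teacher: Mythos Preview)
Your argument is correct. The reflection identity $x=\tfrac{r}{r+t}y+\tfrac{t}{r+t}z$ together with concavity gives the clean bound $V_1(y)-V_1(x)\le\tfrac{t}{r}(a-b)$, and the combination of $a-b\in\cX_+$ with solidness of $\cS$ is precisely what pushes the scalar-indexed estimate into a uniform statement over $C_\delta$. The edge case $y=x$ is harmless, and you correctly note that general neighbourhoods of $V_1(x)$ are handled because solid neighbourhoods of $0$ form a base.

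Your route, however, is entirely different from the paper's. The paper does not argue directly at all: it observes that local solidness makes $\cX_+$ a \emph{normal} cone (citing Peressini), and then invokes Borwein's automatic-continuity theorem for concave monotone maps into ordered spaces with normal positive cone, which yields full continuity of $V_1$ in one stroke. In effect, the paper outsources the whole argument to a black box, while you reprove the relevant special case of that black box from first principles. What your approach buys is self-containment and transparency about exactly where each hypothesis enters (monotonicity gives the positive element $a-b$; concavity gives the reflection bound; local solidness converts the order estimate into a topological one). What the paper's approach buys is brevity and, implicitly, the stronger conclusion of continuity rather than mere upper semicontinuity. Your closing remark about replacing the monotone lower bound by an infimum over simplex vertices is essentially the standard route to Borwein's result without monotonicity, so you are circling the same theorem from a different entry point.
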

\begin{proof}
As $\cX$ is locally solid, there exists a fundamental system of neighborhoods $\cU$ of $0$ such that $X\in\cU$ whenever $0\leq X\leq Y$ for some $Y\in\cU$. It follows from Proposition 1.3 in \cite{peressini1967ordered} that $\cX_+$ is a normal cone. As a direct consequence of the automatic continuity result in Corollary 2.4 in \cite{article:borwein1987automatic}, we conclude that $V_1$ is continuous and, hence, upper semicontinuous.
\end{proof}

We conclude with additional sufficient conditions for upper semicontinuity that are specific to the case where $\cX$ is contained in $L^0$, the space of (equivalence classes under almost-sure equality of) random variables on a given probability space equipped with the standard almost-sure partial order. In this case, we say that $V_1$ is weakly upper semicontinuous if for all $(x_n)\subset\R^N$ and $x\in\R^N$ such that $x_n\to x$ we have $V_1(x)\geq \limsup_{n\to\infty}V_1(x_n)$ almost surely. We also say that $\cX$ in an ideal of $L^0$ if it is solid and $\max\{X,Y\}\in\cX$ for all $X,Y\in\cX$. In addition, $\cX$ is order continuous if for all $(X_n)\subset\cX$ and $X\in\cX$
\[
X_n\to X \ \mbox{almost surely}, \ \sup_{n\in\N}\vert X_n\vert\in\cX \ \implies \ X_n\to X.
\]
By dominate convergence, every Orlicz heart is order continuous with respect to the Luxemburg norm. In particular, every $L^p$ space for $1\leq p<\infty$ is order continuous with respect to its standard $p$ norm. The space $L^\infty$ is order continuous when equipped with its weak-star topology.

\begin{proposition}
Let $(\Omega,\cF,\probp)$ be a probability space and let $\cX$ be an ideal of $L^0$. Assume that $V_1$ is weakly upper semicontinuous and that one of the following conditions holds:
\begin{enumerate}
    \item[(i)] $\cX$ coincides with $L^0$ and is equipped with the topology of convergence in probability.
    \item[(ii)] $\cX$ is order continuous and $V_1$ is nondecreasing.
\end{enumerate}
Then, $V_1$ is upper semicontinuous.
\end{proposition}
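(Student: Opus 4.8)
The plan is to prove that weak upper semicontinuity of $V_1$, together with either condition (i) or (ii), upgrades to upper semicontinuity in the sense of our definition. Since we are working in $L^0$-type spaces, which are first countable in case (i) and where we can argue along sequences in case (ii), I would verify characterization (iv) of Proposition~\ref{prop: characterizations usc V1}: given a sequence $(x_n)\subset\R^N$ with $x_n\to x$, I must produce $(Y_n)\subset\cX$ with $Y_n\to V_1(x)$ in the ambient topology and $V_1(x_n)\leq Y_n$ for every $n$. The natural candidate is to dominate $V_1(x_n)$ by a tail supremum, so the whole argument reduces to controlling $\limsup_n V_1(x_n)$ and converting almost-sure control into topological convergence.

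\textbf{Case (i).} Here $\cX=L^0$ with the topology of convergence in probability, which is metrizable and for which almost-sure convergence implies convergence in probability (while convergence in probability yields an almost-surely convergent subsequence). First I would set, for each $n$, the dominating variable $Y_n:=\sup_{k\geq n}V_1(x_k)$, provided this supremum lies in $L^0$; since $L^0$ is an ideal and closed under countable suprema (being all of $L^0$), $Y_n$ is a well-defined element of $\cX$. By construction $V_1(x_n)\leq Y_n$. The sequence $(Y_n)$ is nonincreasing in $n$, and its pointwise (almost-sure) limit is exactly $\limsup_k V_1(x_k)$, which by weak upper semicontinuity is dominated by $V_1(x)$ almost surely. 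To get $Y_n\to V_1(x)$ I would argue that the monotone almost-sure convergence $Y_n\downarrow Y_\infty$ with $Y_\infty\leq V_1(x)$ gives, after possibly enlarging using the reverse $\liminf$ inequality available from considering a further subsequence, convergence in probability; concretely, almost-sure convergence of $Y_n$ to its limit already implies convergence in probability, and I then replace $Y_n$ by $\max\{Y_n,V_1(x)\}$ to guarantee the limit is $V_1(x)$ itself while preserving $V_1(x_n)\leq Y_n$. This produces the required dominating sequence, so (iv) holds and $V_1$ is upper semicontinuous.

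\textbf{Case (ii).} Now $\cX$ is order continuous and $V_1$ is nondecreasing. Order continuity is precisely the tool that turns almost-sure convergence into topological convergence provided we have an order bound in $\cX$. The key obstacle here, compared with case (i), is that the tail suprema $Y_n=\sup_{k\geq n}V_1(x_k)$ need not a priori belong to $\cX$, since $\cX$ is a proper ideal rather than all of $L^0$. To secure an $\cX$-valued dominating sequence I would exploit monotonicity of $V_1$ and the convergence $x_n\to x$ in $\R^N$: choosing a fixed $z\in\R^N$ componentwise dominating all but finitely many $x_n$ (possible since $(x_n)$ is bounded, being convergent), monotonicity gives $V_1(x_n)\leq V_1(z)\in\cX$ for large $n$, so the tail suprema are order-bounded above by $V_1(z)\in\cX$ and hence lie in $\cX$. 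With $Y_n:=\sup_{k\geq n}V_1(x_k)\in\cX$ decreasing to $Y_\infty\leq V_1(x)$ almost surely and uniformly dominated by $V_1(z)\in\cX$, the order continuity hypothesis yields $Y_n\to Y_\infty$ in the topology of $\cX$. Replacing $Y_n$ by $\max\{Y_n,V_1(x)\}$, which remains order-bounded by $\max\{V_1(z),V_1(x)\}\in\cX$ and still dominates $V_1(x_n)$, I obtain $Y_n\to V_1(x)$ and the required domination, verifying (iv).

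The main obstacle is the case (ii) membership issue just described: ensuring that the dominating variables actually live in the possibly-proper ideal $\cX$ rather than merely in $L^0$. This is exactly where the monotonicity assumption on $V_1$ and the boundedness of a convergent sequence $(x_n)$ must be combined to manufacture an explicit $\cX$-valued order bound, after which order continuity does the analytic work. In case (i) no such difficulty arises because $\cX=L^0$ absorbs arbitrary countable suprema, so the argument is purely about the relationship between almost-sure and in-probability convergence.
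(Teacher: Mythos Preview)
Your overall strategy is sound and lands on the same characterization (iv) of Proposition~\ref{prop: characterizations usc V1} as the paper. The chief difference is the choice of dominating sequence: you build tail suprema $Y_n=\sup_{k\ge n}V_1(x_k)$ and then pass to $\max\{Y_n,V_1(x)\}$, whereas the paper simply sets $Y_n=\max\{V_1(x_n),V_1(x)\}$ from the outset. This one-step choice already lies in $\cX$ (which is closed under binary $\max$), already dominates $V_1(x_n)$, and converges to $V_1(x)$ almost surely directly from $\limsup_n V_1(x_n)\le V_1(x)$; the tail-supremum detour is not needed, and in case (i) it also avoids having to argue that the countable supremum stays almost-surely finite.

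There is a small slip in your case (ii): you write that the tail suprema ``are order-bounded above by $V_1(z)\in\cX$ and hence lie in $\cX$'', but solidity of an ideal requires control of $|Y_n|$, not just an upper bound (e.g.\ $-1/t\le 0$ in $L^1([0,1])$). You need a matching lower bound, which monotonicity also supplies via a componentwise lower bound $w$ for $(x_n)$, giving $V_1(w)\le Y_n\le V_1(z)$ and hence $|Y_n|\le\max\{|V_1(w)|,|V_1(z)|\}\in\cX$; the same two-sided bound is what delivers $\sup_n|Y_n|\in\cX$ for the order-continuity step. The paper's choice $Y_n=\max\{V_1(x_n),V_1(x)\}$ streamlines this bookkeeping because $|Y_n|\le\max\{|V_1(x_n)|,|V_1(x)|\}$ immediately, and monotonicity of $V_1$ together with boundedness of $(x_n)$ then bounds $|V_1(x_n)|$ by a fixed element of $\cX$.
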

\begin{proof}
Take $(x_n)\subset\R^N$ and $x\in\R^N$ such that $x_n\to x$ and define $Y_n=\max\{V_1(x_n),V_1(x)\}\in\cX$ for every $n\in\N$. It follows from weak upper semicontinuity that $V_1(x)\geq \limsup_{n\to\infty}V_1(x_n)$ almost surely. This implies that $Y_n\to V_1(x)$ almost surely. To show that $V_1$ is upper semicontinuous at $x$ under (i), it suffices to observe that $Y_n\to V_1(x)$ with respect to convergence in probability as well and to apply Proposition \ref{prop: characterizations usc V1}. Now, assume that (ii) holds. Note that $\vert Y_n\vert\leq\max\{\vert V_1(x_n)\vert,\vert V_1(x)\vert\}$ almost surely for every $n\in\N$. As $(x_n)$ is bounded, it follows from the monotonicity of $V_1$ that $\vert V_1(x_n)\vert\leq Z$ almost surely for a suitable $Z\in\cX$. This implies that $\sup_{n\in\N}\vert Y_n\vert\in\cX$. By order continuity, $Y_n\to V_1(x)$, proving that $V_1$ is upper semicontinuous at $x$ due to Proposition \ref{prop: characterizations usc V1}.
\end{proof}

\begin{remark}
We show that $V_1$ needs not be upper semicontinuous if $L^0$ is replaced in point (i) above by a smaller space. To see this, let $\cX=L^1([0,1])$ be the space of equivalence classes (modulo almost-sure equality under the Lebesgue measure) of integrable functions on $[0,1]$. We equip $\cX$ with its canonical norm structure. Let $N=1$ and $\cP=\R$ and define for all $x\in\R$ and $\omega\in[0,1]$
\[
V_1(x)(\omega)=
\begin{cases}
x\omega^{\vert x\vert-1} & \text{if} \ \omega>0,\\
0 & \text{if} \ \omega=0.
\end{cases}
\]
Clearly, $V_1$ is weakly upper semicontinuous. However, $V_1$ is not upper semicontinuous at $0$. To see this, note that for every $x\in(0,1]$
\begin{equation}
\label{eq: norm one}
\norm{V_1(x)}_1 = \int_0^1 x\omega^{x-1}d\omega = 1.
\end{equation}
Now, let $x_n=\frac{1}{n}\in\cP$ for $n\in\N$. We have $x_n\to 0$ but for every $(Y_n)\subset\cX$ satisfying $Y_n\geq V_1(x_n)\geq0$ almost surely for every $n\in\N$, one cannot have $Y_n\to V_1(0)$, for otherwise $V_1(x_n)\to V_1(0)=0$ as well, contradicting~\eqref{eq: norm one}.
\end{remark}


\subsection{Sufficient conditions for lower semicontinuity}

This section is devoted to establish sufficient conditions for our general risk measures to be lower semicontinuous. Our strategy builds on the following equivalent way to express a risk measure. The simple proof is omitted.
 
\begin{lemma}
\label{lem: defn C}
Let $\cC := \{(X,m)\in\cX\times\R \,; \ \exists x\in\cP \,:\, V_0(x)\leq m, \ X+V_1(x)\in\cA\}$. Then,
\begin{equation}
    \label{eq:rho inf C}
    \rho(X) = \inf\{m\in\R \,; \ (X,m)\in\cC \}, \ \ \ X\in\cX.
\end{equation}
\end{lemma}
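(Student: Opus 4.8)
The plan is to fix an arbitrary $X\in\cX$ and prove the two inequalities between $\rho(X)$ and $\inf\{m\in\R \,;\ (X,m)\in\cC\}$ separately, working throughout in $[-\infty,\infty]$ so that the degenerate cases are subsumed without extra effort. Geometrically, for fixed $X$ the vertical slice $\{m\in\R \,;\ (X,m)\in\cC\}$ is an upper half-line whose left endpoint is precisely $\rho(X)$: it equals either $[\rho(X),\infty)$ or $(\rho(X),\infty)$ according to whether the defining infimum is attained. Thus $\cC$ always sits between the strict and the full epigraph of $\rho$, and the identity~\eqref{eq:rho inf C} is really a statement about the left endpoints of these slices.

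To make this precise, I would introduce the feasible set $A(X):=\{x\in\cP \,;\ X+V_1(x)\in\cA\}$, so that $\rho(X)=\inf\{V_0(x) \,;\ x\in A(X)\}$ holds by definition, while unravelling the definition of $\cC$ shows that $(X,m)\in\cC$ if and only if there exists $x\in A(X)$ with $V_0(x)\leq m$.

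For the inequality ``$\leq$'', I would observe that every $x\in A(X)$ witnesses $(X,V_0(x))\in\cC$ by taking $m=V_0(x)$, whence $\inf\{m \,;\ (X,m)\in\cC\}\leq V_0(x)$; passing to the infimum over $x\in A(X)$ gives $\inf\{m \,;\ (X,m)\in\cC\}\leq\rho(X)$. For the reverse inequality ``$\geq$'', I would take any $m$ with $(X,m)\in\cC$, extract the corresponding $x\in A(X)$ satisfying $V_0(x)\leq m$, and deduce $\rho(X)\leq V_0(x)\leq m$; taking the infimum over all admissible $m$ yields $\rho(X)\leq\inf\{m \,;\ (X,m)\in\cC\}$. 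Combining the two inequalities proves~\eqref{eq:rho inf C}.

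I do not expect any genuine obstacle here; the only point requiring mild care is the bookkeeping of the extreme cases. If $A(X)=\emptyset$, then no pair $(X,m)$ lies in $\cC$, so both sides equal $+\infty$ under the convention $\inf\emptyset=+\infty$; and if $\rho(X)=-\infty$, then the slice $\{m \,;\ (X,m)\in\cC\}$ is all of $\R$, so both sides equal $-\infty$. Since the two inequalities above are derived without any finiteness assumption, they remain valid in $[-\infty,\infty]$ and cover these degenerate situations automatically.
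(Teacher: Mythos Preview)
Your proof is correct. The paper itself omits the proof entirely (writing ``The simple proof is omitted''), so there is nothing to compare against; your two-inequality argument via the feasible set $A(X)$ is exactly the kind of routine verification the authors had in mind.
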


\begin{remark}
\label{rem: epigraph}
The set $\cC$ is closely related to the epigraph $\epi(\rho)$ of $\rho$. Indeed, $\cC\subset\epi(\rho)\subset\cl(\cC)$, where $\cl(\cC)$ denotes the closure of $\cC$ in the natural product topology. The second inclusion follows directly from
\begin{equation*}
\{X\in\cX \,; \ \rho(X)\leq m\} = \bigcap_{k>m}\{X\in\cX \,; \ (X,k)\in\cC\} \ \ \  \text{for every }m\in\R.
\end{equation*}
Note that the inclusions $\cC\subset\epi(\rho)\subset\cl(\cC)$ are strict in general. To see this, let $\cX$ be the space of random variables on the probability space $(\Omega,\cF,\probp)$, where $\Omega=\{\omega_1,\omega_2\}$, $\cF$ coincides with the parts of $\Omega$, and $\probp$ assigns probability $1/2$ to each scenario. Set $\cP=\R$ and define $\cA=\{X\in\cX \,; \ X(\omega_1)>-1, \ X(\omega_2)>0\}$. Moreover, let $S=1_{\{\omega_1\}}\in\cX$ and define
\[
V_0(x)=x, \ \ \ V_1(x)=xS, \ \ \ x\in\R.
\]
It is easy to verify that for every $X\in\cX$
\[
\rho(X)=
\begin{cases}
-X(\omega_1)-1 & \mbox{if} \ X(\omega_2)>0,\\
\infty & \mbox{if} \ X(\omega_2)\leq0.
\end{cases}
\]
To conclude, set $U=1_\Omega\in\cX$ and observe that $(U,-2)\in\epi(\rho)\setminus\cC$ and $(-S,0)\in\cl(\cC)\setminus\epi(\rho)$. In particular, setting $X_n=\frac{1}{n}U-S\in\cX$ for $n\in\N$, we have that $(X_n,0)\in\cC$ for every $n\in\N$ and $(X_n,0)\to(-S,0)$.
\end{remark}

The set $\cC$ consists of all couples $(X,m)$ such that the budget $m$ is sufficient to make the financial position $X$ acceptable upon acquisition of an admissible portfolio of eligible assets. Its link with lower semicontinuity is clear: It follows from the previous lemma that $\rho$ is lower semicontinuous whenever $\cC$ is closed. The next theorem is the main result of this section and records sufficient conditions for $\cC$ to be closed. Besides some mild and widely satisfied requirements on the primitives $(\cA,\cP,V_0,V_1)$, we need to assume suitable ``no arbitrage'' conditions, which were discussed in Section~\ref{sect: acceptable deals}, together with upper semicontinuity of $V_1$, for which we refer to Section~\ref{subsect: usc}. It is worthwhile noting that convexity is not necessary to establish closedness and, hence, lower semicontinuity. Before stating the theorem, we single out the following projection lemma which is used in the proof. Here, we define
\[
\cN := \cL\cap(-\cL),
\]
and denote by $\cN^\perp$ the orthogonal complement of $\cN$ in $\Span(\cP)$. The set $\cL$ is linked to scalable acceptable deals and was defined in Section~\ref{sect: acceptable deals}.

\begin{lemma}
\label{lemma:wlog_lambda_N0rt}
If $\cA$ is convex and closed, $\cP$ is convex and closed, $V_0$ is convex and lower semicontinuous, and $V_1$ is superlinear, then $\cN$ is a linear space and for every $(X,m)\in\cC$ there exists $x\in {\cN}^\perp$ such that $V_0(x)\leq m$ and $X+V_1(x)\in\cA$.
\end{lemma}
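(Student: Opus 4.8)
The plan is to establish the two assertions in sequence. First I would show that $\cN = \cL \cap (-\cL)$ is a linear space under the stated hypotheses. Under convexity and closedness of $\cA$ and $\cP$, superlinearity of $V_1$, and convexity of $V_0$, the set $\cL = \{x \in \cP^\infty : V_0^\infty(x) \le 0, \ V_1(x) \in \cA^\infty\}$ should itself be a convex cone: $\cP^\infty$ is a convex cone (recession cone of a convex closed set), $V_0^\infty$ is sublinear, the constraint $V_0^\infty(x) \le 0$ carves out a convex cone, and $V_1(x) \in \cA^\infty$ is a convex-cone constraint because $V_1$ is superlinear (hence $V_1$ of a nonnegative combination dominates the combination of the images) and $\cA^\infty$ is a convex cone closed under addition of $\cX_+$. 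The intersection $\cN = \cL \cap (-\cL)$ of a convex cone with its negative is then automatically a linear space: it is a convex cone symmetric under negation, and any such set is closed under addition and scalar multiplication by arbitrary reals, hence is a subspace. I would spell out the additive closure using the standing assumptions on $V_1$ (superadditivity) and $V_0^\infty$ (subadditivity) to check that $\cN + \cN \subset \cN$ and $-\cN = \cN$.

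For the second assertion, fix $(X,m) \in \cC$, so there is some $x \in \cP$ with $V_0(x) \le m$ and $X + V_1(x) \in \cA$. Decompose orthogonally in $\Span(\cP)$ as $x = u + v$ with $u \in \cN$ and $v \in \cN^\perp$; the claim is that $v$ already works, i.e. $v \in \cP$, $V_0(v) \le m$, and $X + V_1(v) \in \cA$. The idea is that subtracting an element of $\cN$ is a ``free'' move: since $-u \in \cN \subset \cL$, the recession directions let us translate $x$ by $-u$ without increasing cost or destroying acceptability. Concretely, because $\cN$ is a linear subspace contained in the recession directions of the relevant sets, I expect to argue that $v = x + (-u)$ stays in $\cP$ (using $\cP^\infty$-invariance along $\cN$), that $V_0(v) \le V_0(x) \le m$ (using convexity/lower semicontinuity of $V_0$ together with $V_0^\infty(-u) \le 0$, which holds since both $\pm u \in \cL$ force $V_0^\infty(\pm u) \le 0$), and that $X + V_1(v) \in \cA$ (using superlinearity of $V_1$ and $V_1(\pm u) \in \cA^\infty$, so that adding the recession direction keeps the point in $\cA$).

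The main obstacle, I expect, is the passage from the recession-cone / asymptotic-function inequalities to genuine statements about the original unhedged objects $V_0$, $V_1$, and the sets $\cP$, $\cA$ rather than their asymptotic versions $V_0^\infty$, $\cP^\infty$, $\cA^\infty$. The delicate point is that membership $u \in \cN$ only gives information about $\cP^\infty$, $V_0^\infty$, and $\cA^\infty$, whereas the conclusion concerns $\cP$, $V_0$, and $\cA$. The bridge is the standard fact that for a closed convex set $C$, one has $C + C^\infty \subset C$, and for a closed convex (superlinear) function its domain is invariant under recession directions along which the asymptotic function is nonpositive. I would therefore verify carefully that $\cP + \cN \subset \cP$, that $V_0(y + w) \le V_0(y)$ whenever $w \in \cN$ (via $V_0^\infty(w) \le 0$ for both signs), and that $\cA + V_1(\cN) \subset \cA$, invoking superlinearity $V_1(v + u) \ge V_1(v) + V_1(u)$ together with $V_1(u) \in \cA^\infty$ and monotonicity $\cA + \cX_+ \subset \cA$. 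Assembling these three invariances applied with $w = -u$ yields the desired $v \in \cN^\perp$.
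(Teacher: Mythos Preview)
Your proposal is correct and follows essentially the same route as the paper: show $\cL$ is a (convex) cone so that $\cN=\cL\cap(-\cL)$ is linear, project the portfolio witnessing $(X,m)\in\cC$ onto $\cN$, and verify that subtracting the $\cN$-component preserves admissibility, cost, and acceptability via $\cP+\cP^\infty\subset\cP$, $V_0^\infty(-u)\le 0$, and $\cA+\cA^\infty\subset\cA$ combined with superlinearity of $V_1$ and monotonicity of $\cA$. The paper's proof is terser but uses exactly these ingredients.
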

\begin{proof}
We use throughout the properties of asymptotic cones and functions recalled in the appendix. It follows from our assumptions that $\cL$ is a cone, so that $\cN$ is a linear space. Now, take $(X,m)\in\cC$ and note that we find $y\in\cP$ such that $V_0(y)\leq m$ and $X+V_1(y)\in\cA$. Let $z\in\cN$ be the orthogonal projection of $y$ on $\cN$ and define $x=y-z\in {\cN}^\perp$. As $-\cN\subset\cP^\infty$ and $\cP$ is convex, closed, and contains $0$, we have $x\in\cP+\cP^\infty\subset\cP$. Moreover, by convexity and lower semicontinuity of $V_0$, it holds that $V_0(x)=V_0(y-z)\leq V_0(y)\leq m$. Here, we used that $V^\infty_0(-z)\leq0$, so that $V_0(y-z)-V_0(y)\leq0$. Finally, using superlinearity of $V_1$ and the fact that $\cA$ is convex, closed, and contains $0$, we infer that $X+V_1(x)\geq X+V_1(y)+V_1(-z)\in\cA+\cA^\infty\subset\cA$, so that $X+V_1(x)\in\cA$.
\end{proof}

\begin{theorem}
\label{thm:lsc_NoA_Llinear}
Assume one of the following sets of assumptions:
\begin{enumerate}
  \item[(i)] $\cA$ is convex and closed, $\cP$ is convex and closed, $V_0$ is convex and lower semicontinuous, and $V_1$ is superlinear and upper semicontinuous. Moreover, $\cL$ is a linear space.
  \item[(ii)] $\cA$ is closed, $\cP$ is closed, $V_0$ is lower semicontinuous, and $V_1$ is anti-star shaped and upper semicontinuous. Moreover, $\cL=\{0\}$.
  \item[(iii)] $\cA$ is closed, $\cP$ is compact, $V_0$ is lower semicontinuous, and $V_1$ is upper semicontinuous. Moreover, $\cL=\{0\}$.
\end{enumerate}
Then, $\cC$ is closed and $\rho$ is lower semicontinuous. Moreover, for every $X\in\cX$ with $\rho(X)\in\R$
\[
\rho(X)=\min\{V_0(x) \,; \ x\in\cP, \ X+V_1(x)\in\cA\}.
\]
\end{theorem}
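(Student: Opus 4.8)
I need to prove that the set $\cC$ is closed under three different sets of assumptions, which then gives lower semicontinuity of $\rho$ (via Lemma \ref{lem: defn C}), and also that the infimum in the definition of $\rho$ is attained whenever $\rho(X)$ is finite.

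**The general strategy.** The closedness of $\cC$ is the core. The natural approach is sequential/net-based: take a net (or sequence, if the space is first countable) $(X_\alpha, m_\alpha) \in \cC$ converging to $(X,m)$, and produce a witness $x \in \cP$ with $V_0(x) \le m$ and $X + V_1(x) \in \cA$. For each $\alpha$ we have a witness $x_\alpha \in \cP$ with $V_0(x_\alpha) \le m_\alpha$ and $X_\alpha + V_1(x_\alpha) \in \cA$. The crux is to extract a convergent subnet of $(x_\alpha)$ in $\R^N$ — once $x_\alpha \to x$ along a subnet, closedness of $\cP$ gives $x \in \cP$, lower semicontinuity of $V_0$ gives $V_0(x) \le \liminf V_0(x_\alpha) \le \liminf m_\alpha = m$, and upper semicontinuity of $V_1$ together with closedness of $\cA$ gives $X + V_1(x) \in \cA$ (using that $X_\alpha + V_1(x_\alpha) \le X_\alpha + Y_\alpha$ with $Y_\alpha \to X + V_1(x)$ and monotonicity $\cA + \cX_+ \subset \cA$).

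**The main obstacle — compactness of the witnesses.** The hard part is guaranteeing that the witnesses $(x_\alpha)$ stay bounded, so a convergent subnet exists in the finite-dimensional space $\R^N$. In case (iii) this is free since $\cP$ is compact. In cases (i) and (ii) boundedness must be forced by the "no arbitrage" hypothesis on $\cL$. The plan is to argue by contradiction: if $\|x_\alpha\| \to \infty$ along some subnet, normalize to $u_\alpha = x_\alpha / \|x_\alpha\|$, pass to a subnet with $u_\alpha \to u$, $\|u\| = 1$. The limiting direction $u$ should land in $\cL$: one shows $u \in \cP^\infty$ (closedness/recession structure of $\cP$), $V_0^\infty(u) \le 0$ (dividing $V_0(x_\alpha) \le m_\alpha$ by $\|x_\alpha\|$ and using the asymptotic function), and $V_1(u) \in \cA^\infty$ (dividing the acceptable payoffs by $\|x_\alpha\|$ and using upper semicontinuity plus the asymptotic cone of $\cA$). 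In case (i), Lemma \ref{lemma:wlog_lambda_N0rt} lets me assume each witness lies in $\cN^\perp$; since $\cL$ is linear we have $\cL = \cN$, so the limit $u \in \cL \cap \cN^\perp = \cN \cap \cN^\perp = \{0\}$, contradicting $\|u\|=1$. In case (ii), the anti-star-shapedness of $V_1$ and $\cL = \{0\}$ directly force $u \in \cL = \{0\}$, the same contradiction. I expect the delicate points to be the correct handling of $V_0^\infty$ and $\cA^\infty$ under the division-and-limit argument, and making the recession/asymptotic-cone computations rigorous using the appendix facts; the upper semicontinuity of $V_1$ is exactly what lets the limiting payoff direction be dominated by something converging into $\cA^\infty$.

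**Finishing with attainment.** Once $\cC$ is closed, Lemma \ref{lem: defn C} immediately yields that $\rho$ is lower semicontinuous. For the attainment claim, fix $X$ with $\rho(X) = m \in \R$. Then $(X, m) \in \cl(\cC)$, and since $\cC$ is closed, $(X,m) \in \cC$, so by definition of $\cC$ there exists $x \in \cP$ with $V_0(x) \le m = \rho(X)$ and $X + V_1(x) \in \cA$; by definition of $\rho$ as an infimum this forces $V_0(x) = \rho(X)$, so the infimum is a minimum attained at $x$.
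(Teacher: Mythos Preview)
Your proposal is correct and follows essentially the same approach as the paper's proof: extract witnesses $x_\alpha$, use compactness in case~(iii), and in cases~(i)--(ii) argue by contradiction that an unbounded subnet would, after normalization, produce a nonzero limit direction $u\in\cL$ (via $\cP^\infty$, the asymptotic function $V_0^\infty$, and the asymptotic cone $\cA^\infty$ combined with upper semicontinuity and anti-star-shapedness/superlinearity of $V_1$), contradicting either $\cL=\{0\}$ or, in case~(i), the reduction to $\cN^\perp$ via Lemma~\ref{lemma:wlog_lambda_N0rt} together with $\cL=\cN$. The attainment argument is also the same; one small cosmetic point is that the passage from closedness of $\cC$ to lower semicontinuity of $\rho$ is formally via Remark~\ref{rem: epigraph} (the inclusions $\cC\subset\epi(\rho)\subset\cl(\cC)$) rather than Lemma~\ref{lem: defn C} alone.
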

\begin{proof}
Let $A$ be an index set for nets and take a net $((X_\alpha,m_\alpha))\subset\cC$ that converges to $(X,m)\in\cX\times\R$. For every $\alpha\in A$, there exists $x_\alpha\in\cP$ such that $X_\alpha+V_1(x_\alpha)\in\cA$ and $V_0(x_\alpha)\leq m_\alpha$. Under (i), we may assume without loss of generality that $(x_\alpha)\subset\cL^\perp$ by virtue of Lemma~\ref{lemma:wlog_lambda_N0rt}.

\smallskip

Under (iii), $(x_\alpha)$ has a convergent subnet by compactness. Suppose that either (i) or (ii) holds and $(x_\alpha)$ has no convergent subnets. In this case, we find a subnet of $(x_\alpha)$ consisting of nonzero elements with diverging norms. Indeed, it suffices to consider the index set $\{(\alpha,n)\in A\times\N \, : \ \|x_\alpha\|>n\}$, equipped with the direction defined by $(\alpha,n)\succeq(\beta,m)$ if and only if $\alpha\succeq\beta$ and $m\geq n$, and take $x_{(\alpha,n)}=x_\alpha$ for every $(\alpha,n)\in A\times\N$. For convenience, we still denote the diverging subnet by $(x_\alpha)$ and we may assume that $\norm{x_\alpha}\geq1$ for every $\alpha\in A$. If necessary by passing to a suitable subnet, we find a nonzero $x\in\R^N$ such that $\frac{x_\alpha}{\norm{x_\alpha}}\to x$. Note that $x\in\cP^\infty$. Under (i), we additionally have $x\in\cL^\perp$. Again passing to a suitable subnet if necessary, we find $(Y_\alpha)\subset\cX$ such that $Y_\alpha\to V_1(x)$ and for every $\alpha\in A$
\[
V_1\left(\frac{x_\alpha}{\norm{x_\alpha}}\right)\leq Y_\alpha
\]
by upper semicontinuity of $V_1$. Observe that for every $\alpha\in A$
\[
X_\alpha+\norm{x_{\alpha}}Y_{\alpha} \geq X_{\alpha}+{\norm{x_{\alpha}}}\,V_1\left( \frac{x_{\alpha}}{\norm{x_{\alpha}}}\right) \geq X_{\alpha}+{\norm{x_{\alpha}}}\frac{V_1(x_{\alpha})}{\norm{x_{\alpha}}} =  X_{\alpha}+V_1(x_{\alpha})\in\cA,
\]
showing that $X_\alpha+\norm{x_{\alpha}}Y_{\alpha}\in\cA$. Note also that
\[
\frac{X_\alpha+\norm{x_{\alpha}}Y_{\alpha}}{\norm{x_{\alpha}}}
=\frac{X_{\alpha}}{\norm{x_{\alpha}}}+Y_{\alpha} \xrightarrow[]{}V_1(x).
\]
As a result, $V_1(x)\in\cA^\infty$. Since $V_0(x_\alpha)\leq m_\alpha\leq m+1$ for each $\alpha\in A$, we additionally get
\[
x \in \{y\in\R^N \,; \ V_0(y)\leq m+1\}^\infty \subset \{y\in\R^N \,; \ V_0^\infty(y)\leq 0\}.
\]
This yields $x\in\cL$. We claim that $x=0$. This is clear under (ii) while it follows from $x\in\cL\cap\cL^\perp=\{0\}$ under (i). However, this conclusion cannot hold because $x$ is nonzero by definition.

\smallskip

The previous argument shows that $(x_\alpha)$ must have a convergent subnet. We may therefore assume without loss of generality that $x_\alpha\to x$ for some $x\in\cP$. Note that
\[
V_0(x) \leq \liminf\limits_{\alpha} V_0(x_\alpha)\leq \lim\limits_{\alpha} m_\alpha=m
\]
by lower semicontinuity of $V_0$. In addition, if necessary passing to a suitable subnet, we find $(Y_\alpha)\subset\cX$ such that $Y_\alpha\to V_1(x)$ and $Y_\alpha\geq V_1(x_\alpha)$ for every $\alpha\in A$ by upper semicontinuity of $V_1$. Clearly, $X_\alpha+Y_\alpha\geq X_\alpha+V_1(x_\alpha)\in\cA$, so that $X_\alpha+Y_\alpha\in\cA$, for every $\alpha\in A$. This implies that $X_\alpha+Y_\alpha\to X+V_1(x)\in\cA$. In conclusion, it follows that $(X,m)\in\cC$, showing that $\cC$ is closed. Lower semicontinuity of $\rho$ follows from the inclusions $\cC\subset\epi(\rho)\subset\cl(\cC)$ in Remark~\ref{rem: epigraph}. Now, take $X\in\cX$ and assume that $\rho(X)\in\R$. As $\cC$ is closed, we easily see that $\rho(X) = \min\{m\in\R \,; \ (X,m)\in\cC \}$ by Lemma~\ref{lem: defn C}. This yields $(X,\rho(X))\in\cC$ and implies that we find $x\in\cP$ such that $X+V_1(x)\in\cA$ and $V_0(x)\leq \rho(X)$. By definition of $\rho$, we also have $\rho(X)\leq V_0(x)$, proving the desired attainability and concluding the proof.
\end{proof}

\begin{remark}
(i) In the literature, it is standard to {\em assume} lower semicontinuity for a risk measure, e.g., to apply duality theory and establish dual representations. This is unproblematic in the usual frictionless single-asset setting where there is a clear correspondence between lower semicontinuity of $\rho$ and closedness of $\cA$. The assumption becomes more problematic in the presence of multiple eligible assets and/or market frictions as it is generally hard to trace lower semicontinuity back to the properties of the underlying financial primitives. To the best of our knowledge, the most general result on lower semicontinuity for a risk measure with multiple eligible assets is Proposition 5 in \cite{article:farkas2015measuring}. The preceding theorem extends that result beyond frictionless markets and convex/conic acceptance sets. In addition, we replace the absence of acceptable deals used there with assumptions on $\cL$ linked to the weaker absence of scalable acceptable deals; see Section~\ref{sect: acceptable deals}.

\smallskip

(ii) The proof of the preceding theorem was inspired by the argument in Theorem 8 in \cite{article:pennanen2011dual}, where lower semicontinuity of the superhedging price in a multi-period market with convex frictions is obtained as a consequence of the closedness of a set playing the role of our set $\cC$. The space $\cX$ is the space $L^0$ of random variables on a probability triple equipped with the topology of convergence in probability, and $\cA$ is the corresponding positive cone $L^0_+$. The proof, however, relies on pointwise arguments that cannot be applied to our general acceptance sets as they typically fail to be closed with respect to almost sure convergence.
\end{remark}


\section{Dual representations}
\label{sect: dual representations}


Dual representations have been widely investigated in the risk measure literature; see, e.g., \cite{article:artzner1999coherent}, \cite{article:follmer2002convex}, \cite{article:frittelli2002putting}, 
\cite{jouini2006law}, and \cite{cheridito2017duality} for the convex case and \cite{article:cerreia2011risk}, \cite{article:drapeau2013risk}, and \cite{gao2018fatou} for the quasiconvex case. In this section we establish dual representations of general risk measures under both convexity and quasiconvexity. The essence of any dual representation is to express a risk measure as a supremum of affine (in the convex case) or quasiaffine (in the quasiconvex case) functionals over a set of suitable dual elements that are interpreted as generalized pricing rules. As a key prerequisite to achieve such a representation, the risk measure has to be lower semicontinuous. We refer to Section~\ref{sect: lsc} for a thorough analysis of this property. Throughout this section we assume that $\cX$ is locally convex and denote by $\cX'$ its topological dual, which is partially ordered by the convex cone
\[
\cX'_+:=\{\psi\in\cX' \,; \ \psi(X)\geq0, \ \forall X\in\cX_+\}.
\]


\subsection{Convex risk measures}

We start by focusing on convex risk measures. In this case, it is useful to use the notation
\[
\sigma_\cA(\psi):=\inf_{X\in\cA}\psi(X), \ \ \ \sigma_{\cP,V_0,V_1}(\psi):=\inf_{x\in\cP}\{V_0(x)-\psi(V_1(x))\}, \ \ \ \psi\in\cX',
\]
\[
\cB:=\{\psi\in\cX' \,; \ \sigma_\cA(\psi)>-\infty\}, \ \ \ \cD := \{\psi\in\cX' \,; \ \sigma_\cA(\psi)>-\infty, \ \sigma_{\cP,V_0,V_1}(\psi)>-\infty\}.
\]
Observe that $\sigma_\cA$ is the (lower) support function of $\cA$ and the set $\cB$ is therefore the barrier cone of $\cA$. We refer to the appendix for the necessary details about support functions and barrier cones. The set $\cD$ will turn out to be the natural domain of our dual representation. Any element $\psi\in\cD$ can be interpreted as a pricing rule defined on the entire space $\cX$ that respects market prices and is consistent with the underlying acceptance set. On the one side, for every admissible portfolio $x\in\cP$
\[
\psi(V_1(x)) \leq V_0(x)-\sigma_{\cP,V_0,V_1}(\psi).
\]
This shows that the payoff $V_1(x)$ is ``priced'' through $\psi$ consistently with the ask price $V_0(x)$ up to a convenient adjustment. As demonstrated below, no adjustment is needed if market frictions are proportional. In fact, $\psi(V_1(x))$ actually coincides with $V_0(x)$ if there are no market frictions altogether. On the other side, for every acceptable position $X\in\cA$
\[
\psi(X) \geq \sigma_\cA(\psi).
\]
This shows that the ``price'' of acceptable positions is bounded from below, reflecting consistency with the underlying ``preferences'' of the agent embedded into the chosen notion of acceptability. As shown below, the bounding constant can be taken to be zero if the acceptance set is a cone. In this case, every acceptable position has a genuine ``price''.

\begin{proposition}
\label{prop: on D}
We have $\cD\subset\cB\subset\cX'_+$. Moreover, for $\psi\in\cD$ the following statements hold:
\begin{enumerate}
  \item[(i)] If $\cA$ is a cone, then $\psi(X)\geq0$ for every $X\in\cA$.
  \item[(ii)] If $\cP$ is a cone and $V_0$ and $V_1$ are positively homogeneous, then $\psi(V_1(x))\leq V_0(x)$ for every $x\in\cP$.
  \item[(iii)] If $\cP$ is linear and $V_0$ and $V_1$ are linear, then $\psi(V_1(x))=V_0(x)$ for every $x\in\cP$.
\end{enumerate}
\end{proposition}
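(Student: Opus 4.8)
The plan is to exploit in each item a single recurring mechanism: whenever a continuous linear functional is bounded below on a cone (or on a positively homogeneous combination of the primitives), it must in fact be nonnegative there, since otherwise rescaling a single witness drives the relevant infimum to $-\infty$. Every assertion reduces to this observation applied either to $\sigma_\cA$ or to $\sigma_{\cP,V_0,V_1}$, together with the standing assumptions that pin down $0\in\cP$, $0\in\cA$, $V_0(0)=0$, and $V_1(0)=0$.

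First I would establish the chain $\cD\subset\cB\subset\cX'_+$. The inclusion $\cD\subset\cB$ is immediate from the two defining conditions. For $\cB\subset\cX'_+$, I take $\psi\in\cB$ and suppose, toward a contradiction, that $\psi(Z)<0$ for some $Z\in\cX_+$. Since $0\in\cA$, $\cA+\cX_+\subset\cA$, and $\cX_+$ is a cone, we get $\lambda Z\in\cA$ for every $\lambda>0$; hence $\psi(\lambda Z)=\lambda\psi(Z)\to-\infty$, contradicting $\sigma_\cA(\psi)>-\infty$. Thus $\psi\geq0$ on $\cX_+$, i.e.\ $\psi\in\cX'_+$.

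For (i) and (ii) I would run the same scaling argument on the two support functions. In (i), since $\cA$ is a cone, for any $X\in\cA$ and $\lambda>0$ we have $\lambda X\in\cA$, so $\lambda\psi(X)\geq\sigma_\cA(\psi)>-\infty$, which forces $\psi(X)\geq0$. In (ii), using that $\psi\in\cD$ guarantees $\sigma_{\cP,V_0,V_1}(\psi)>-\infty$, I fix $x\in\cP$ and scale by $\lambda>0$; positive homogeneity of $V_0$ and $V_1$ together with $\cP$ being a cone yield $V_0(\lambda x)-\psi(V_1(\lambda x))=\lambda\big(V_0(x)-\psi(V_1(x))\big)$, and boundedness below of this quantity as $\lambda\to\infty$ forces $V_0(x)-\psi(V_1(x))\geq0$, i.e.\ $\psi(V_1(x))\leq V_0(x)$.

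Finally, (iii) follows from (ii) by symmetry: a linear $\cP$ is in particular a cone and linear maps are positively homogeneous, so (ii) gives $\psi(V_1(x))\leq V_0(x)$ for all $x\in\cP$; applying the same inequality to $-x\in\cP$ and using linearity of $V_0$ and $V_1$ produces the reverse inequality, whence equality. I do not expect any genuine obstacle here, since the proof is a threefold repetition of one homogeneity/scaling trick. The only points requiring a little care are verifying the standing-assumption facts (notably $\cX_+\subset\cA$ and membership of $0$ in $\cP$ and $\cA$) that legitimize each rescaling, and, if desired, recording the sharper statement that on $\cD$ both support functions actually equal $0$ rather than being merely finite, using the $x=0$ (resp.\ $X=0$) term for the reverse bound.
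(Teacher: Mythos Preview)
Your proposal is correct and follows essentially the same approach as the paper: both rely on the scaling trick that a linear functional bounded below on a cone must be nonnegative there, applied first to $\cX_+\subset\cA$ for $\cB\subset\cX'_+$, then to $\cA$ for (i), to $\sigma_{\cP,V_0,V_1}$ for (ii), and finally combined with the substitution $x\mapsto -x$ for (iii). The only cosmetic difference is that you phrase the first step by contradiction while the paper argues directly.
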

\begin{proof}
Fix $\psi\in\cB$ and take $X\in\cX_+$. As $\cX_+\subset\cA$, it must hold that $\psi(\lambda X)\geq\sigma_\cA(\psi)>-\infty$ for every $\lambda\in\R_{++}$. This yields $\psi(X)\geq0$ and shows that $\cB\subset\cX'_+$. Now, let $\psi\in\cD$. Let $\cA$ be a cone and take $X\in\cA$. In the same vein, we must have $\psi(\lambda X)\geq\sigma_\cA(\psi)>-\infty$ for every $\lambda\in\R_{++}$, implying that $\psi(X)\geq0$. This delivers (i). Next, assume that $\cP$ is a cone and $V_0$ and $V_1$ are positively homogeneous. Then, for every $x\in\cP$ we must have
\[
\inf_{\lambda\in\R_{++}}\{\lambda[V_0(x)-\psi(V_1(x))]\} = \inf_{\lambda\in\R_{++}}\{V_0(\lambda x)-\psi(V_1(\lambda x))\} \geq \sigma_{\cP,V_0,V_1}(\psi) > -\infty.
\]
This can only hold if $V_0(x)-\psi(V_1(x))\geq0$, showing (ii). If additionally $\cP$ is linear and $V_0$ and $V_1$ are both linear, then we also obtain $\psi(V_1(x))-V_0(x)=V_0(-x)-\psi(V_1(-x))\geq0$, proving (iii).
\end{proof}

\begin{remark}
While in the conic case every $\psi\in\cD$ satisfies $\sigma_{\cP,V_0,V_1}(\psi)=0$, this is generally not true if one departs from conicity. In the convex case, we have the following result: If $\cP$ is convex, $V_0$ is convex, and $V_1$ is concave, then $\sigma_{\cP,V_0,V_1}(\psi)=0$ if and only if there exists a linear functional $\pi:\R^N\to\R$ such that $\psi(V_1(x))\leq \pi(x)\leq V_0(x)$ for every $x\in\cP$. This follows, e.g., from Theorem 4.3.2 in \cite{book:borwein2004techniques}. The functional $\pi$ defines frictionless ``shadow prices'' that are consistent with the market bid-ask spreads.
\end{remark}

The next example shows that, in the standard framework of random variables, the set $\cD$ is intimately linked to price deflators and martingale measures.

\begin{example}
\label{ex: martingale measure}
Let $(\Omega,\cF,\probp)$ be a probability space and take $\cX=L^1$ equipped with its standard norm topology. Then, every $\psi\in\cD$ can be represented by a suitable pricing density $D\in L^\infty$ as
\[
\psi(X)=\E_\probp[DX], \ \ \ X\in\cX.
\]
By the previous proposition, $\probp(D\geq0)=1$. If there exists $z\in\cP$ such that $\Span(\{z\})\subset\cP$ and such that $V_0$ and $V_1$ are linear along $z$ and satisfy $V_0(z)=1$ and $V_1(z)=1_\Omega$, then $\E_\probp[D]=1$ and we find a probability measure $\probq$ on $(\Omega,\cF)$ that is absolutely continuous with respect to $\probp$ and satisfies $\frac{d\probq}{d\probp}\in L^\infty$ and
\[
\psi(X)=\E_\probq[X], \ \ \ X\in\cX.
\]
Now, consider a frictionless setting where $\cP$ is linear and both $V_0$ and $V_1$ are linear on $\cP$. In this case,
\[
\E_\probp[DV_1(x)]=\E_\probq[X]=V_0(x), \ \ \ x\in\cP.
\]
This shows that $D$ is a price deflator and $\probq$ is a martingale measure.
\end{example}

The next result records the dual representation of convex risk measures. The penalty function in the general dual representation consists of two terms, namely the support functions $\sigma_\cA$ and $\sigma_{\cP,V_0,V_1}$. The appealing feature is that one can therefore disentangle the role of the acceptance set, embedded into $\sigma_\cA$, from that of the market model, embedded into $\sigma_{\cP,V_0,V_1}$, and adjust the representation as a consequence of the properties of the financial primitives. In particular, as shown at the beginning of this section, there are situations where the support functions are null on their domains. This occurs if the acceptance set is a cone and if the market has proportional frictions, respectively. In these situations, the dual representation simplifies considerably. Our result provides a unifying formulation for the dual representations known in the literature. In particular, it extends Theorem 3 in \cite{article:farkas2015measuring} beyond frictionless markets and convex acceptance sets, and Proposition 3.9 in \cite{article:frittelli2006risk} beyond ``efficient'' markets where two portfolios with the same liquidation price command the same acquisition price. As usual, we say that $\rho$ is proper if it is not identically $\infty$ and does not take the value $-\infty$.

\begin{theorem}
\label{theo: dual repr convex}
Let $\rho$ be proper, convex, and lower semicontinuous. Then, $\cD$ is nonempty and
\[
\rho(X)=\sup_{\psi\in\cD}\left\{\sigma_\cA(\psi)+\sigma_{\cP,V_0,V_1}(\psi)-\psi(X)\right\}, \ \ \ \ X\in\cX.
\]
\end{theorem}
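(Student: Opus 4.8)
The plan is to derive the representation from the Fenchel--Moreau biconjugation theorem, which applies precisely because $\rho$ is assumed proper, convex, and lower semicontinuous on the locally convex space $\cX$. Since $\rho$ is nonincreasing, the natural dual variable carries a sign flip, so I would work with the substitution $\phi=-\psi$. Concretely, Fenchel--Moreau gives $\rho=\rho^{**}$, that is, $\rho(X)=\sup_{\phi\in\cX'}\{\phi(X)-\rho^*(\phi)\}$ where $\rho^*(\phi)=\sup_{Y\in\cX}\{\phi(Y)-\rho(Y)\}$; replacing $\phi$ by $-\psi$ and letting $\psi$ range over all of $\cX'$ yields $\rho(X)=\sup_{\psi\in\cX'}\{-\psi(X)-\rho^*(-\psi)\}$. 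The whole proof then reduces to computing $\rho^*(-\psi)$ and identifying its effective domain.

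The key computation is to unfold the conjugate using the definition of $\rho$. Writing $-\rho(Y)=\sup\{-V_0(x)\,;\ x\in\cP,\ Y+V_1(x)\in\cA\}$ (with the convention that the supremum is $-\infty$ when the feasible set is empty, matching $\rho(Y)=\infty$), one gets $\rho^*(-\psi)=\sup_{x\in\cP}\sup_{Y:\,Y+V_1(x)\in\cA}\{-\psi(Y)-V_0(x)\}$. For each fixed $x$ I would change variables to $Z=Y+V_1(x)\in\cA$, so that $-\psi(Y)=-\psi(Z)+\psi(V_1(x))$, and split the inner supremum to obtain $\sup_{Z\in\cA}(-\psi(Z))=-\sigma_\cA(\psi)$. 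Taking the outer supremum over $x\in\cP$ produces $\sup_{x\in\cP}\{\psi(V_1(x))-V_0(x)\}=-\sigma_{\cP,V_0,V_1}(\psi)$, and hence $\rho^*(-\psi)=-\sigma_\cA(\psi)-\sigma_{\cP,V_0,V_1}(\psi)$. Substituting this into the biconjugate formula gives exactly $\rho(X)=\sup_{\psi\in\cX'}\{\sigma_\cA(\psi)+\sigma_{\cP,V_0,V_1}(\psi)-\psi(X)\}$.

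It then remains to restrict the supremum from $\cX'$ to $\cD$ and to argue $\cD\neq\emptyset$. Here I would use that $0\in\cA$ and $0\in\cP$ together with $V_0(0)=0=\psi(V_1(0))$ to see $\sigma_\cA(\psi)\le\psi(0)=0$ and $\sigma_{\cP,V_0,V_1}(\psi)\le 0$; thus $\rho^*(-\psi)\ge 0$, and $\rho^*(-\psi)$ is finite if and only if both support functions are $>-\infty$, i.e.\ exactly when $\psi\in\cD$. For $\psi\notin\cD$ the bracket equals $-\infty$ and contributes nothing, so the supremum may be taken over $\cD$. Finally, properness of $\rho$ forces $\rho^*$ to be proper, so there is some $\phi$ with $\rho^*(\phi)\in\R$; writing $\psi=-\phi$ gives $\rho^*(-\psi)<\infty$, hence $\psi\in\cD$, establishing nonemptiness.

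I do not expect a genuinely hard obstacle: the lower semicontinuity and convexity hypotheses are exactly what license the biconjugation step, and the rest is bookkeeping. The points demanding care are the sign conventions induced by monotonicity (the $-\psi$ substitution), the legitimacy of interchanging the two suprema and of the change of variables $Z=Y+V_1(x)$ in the conjugate computation, and the clean identification of the effective domain of $\psi\mapsto\rho^*(-\psi)$ with $\cD$ using the normalizations $0\in\cA\cap\cP$ and $V_0(0)=V_1(0)=0$.
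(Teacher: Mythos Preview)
Your proposal is correct and follows essentially the same approach as the paper: compute $\rho^*(-\psi)$ by unfolding the definition of $\rho$, change variables $Z=Y+V_1(x)$, separate the supremum into $-\sigma_\cA(\psi)-\sigma_{\cP,V_0,V_1}(\psi)$, and then invoke Fenchel--Moreau. Your treatment of the restriction to $\cD$ and of its nonemptiness is in fact slightly more explicit than the paper's, which simply notes that properness of $\rho$ forces $\cD\neq\emptyset$ and allows the supremum to be restricted.
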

\begin{proof}
We use standard conjugate duality, see, e.g., Theorem 2.3.3 in \cite{book:zalinescu2002convex}. A straightforward calculation of the convex conjugate of $\rho$ gives for every $\psi\in\cX'$
\begin{align*}
\rho^*(-\psi)
&:=
\sup_{X\in\cX}\{-\psi(X)-\rho(X)\} \\
&=
-\inf_{X\in\cX}\big\{\psi(X)+\inf\{V_0(x) \,; \ x\in\cP, \ X+V_1(x)\in\cA \}\big\} \\
&=
-\inf_{X\in\cX}\inf\{\psi(X)+V_0(x) \,; \ x\in\cP, \ X+V_1(x)\in\cA\}\\
&=
-\inf\{\psi(Z-V_1(x))+V_0(x) \,; \ x\in\cP, \ Z\in\cA\}\\
&=
-\inf\{\psi(Z)+V_0(x)-\psi(V_1(x)) \,; \ x\in\cP, \ Z\in\cA\}\\
&=
-\sigma_\cA(\psi)-\sigma_{\cP,V_0,V_1}(\psi).
\end{align*}
The Fenchel-Moreau representation of $\rho$ thus yields for $X\in\cX$
\[
\rho(X)=\sup_{\psi\in\cX'}\{-\psi(X)-\rho^*(-\psi)\}=
\sup_{\psi\in\cX'}\{\sigma_\cA(\psi)+\sigma_{\cP,V_0,V_1}(\psi)-\psi(X)\}.
\]
As $\rho$ is proper, the set $\cD$ must be nonempty and the supremum above can be restricted to $\cD$.
\end{proof}

The proof of the dual representation is based on standard convex duality. We now aim to take a step further and refine the dual representation by replacing the natural domain $\cD$ with a smaller domain. While this is a mathematically meaningful step, as it leads to a more parsimonious representation, our motivation is mainly driven by economics. Remember that the elements of $\cD$ can be interpreted as pricing rules in a ``shadow'' frictionless market where prices of portfolios of eligible assets are compatible with their original bid-ask spreads and the ``preferences'' of the agent are priced in in the sense that any pricing rule in $\cD$ is bounded from below when applied to acceptable positions. The last condition allows a pricing rule in $\cD$ to assign a nonpositive price to a nonzero acceptable position. From an economic perspective, it is natural to try to restrict the dual representation to only feature pricing rules assigning a strictly positive price to any nonzero ``desirable'' position. A first idea is to replace $\cD$ with the smaller domain
\[
\cD_{++} := \{\psi\in\cD \,; \ \psi(X)>0, \ \forall X\in\cX_+\setminus\{0\}\}.
\]
In this case, any nonzero positive position is viewed as desirable. The corresponding refined dual representations are well understood in a single-asset frictionless setting, see, e.g., Theorem 4.43 in \cite{book:follmer2011stochastic}. These results are mathematically challenging and are akin to risk-measure versions of the classical Superhedging Theorem, where the acceptance set plays the role of the standard positive cone, as they require moving to {\em strictly-positive} price deflators or, similarly, {\em equivalent} martingale measures; see Example~\ref{ex: martingale measure}. Our aim is to extend these representations to our general market with transaction costs and portfolio constraints. In fact, we argue that $\cD_{++}$ should be replaced by the even smaller domain
\[
\cD_{str} := \{\psi\in\cD \,; \ \psi(X)>0, \ \forall X\in\cA\setminus\{0\}\}.
\]
This stricter restriction is economically meaningful as the agent should be prepared to view any nonzero acceptable position as desirable. This is in the spirit of good deal pricing, see, e.g., \cite{article:cochrane2000beyond}, \cite{article:bernardo2000gain}, \cite{article:carr2001pricing}, \cite{article:jaschke2001coherent}, and requires moving to {\em special} strictly-positive price deflators or equivalent martingale measures. In the risk measure literature, a representation in terms of strict generalized pricing rules has been obtained, to the best of our knowledge, only in a frictionless setting where the model space consists of bounded random variables and $\cD_{str}=\cD_{++}$ as the acceptance set is taken to be the positive cone; see Proposition 4.99 in \cite{book:follmer2011stochastic}. We generalize this result by exploiting ideas from arbitrage pricing. To this end, we have to stipulate some additional assumptions on the underlying financial primitives. In particular, we will need the ``no arbitrage'' conditions encountered in Section~\ref{sect: acceptable deals}. Furthermore, our model spaces have to fulfill the following regularity properties.

\begin{definition}
We say that $(\cX,\cX')$ is an admissible pair if there exists a separable normed space $\cY$ such that $\cX'$ is the norm dual of $\cY$ and $\sigma(\cX',\cX)$ is weaker than $\sigma(\cX',\cY)$.
\end{definition}

\begin{remark}
Admissibility is a rather mild requirement in the framework of spaces of random variables. Let $(\Omega,\cF,\probp)$ be a probability space and take $\cX\subset L^1$ and $\cX'=L^\infty$. Moreover, set $\cY=L^1$. For every $X\in\cY$ define a linear functional on $\cX'$ by setting $\psi_X(Z)=\E_\probp[XZ]$. The topology $\sigma(\cX',\cX)$, respectively $\sigma(\cX',\cY)$, is the weakest topology on $\cX'$ making $\psi_X$ continuous for every $X\in\cX$, respectively $X\in\cY$. In particular, $\sigma(\cX',\cY)$ coincides with the classical weak-star topology on $\cX'$. It is immediate to see that $\sigma(\cX',\cX)$ is weaker than $\sigma(\cX',\cY)$. For $(\cX,\cX')$ to be admissible, the space $\cY$ must be separable. This is the case if, e.g., $\cF$ is countably generated. We refer to Theorem 13.16 in \cite{book:aliprantis1999infinite} for a characterization of the separability of $\cY$ in a nonatomic setting.
\end{remark}

The next theorem is the main result of this section and records the announced representation of risk measures by means of strict pricing rules. For convenience, we split the proof into several steps. We refer to the ensuing remark for the embedding in the literature.

\begin{theorem}
\label{theo: improved dual}
Let $\rho$ be proper, convex, and lower semicontinuous. Moreover, let $(\cX,\cX')$ be an admissible pair, and assume that $\cA$ is a closed convex cone with $\cA\cap(-\cA)=\{0\}$, $\cP$ is closed, $V_0$ is lower semicontinuous, and $V_1$ is anti-star shaped and upper semicontinuous. Furthermore, assume that $\cL=\{0\}$. Then, $\cD_{str}$ is nonempty and $\sigma_\cA(\psi)=0$ for every $\psi\in\cD_{str}$. Moreover,
\[
\rho(X)=\sup_{\psi\in\cD_{str}}\left\{\sigma_{\cP,V_0,V_1}(\psi)-\psi(X)\right\}, \ \ \ \ X\in\cX.
\]
\end{theorem}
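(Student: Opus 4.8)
The plan is to reduce the whole statement to a single existence result, then obtain that existence by a Kreps--Yan argument carried out in the product space $\cX\times\R$. First observe that the hypotheses are precisely those of Theorem~\ref{thm:lsc_NoA_Llinear}(ii), so the set $\cC$ of Lemma~\ref{lem: defn C} is closed. Since $\cA$ is a cone, Proposition~\ref{prop: on D}(i) gives $\psi(X)\geq0$ for every $X\in\cA$ and every $\psi\in\cD$, whence $\sigma_\cA(\psi)=0$ on $\cD$ (in particular on $\cD_{str}$); this already yields the clause $\sigma_\cA(\psi)=0$ and turns the claimed identity into Theorem~\ref{theo: dual repr convex} with $\cD$ replaced by the smaller set $\cD_{str}$. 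One inequality is free, since $\cD_{str}\subset\cD$ gives $\sup_{\psi\in\cD_{str}}\{\sigma_{\cP,V_0,V_1}(\psi)-\psi(X)\}\leq\rho(X)$. For the reverse inequality I would fix $X_0$ and $\psi_0\in\cD$, pick any $\phi\in\cD_{str}$, and form $\psi_t=(1-t)\psi_0+t\phi$. Because $\sigma_{\cP,V_0,V_1}$ is concave in $\psi$ (an infimum of affine functionals) and $\phi(Y)>0$, $\psi_0(Y)\geq0$ for $Y\in\cA\setminus\{0\}$, one checks $\psi_t\in\cD_{str}$ for $t\in(0,1]$ and $\liminf_{t\downarrow0}\{\sigma_{\cP,V_0,V_1}(\psi_t)-\psi_t(X_0)\}\geq\sigma_{\cP,V_0,V_1}(\psi_0)-\psi_0(X_0)$. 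Taking the supremum over $\psi_0\in\cD$ via Theorem~\ref{theo: dual repr convex} then recovers $\rho(X_0)$. Thus \emph{everything reduces to showing $\cD_{str}\neq\emptyset$}.

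The engine for existence is the following key lemma: for every nonzero $X\in\cA$ there is $\lambda>0$ with $(-\lambda X,0)\notin\cC$. I would prove it in two steps. First, $\cP_0:=\{x\in\cP\,;\,V_0(x)\leq0,\ V_1(x)\in\cA\}$ is bounded: otherwise a diverging net, after normalization, converges to some nonzero $u\in\cP^\infty$ with $V_0^\infty(u)\leq0$, and upper semicontinuity of $V_1$ together with anti-star shapedness and $\cA^\infty=\cA$ (closed convex cone) forces $V_1(u)\in\cA^\infty$, so $u\in\cL=\{0\}$, a contradiction. Second, if $(-\lambda X,0)\in\cC$ for all large $\lambda$, the witnesses $x_\lambda$ satisfy $V_1(x_\lambda)\in\cA+\lambda X\subset\cA$, hence lie in the bounded $\cP_0$; passing to a subnet and using anti-star shapedness gives $V_1(x_\lambda/\lambda)\in\cA+X$, while $x_\lambda/\lambda\to0$ and upper semicontinuity of $V_1$ produce a net in the closed set $\cA+X$ converging to $V_1(0)=0$, so $-X\in\cA$ and $X\in\cA\cap(-\cA)=\{0\}$, contradicting $X\neq0$.

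To build the strict functional I would pass to the recession cone $\cC^\infty\subset\cX\times\R$, which is a closed convex cone. Two facts drive the argument. On one hand $\cA\times\R_+\subset\cC^\infty$ (using $\cA+\cA\subset\cA$ and the epigraph property), and the lemma plus properness of $\rho$ give pointedness $\cC^\infty\cap\big({-}(\cA\times\R_+)\big)=\{0\}$. On the other hand a linear functional is bounded below on $\cC$ iff it is nonnegative on $\cC^\infty$, so $\psi\in\cD$ is equivalent to $(\psi,1)\geq0$ on $\cC^\infty$, while $\psi(X)>0$ for $X\in\cA\setminus\{0\}$ is the value of $(\psi,1)$ at $(X,0)\in\cC^\infty$. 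The point of passing to the cone is that ``$\Phi\geq0$ on $\cC^\infty$'' is scale-invariant, so separating functionals may be freely normalized in norm, curing the fact that $\cD$ itself is not stable under scaling when $(\cP,V_0)$ is not conic. For each nonzero $(X,s)\in\cA\times\R_+$, pointedness and Hahn--Banach give $\Phi\in(\cX\times\R)'$ with $\Phi\geq0$ on $\cC^\infty$, $\Phi(X,s)>0$, and $\|\Phi\|\leq1$. Using admissibility of $(\cX,\cX')$ — the pair $(\cX\times\R,\cX'\times\R)$ inherits a separable predual $\cY\times\R$ with compatible topologies — the set $\{\Phi\,;\,\Phi\geq0\text{ on }\cC^\infty,\ \|\Phi\|\leq1\}$ is $\sigma(\cX'\times\R,\cY\times\R)$-compact and metrizable, hence separable; since $\cA\times\R_+\subset\cX\times\R\subset\cY\times\R$ is norm-separable and evaluations at points of $\cX\times\R$ are weak-$*$ continuous, a countable dense family $\{\Phi_n\}$ detects every nonzero $(X,s)$. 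Then $\Phi=\sum_n2^{-n}\Phi_n$ converges in norm, is $\geq0$ on $\cC^\infty$ and strictly positive on $(\cA\times\R_+)\setminus\{0\}$; in particular its last coordinate $t=\Phi(0,1)>0$, and after rescaling to $t=1$ its first coordinate is the desired $\phi\in\cD_{str}$.

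The main obstacle is this last paragraph, i.e.\ the Kreps--Yan step: extracting a countable ``detecting'' subfamily and combining it into one functional that remains bounded below on $\cC^\infty$ while being strictly positive on all of $\cA\setminus\{0\}$. This is exactly where admissibility is indispensable (separability of the predual for metrizability of the weak-$*$ ball, compatibility of topologies for continuity of evaluations), and where moving to $\cX\times\R$ with the cone $\cC^\infty$ is the right device, since the non-conicity of $\cP$ and $V_0$ obstructs any naive normalization inside $\cD$. This mirrors the Superhedging Theorem of \cite{article:arduca2020market}; the novelty here is handling terminal liquidation (nonlinear, merely anti-star shaped and upper semicontinuous $V_1$) and dispensing with convexity of $V_1$, which is precisely what the key lemma above is designed to absorb.
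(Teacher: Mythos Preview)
Your reduction to the existence of a single $\phi\in\cD_{str}$ (first paragraph) and your key lemma (second paragraph) are correct and match the paper's Steps~1, 2, and~8 closely. The problem lies in the Kreps--Yan step.

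You claim that ``a linear functional is bounded below on $\cC$ iff it is nonnegative on $\cC^\infty$'', and this is the hinge of your entire construction: your separating functionals $\Phi_n$ are produced by separating points from the \emph{cone} $\cC^\infty$, so they only satisfy $\Phi_n\geq0$ on $\cC^\infty$, and you need the reverse implication to conclude $\phi\in\cD$. But that implication is false, already in $\R^2$: take $\cC=\{(x,y):y\geq x^2\}$, so $\cC^\infty=\{0\}\times\R_+$; the functional $(x,y)\mapsto -x$ vanishes on $\cC^\infty$ yet is unbounded below on $\cC$. Nothing in the hypotheses forces $\cC=\epi(\rho)$ to be conic (indeed $\cP$ and $V_0$ are not assumed conic), so there is no reason your $\Phi$, or its normalization $(\phi,1)$, has $\sigma_\cC(\phi,1)>-\infty$. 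Passing to $\cC^\infty$ does cure the normalization issue you identified, but it simultaneously discards exactly the boundedness information you need to land back in $\cD$.

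The paper's route is to separate directly from $\cC$ rather than from $\cC^\infty$: for $(-\lambda X,0)\notin\cC$ one gets $(\varphi,a)$ with $\sigma_\cC(\varphi,a)>-\infty$ automatically, at the cost that one only knows $a\geq0$. A second separation, of some $(0,-\lambda)\notin\cC$ (which exists because $\cP_0$ is bounded and $V_0$ is lower semicontinuous), produces $(\chi,b)$ with $b>0$, and a convex combination yields a genuine element of $\cD$ with the desired strict sign. The countable family is then summed with weights $a_n\asymp 2^{-n}(1+\|\psi_n\|)^{-1}(1-\sigma_\cC(\psi_n,1))^{-1}$ chosen so that \emph{both} the norms and the values $\sigma_\cC(\psi_n,1)$ are controlled; upper semicontinuity of $\sigma_\cC$ then ensures the limit stays in $\cD$. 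Your simpler weights $2^{-n}$ cannot do this, because your $\Phi_n$ need not have finite $\sigma_\cC$ to begin with.
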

\begin{proof}
Throughout the proof, we denote by $\cY$ a separable normed space such that $\cX'$ is the norm dual of $\cY$ and $\sigma(\cX',\cX)$ is weaker than $\sigma(\cX',\cY)$. Moreover, we denote by $\|\cdot\|_{\cX'}$ the corresponding dual norm on $\cX'$. The existence of $\cY$ follows from admissibility of $(\cX,\cX')$.

\smallskip

{\em Step 1}. First, we show that the set $\cP_0=\{x\in\cP \,; \ V_0(x)\leq0, \ V_1(x)\in\cA\}$ is bounded. By contradiction, take a sequence $(x_n)\subset\cP_0$ such that $\norm{x_n}\geq n$ for every $n\in\N$. By passing to a convenient subsequence if necessary, we find a nonzero $x\in\R^N$ such that $\frac{x_n}{\norm{x_n}}\to x$. Note that $x\in\cP^\infty$. Moreover,
\[
x \in \{y\in\R^N \,; \ V_0(y)\leq0\}^\infty \subset \{y\in\R^N \,; \ V_0^\infty(y)\leq0\}.
\]
As $V_1$ is upper semicontinuous, we find a subnet $\big(\frac{x_\alpha}{\norm{x_\alpha}}\big)$ and a net $(Y_\alpha)\subset\cX$ such that $V_1\big(\frac{x_\alpha}{\norm{x_\alpha}}\big)\leq Y_\alpha$ for every $\alpha$ and $Y_\alpha\to V_1(x)$. It follows that for every $\alpha$
\begin{equation*}
\norm{x_\alpha}Y_{\alpha} \geq {\norm{x_\alpha}}\,V_1\Big( \tfrac{x_\alpha}{\norm{x_\alpha}} \Big)	\geq \tfrac{\norm{x_\alpha}}{\norm{x_\alpha}}V_1(x_\alpha) = V_1(x_\alpha)\in\cA.
\end{equation*}
As a result, we obtain $\norm{x_\alpha}Y_{\alpha}\in\cA$ for every $\alpha$. Since we have
\begin{equation*}
\tfrac{\norm{x_\alpha}Y_{\alpha}}{\norm{x_\alpha}} = Y_{\alpha} \xrightarrow[]{}V_1(x),
\end{equation*}
we infer that $V_1(x)\in\cA^\infty$. However, this would imply that $x\in\cL=\{0\}$, which cannot hold because $x$ was nonzero by definition. In conclusion, $\cP_0$ must be bounded as claimed.

\smallskip

{\em Step 2}. Take a nonzero $X\in\cA$. We claim that $(-\lambda X,0)\notin\cC$ for some $\lambda\in(1,\infty)$. By contradiction, assume that for every $\lambda\in(1,\infty)$ there exists $x_\lambda\in\cP$ such that $V_0(x_\lambda)\leq0$ and $-\lambda X+V_1(x_\lambda)\in\cA$. We claim that $(x_\lambda)\subset\cP_0$. To see this, it suffices to observe that $V_1(x_\lambda)\in\cA+\lambda X\subset\cA$ for every $\lambda\in(1,\infty)$ by assumption on $\cA$. As a consequence, the net $(x_\lambda)$ must be bounded by Step 1. By passing to a convenient subnet if necessary, we may assume that $x_\lambda\to x$ for some $x\in\R^N$. In particular, $\tfrac{x_\lambda}{\lambda}\to0$. For $\lambda\in(1,\infty)$
\[
V_1\left(\tfrac{x_\lambda}{\lambda}\right) \geq \tfrac{1}{\lambda}V_1(x_\lambda) \in \tfrac{1}{\lambda}(\cA+\lambda X) \subset \cA+X,
\]
showing that $V_1\left(\tfrac{x_\lambda}{\lambda}\right)\in\cA+X$. By upper semicontinuity of $V_1$ and again by passing to a suitable subnet if necessary, we find $(Z_\lambda)\subset\cX$ such that $Z_\lambda\geq V_1\left(\tfrac{x_\lambda}{\lambda}\right)$ for every $\lambda$ and $Z_\lambda\to V_1(0)=0$. Note that $(Z_\lambda)\subset\cA+X$ and, hence, $0\in\cA+X$ by closedness of $\cA$. However, this cannot hold because $\cA\cap(-\cA)=\{0\}$. This implies that $(-\lambda X,0)\notin\cC$ for some $\lambda\in(1,\infty)$ and concludes the proof.

\smallskip

{\em Step 3}. Let $\sigma_\cC$ be the lower support function of $\cC$. We show that every $\psi\in\cX'$ such that $\sigma_\cC(\psi,1)>-\infty$ belongs to $\cD$. To this effect, observe that
\begin{align*}
\sigma_\cC(\psi,1)
&=
\inf\{\psi(X)+m \,; \ (X,m)\in\cX\times\R, \ \exists x\in\cP, \ V_0(x)\leq m, \ X+V_1(x)\in\cA\} \\
&=
\inf\{\psi(Y-V_1(x))+V_0(x) \,; \ Y\in\cA, \ x\in\cP\} \\
&=
\inf\{\psi(Y)+V_0(x)-\psi(V_1(x)) \,; \ Y\in\cA, \ x\in\cP\} \\
&=
\sigma_\cA(\psi)+\sigma_{\cP,V_0,V_1}(\psi).
\end{align*}
This implies that $\sigma_\cA(\psi)>-\infty$ and $\sigma_{\cP,V_0,V_1}(\psi)>-\infty$, and delivers the desired claim.

\smallskip

{\em Step 4}. We show that $(0,-\lambda)\notin\cC$ for some $\lambda\in\R_{++}$. To the contrary, suppose that for every $\lambda\in\R_{++}$ we find $x_\lambda\in\cP$ such that $V_0(x_\lambda)\leq-\lambda$ and $V_1(x_\lambda)\in\cA$. The net $(x_\lambda)$ is bounded by Step 1. Then, we may assume without loss of generality that $x_\lambda\to x$ for some $x\in\R^N$. This implies
\[
V_0(x) \leq \liminf_\lambda V_0(x_\lambda) = -\infty
\]
by lower semicontinuity of $V_0$. As this contradicts our assumptions on $V_0$, there must exist $\lambda\in\R_{++}$ such that $(0,-\lambda)\notin\cC$, proving the desired claim.

\smallskip

{\em Step 5}. Let $X\in\cX$. We show that, if $(X,0)\notin\cC$, then there exists $\psi\in\cD$ such that $\psi(X)<0$. First, note that $\cC$ is closed by Theorem~\ref{thm:lsc_NoA_Llinear}. Furthermore, as $\rho$ is convex and $\cC$ coincides with its epigraph by Remark~\ref{rem: epigraph}, we infer that $\cC$ is convex. A direct application of the Hahn-Banach theorem, see, e.g., Theorem 1.1.9 in \cite{book:zalinescu2002convex}, implies that $\varphi(X)<\sigma_\cC(\varphi,a)\leq0$ for some $\varphi\in\cX'$ and $a\in\R$ with $\sigma_\cC(\varphi,a)>-\infty$. In particular, $\varphi(X)<0$. Note that $(0,\lambda)\in\cC$ for every $\lambda\in\R_{++}$, so that
\[
\inf_{\lambda\in\R_{++}}\{\lambda a\} \geq \sigma_\cC(\varphi,a) > -\infty.
\]
This yields $a\geq0$. Since $(0,-\lambda)\notin\cC$ for some $\lambda\in\R_{++}$ by Step 4, we must find as above $\chi\in\cX'$ and $b\in\R$ such that $\sigma_\cC(\chi,b)>-\infty$ and $-\lambda b<\sigma_\cC(\chi,b)\leq0$. Hence, $b>0$. For every $\lambda\in(0,1)$ define $\psi_\lambda=\lambda\varphi+(1-\lambda)\chi\in\cX'$ and $c_\lambda=\lambda a+(1-\lambda)b\in\R_{++}$. To conclude the proof, take $\lambda$ close enough to $1$ to have $\psi_\lambda(X)<0$ and observe that
\[
\sigma_\cC\Big(\tfrac{\psi_\lambda}{c_\lambda},1\Big) = \tfrac{1}{c_\lambda}\sigma_\cC(\psi_\lambda,c_\lambda) \geq \tfrac{\lambda}{c_\lambda}\sigma_\cC(\varphi,a)+\tfrac{1-\lambda}{c_\lambda}\sigma_\cC(\chi,b) > -\infty
\]
by concavity of $\sigma_\cC$, implying that $\frac{\psi_\lambda}{c_\lambda}\in\cD$ by Step 3. It now suffices to define $\psi=\frac{\psi_\lambda}{c_\lambda}$.

\smallskip

{\em Step 6}. We claim that for every sequence $(\psi_n)\subset\cD$ there exist a sequence $(\lambda_n)\subset\R_{++}$ and $\psi\in\cD$ such that $\sum_{k=1}^n \lambda_k\psi_k\to \psi$ in the topology $\sigma(\cX',\cX)$. To prove this, note first that $\sigma_\cC(\psi,1)\leq0$ for every $\psi\in\cD$. For every $n\in\N$ set $a_n=(1+\|\psi_n\|_{\cX'})^{-1}(1-\sigma_\cC(\psi_n,1))^{-1}2^{-n}>0$ and define $\varphi_n=\sum_{k=1}^na_k\psi_k\in\cX'$. Note that $(\varphi_n)$ is a Cauchy sequence in the norm topology of $\cX'$. Since $\cX'$ is complete, we have $\varphi_n\to \varphi$ for a suitable $\varphi\in\cX'$ with respect to its norm topology. A fortiori, $\varphi_n\to\varphi$ with respect to the weak-star topology $\sigma(\cX',\cY)$. As $\sigma(\cX',\cX)$ is weaker than $\sigma(\cX',\cY)$ by assumption, we also have convergence with respect to $\sigma(\cX',\cX)$. To conclude the proof, note that $\sum_{k=1}^na_k\to a$ for some $a\in\R_{++}$ and
\[
\sigma_\cC(\varphi,a) \geq \limsup_{n\to\infty}\sum_{k=1}^na_k\sigma_\cC(\psi_k,1) > -\infty
\]
by upper semicontinuity and superlinearity of $\sigma_\cC$. The desired claim follows by setting $\lambda_n=\frac{a_n}{a}\in\R_{++}$ for every $n\in\N$ and $\psi=\frac{\varphi}{a}\in\cX'$. Indeed, it is clear that $\sum_{k=1}^n \lambda_k\psi_k\to \psi$ with respect to $\sigma(\cX',\cX)$. In addition, we have $\sigma_\cC(\psi,1)=\frac{1}{a}\sigma_\cC(\varphi,a)>-\infty$, hence $\psi\in\cD$ by Step 3.

\smallskip

{\em Step 7}. We exhibit a sequence $(\psi_n)\subset\cD$ such that for every nonzero $X\in\cA$ there exists $n\in\N$ with $\psi_n(X)>0$. To this effect, take a nonzero $X\in\cA$. By Step 2, there exists $\lambda\in\R_{++}$ such that $(-\lambda X,0)\notin\cC$. Then, by Step 5, we find $\psi_X\in\cD$ satisfying $\psi_X(X)>0$. Now, consider the rescaled couple
\[
(\varphi_X,r_X) = \left(\tfrac{\psi_X}{\|\psi_X\|_{\cX'}},\tfrac{1}{\|\psi_X\|_{\cX'}}\right).
\]
As $\cY$ is separable by assumption, the unit ball in $\cX'$ is $\sigma(\cX',\cY)$-metrizable by Theorem 6.30 in \cite{book:aliprantis1999infinite} and $\sigma(\cX',\cY)$-compact by the Banach-Alaoglu theorem, see, e.g., Theorem 6.21 in \cite{book:aliprantis1999infinite}. As a consequence, the unit ball together with any of its subsets is $\sigma(\cX',\cY)$-separable. In particular, this is true of $\{\varphi_X \,; \ X\in\cA\setminus\{0\}\}$. Let $\{\varphi_{X_n} \,; \ n\in\N\}$ be a countable $\sigma(\cX',\cY)$-dense subset. Since convergence in $\sigma(\cX',\cY)$ implies convergence in $\sigma(\cX',\cX)$, for every nonzero $X\in\cA$ we must have $\varphi_{X_n}(X)>0$, hence $\psi_{X_n}(X)>0$, for some $n\in\N$ by density. This delivers the desired assertion.

\smallskip

{\em Step 8}. We establish the assertion of the theorem. Take the sequence $(\psi_n)\subset\cD$ from Step 7. It follows from Step 6 that we find $(\lambda_n)\subset\R_{++}$ and $\psi\in\cD$ such that $\sum_{k=1}^n \lambda_k\psi_k\to \psi$ in the topology $\sigma(\cX',\cX)$. Take an arbitrary nonzero $X\in\cA$. For every $n\in\N$ we have $\psi_n(X)\geq0$ by conicity of $\cA$, see Proposition~\ref{prop: on D}. In addition, we find $n\in\N$ such that $\psi_n(X)>0$ by the defining properties of our sequence. It is therefore clear that $\psi(X)>0$, showing that $\psi\in\cD_{str}$. Now, take $X\in\cX$. It follows from Theorem~\ref{theo: dual repr convex} that
\[
\rho(X) = \sup_{\psi\in\cD}\{\sigma_{\cP,V_0,V_1}(\psi)-\psi(X)\},
\]
where we used that $\sigma_\cA(\psi)=0$ for every $\psi\in\cD$ by Proposition~\ref{prop: on D}. It remains to prove that
\[
\sup_{\psi\in\cD}\{\sigma_{\cP,V_0,V_1}(\psi)-\psi(X)\} \leq \sup_{\psi\in\cD_{str}}\{\sigma_{\cP,V_0,V_1}(\psi)-\psi(X)\}.
\]
To this end, let $\varphi\in\cD$ and $\psi\in\cD_{str}$. For every $\lambda\in(0,1)$ define $\psi_\lambda=\lambda\varphi+(1-\lambda)\psi\in\cD_{str}$ and note that
\[
\sigma_{\cP,V_0,V_1}(\psi_\lambda)-\psi_\lambda(X) \geq \lambda(\sigma_{\cP,V_0,V_1}(\varphi)-\varphi(X))+(1-\lambda)(\sigma_{\cP,V_0,V_1}(\psi)-\psi(X)) \to \sigma_{\cP,V_0,V_1}(\varphi)-\varphi(X)
\]
as $\lambda\uparrow1$. This yields the desired inequality and concludes the proof.
\end{proof}

\begin{remark}
(i) The last proof exploits ideas from arbitrage pricing. Indeed, the combination of Steps 6 and 7 used at the beginning of the proof of Step 8 is reminiscent of the arguments behind the Kreps-Yan theorem (see \cite{article:yan1980caracterisation} and \cite{article:kreps1981arbitrage}) which underpins some classical proofs of the Fundamental Theorem of Asset Pricing. The original exhaustion argument in \cite{article:yan1980caracterisation} cannot be reproduced in our general setting because the acceptance set may contain nonpositive positions whereas the arguments in \cite{article:kreps1981arbitrage} can be adapted beyond the frictionless setting of that paper. In so doing, we had to cope with the presence of market frictions and with the absence of convexity and we work under the assumption $\cL=\{0\}$ instead of viability as in that paper. A similar idea has been used in \cite{article:arduca2020market} in the presence of convexity and with a focus on payoffs instead of portfolios.

\smallskip

(ii) It is natural to ask if the same representation holds under weaker assumptions on the acceptance set. The pointedness condition $\cA\cap(-\cA)=\{0\}$ is clearly necessary for $\cD_{str}$ to be nonempty. Conicity of $\cA$ is necessary for Step 2 in our proof. To see this, let $\cX$ be the space of random variables on the probability space $(\Omega,\cF,\probp)$, where $\Omega=\{\omega_1,\omega_2\}$, $\cF$ coincides with the parts of $\Omega$, and $\probp$ assigns probability $1/2$ to each scenario. Set $\cP=\R^2$ and define
\[
\cA=\{X\in\cX \,; \ X(\omega_1)\geq0, \ X(\omega_2)\geq0\}\cup\{X\in\cX \,; \ -1\leq X(\omega_1)\leq0, \ X(\omega_1)+X(\omega_2)\geq0\}.
\]
Moreover, define the pricing rules by
\[
V_0(x)=x_1+x_2, \ \ \ V_1(x)=(x_1,x_2), \ \ \ x\in\R^2.
\]
It is not difficult to verify that the assumptions of our theorem hold apart from conicity of $\cA$. In particular, $\cL=\{0\}$ holds because $\cA^\infty=\{X\in\cX \,; \ X(\omega_1)\geq0, \ X(\omega_2)\geq0\}$. Now, let $X=(-1,1)\in\cA$ and take any $\lambda\in\R_{++}$. Observe that $x_\lambda=(-\lambda,\lambda)\in\cP$ satisfies $V_0(x_\lambda)=0$ and $-\lambda X+V_1(x_\lambda)=0\in\cA$. This shows that $(-\lambda X,0)\in\cC$. This shows that Step 2 does not hold in this case. A statement for general acceptance sets can be obtained if the assumptions of the theorem are assumed to hold for the enlarged acceptance set $\cK(\cA)+\cX_+$, where $\cK(\cA)$ is the closure of the convex cone generated by $\cA$.
\end{remark}

\subsection{Quasiconvex risk measures}

We turn to the quasiconvex case. Throughout this section, we define for $X\in\cX$ and $\psi\in\cX'$
\[
\rho(X|\psi):=\inf\{\rho(Y) \,; \ Y\in\cX, \ \psi(Y)\leq \psi(X) \}.
\]
This quantity corresponds to the smallest level of required capital attached to financial positions that are less expensive, from the perspective of the pricing rule $\psi$, than $X$. The next lemma records a useful representation of such quantity, showing that the functional $\rho(\cdot|\psi)$ corresponds to a risk measure with the same financial primitives but with enlarged acceptance set
\[
\cA_\psi:=\{X\in\cX \,; \ \exists Y\in\cX, \ \psi(Y)\leq0, \ X+Y\in\cA\}.
\]
This set consists of all positions that can be made acceptable ``at zero cost'', i.e., upon aggregation with a position having nonnegative price according to $\psi$.

\begin{lemma}
\label{lemma: rho(X|psi)}
For all $X\in\cX$ and $\psi\in\cX'$ we have
\[
\rho(X|\psi)=\inf\big\{V_0(x) \,; \ x\in\cP, \ X+V_1(x)\in\cA_\psi\big\}.
\]
\end{lemma}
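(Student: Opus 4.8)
The plan is to reduce both sides to a single joint infimum over a common set of feasible data and then exhibit a value-preserving bijection between these sets. First I would unfold the left-hand side by substituting the definition of $\rho$ into that of $\rho(X|\psi)$: since
\[
\rho(X|\psi)=\inf\{\rho(Y) \,; \ Y\in\cX, \ \psi(Y)\leq\psi(X)\}
\]
and each inner $\rho(Y)$ is itself an infimum over admissible portfolios, the nested infima collapse into
\[
\rho(X|\psi)=\inf\{V_0(x) \,; \ x\in\cP, \ Y\in\cX, \ \psi(Y)\leq\psi(X), \ Y+V_1(x)\in\cA\}.
\]
Symmetrically, I would unfold the right-hand side using the definition of $\cA_\psi$: the membership $X+V_1(x)\in\cA_\psi$ means exactly that there is some $Z\in\cX$ with $\psi(Z)\leq0$ and $X+V_1(x)+Z\in\cA$, so the right-hand side becomes
\[
\inf\{V_0(x) \,; \ x\in\cP, \ Z\in\cX, \ \psi(Z)\leq0, \ X+V_1(x)+Z\in\cA\}.
\]

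Next I would link the two feasible sets through the change of variables $Z=Y-X$ (equivalently $Y=X+Z$). The only place where I use that $\psi$ is an element of the dual $\cX'$, and hence linear, is the equivalence $\psi(Y)\leq\psi(X)\iff\psi(Y-X)=\psi(Y)-\psi(X)\leq0$. Under this substitution the acceptability constraint transforms as $Y+V_1(x)=X+V_1(x)+Z$, so $Y+V_1(x)\in\cA$ holds if and only if $X+V_1(x)+Z\in\cA$, while the constraint $x\in\cP$ and the objective $V_0(x)$ are untouched since they involve only $x$. Hence $(x,Y)\mapsto(x,Y-X)$ is a bijection between the two feasible sets that preserves the value $V_0(x)$ of the objective.

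Finally, because the two feasible sets are in value-preserving bijection, their infima coincide, which is precisely the claimed identity; the bijection also disposes automatically of the degenerate cases (empty feasible set, yielding $+\infty$ on both sides, or unboundedness below, yielding $-\infty$ on both sides). I do not anticipate a genuine obstacle here: the argument is a pure bookkeeping manipulation relying only on the linearity of $\psi$, and it needs no topology, convexity, or assumption on the primitives beyond the standing ones. The only mild care required is notational—keeping the bound variable in the definition of $\cA_\psi$ distinct from the optimization variable $Y$ in the definition of $\rho(X|\psi)$—which the substitution $Z=Y-X$ handles cleanly.
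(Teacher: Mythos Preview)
Your proposal is correct and follows essentially the same approach as the paper: both arguments unfold the definitions of $\rho$ and $\cA_\psi$, collapse the nested infima into a single joint infimum, and use the substitution $Z=Y-X$ together with the linearity of $\psi$ to identify the two feasible sets. The only cosmetic difference is ordering---the paper performs the substitution before expanding $\rho$, whereas you expand both sides first and then match them---but the logical content is identical.
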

\begin{proof}
It is easy to see that
    \begin{align}
	\rho(X |\psi) &= \inf\{\rho(Y) \,; \ Y\in\cX, \ \psi(Y-X)\leq 0\}\nonumber\\
	&= \inf\{\rho(X+Z) \,; \ Z\in\cX, \ \psi(Z)\leq 0\}\nonumber\\
	&= \inf\left\{ \inf\{V_0(x) \,; \ x\in\cP, \ X+Z+V_1(x)\in\cA \} \,; \ Z\in\cX, \ \psi(Z)\leq 0 \right\}\nonumber\\
	&= \inf\{ V_0(x) \,; \ x\in\cP, \ Z\in\cX, \ \psi(Z)\leq 0, \ X+Z+V_1(x)\in\cA \}\nonumber\\
	&=\inf\{V_0(x) \,; \ x\in\cP, \ X+V_1(x)\in\cA_\psi\}.\nonumber\qedhere
    \end{align}
\end{proof}

We can now establish the desired dual representation under quasiconvexity.

\begin{theorem}
\label{theo: dual representation quasiconvex}
Let $\rho$ be proper, quasiconvex, and lower semicontinuous. Then, $\cB$ is nonempty and
\[
\rho(X) = \sup_{\psi\in\cB}\rho(X|\psi), \ \ \ X\in\cX.
\]
\end{theorem}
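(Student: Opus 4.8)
The plan is to prove the two inequalities separately: the easy one is immediate, and the nontrivial one I would obtain by Hahn--Banach separation of a sublevel set, taking care to dispose of those dual elements that the separation produces but which fall outside $\cB$.

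The easy inequality $\sup_{\psi\in\cB}\rho(X|\psi)\le\rho(X)$ is immediate from the very definition of $\rho(X|\psi)$: since $X$ itself satisfies $\psi(X)\le\psi(X)$, it is feasible in the defining infimum, so $\rho(X|\psi)\le\rho(X)$ for every $\psi\in\cX'$. Before the reverse inequality I would record two structural facts. First, $\cB\neq\emptyset$, because $0\in\cB$ (indeed $\sigma_\cA(0)=0>-\infty$). Second, writing $c:=\inf_{x\in\cP}V_0(x)$, I claim that $c\le\rho(X|\psi)$ for every $\psi$, and that $\rho(X|\psi)=c$ whenever $\psi\notin\cB$. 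The bound $c\le\rho(X|\psi)$ follows from Lemma~\ref{lemma: rho(X|psi)} because the admissible set there is contained in $\cP$. For the equality, note that $\psi\notin\cB$ means $\sigma_\cA(\psi)=-\infty$; then for any $W\in\cX$ one can pick $A\in\cA$ with $\psi(A)\le\psi(W)$, so that $Y:=A-W$ satisfies $\psi(Y)\le0$ and $W+Y\in\cA$, proving $\cA_\psi=\cX$ and hence $\rho(X|\psi)=\inf_{x\in\cP}V_0(x)=c$ via Lemma~\ref{lemma: rho(X|psi)}.

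For the reverse inequality $\rho(X)\le\sup_{\psi\in\cB}\rho(X|\psi)$, fix $X\in\cX$ and any real $m<\rho(X)$ (legitimate since $\rho$ is proper, hence $\rho(X)>-\infty$). Consider the sublevel set $\cS_m:=\{Y\in\cX\,;\,\rho(Y)\le m\}$, which is convex by quasiconvexity, closed by lower semicontinuity, and does not contain $X$. If $\cS_m=\emptyset$, then $\inf_\cX\rho\ge m$; since $\inf_\cX\rho=c$ (choosing $Y=-V_1(x)$ gives $\rho(Y)\le V_0(x)$, so $\inf_\cX\rho\le c$, while $\rho\ge c$ pointwise gives the reverse), we get $c\ge m$ and therefore $\sup_{\psi\in\cB}\rho(X|\psi)\ge c\ge m$. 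Otherwise I would apply strict separation in the locally convex space $\cX$ to obtain $\psi\in\cX'$ with $\psi(X)<\inf_{Y\in\cS_m}\psi(Y)$. Every $Y$ with $\psi(Y)\le\psi(X)$ then lies outside $\cS_m$, i.e. $\rho(Y)>m$, so $\rho(X|\psi)\ge m$. If $\psi\in\cB$ this already yields $\sup_{\psi\in\cB}\rho(X|\psi)\ge m$; if instead $\psi\notin\cB$, then by the structural fact $\rho(X|\psi)=c$, forcing $c\ge m$ and again $\sup_{\psi\in\cB}\rho(X|\psi)\ge c\ge m$ (using $\cB\neq\emptyset$ and $\rho(X|\cdot)\ge c$). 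Letting $m\uparrow\rho(X)$, or $m\to\infty$ when $\rho(X)=\infty$, gives the claim.

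The main obstacle is precisely this dichotomy at the end: the functional delivered by separation need not lie in $\cB$, so it cannot be used directly as a dual element. The resolution is the structural computation showing that any $\psi$ with $\sigma_\cA(\psi)=-\infty$ collapses $\cA_\psi$ to all of $\cX$, so that $\rho(X|\psi)$ equals the global constant $c$, which is in turn achieved inside $\cB$ (for instance as $\rho(X|0)$); this is what allows the representation to be restricted to the barrier cone. The remainder is routine separation and monotonicity bookkeeping. I would also note that the nonincreasing property of $\rho$ (Proposition~\ref{prop: properties of rho}) makes $\cS_m$ an upper set, whence the separating $\psi$ can be taken in $\cX'_+$, although this is not needed for the argument.
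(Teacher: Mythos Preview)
Your proof is correct and essentially self-contained, whereas the paper takes a shortcut by invoking the Penot--Volle duality theorem (Theorem~3.8 in \cite{article:penot1990quasi}) to obtain $\rho(X)=\sup_{\psi\in\cX'}\rho(X|\psi)$ as a black box, and then restricts the supremum to $\cB$ using the same structural observation you make (namely that $\psi\notin\cB$ forces $\cA_\psi=\cX$ and hence $\rho(X|\psi)=\inf_{x\in\cP}V_0(x)$). Your argument instead derives the full representation directly from Hahn--Banach separation of the sublevel set $\cS_m$, which is precisely how one proves Penot--Volle in the first place; the dichotomy you introduce (separating $\psi$ either lies in $\cB$ or yields the global lower bound $c$, which is already attained at $0\in\cB$) is the same reduction the paper performs, just merged into the separation step rather than applied afterwards. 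The trade-off: the paper's version is shorter if one accepts the external reference, while yours is more transparent and avoids the dependency; your explicit identification $\inf_\cX\rho=c=\rho(X|0)$ also handles the bookkeeping for the restriction cleanly.
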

\begin{proof}
It follows from the general dual representation of quasiconvex functions recorded in Theorem 3.8 in \cite{article:penot1990quasi} that for every $X\in\cX$ we can rewrite $\rho$ as
\begin{equation}
\label{eq: quasiconvex}
\rho(X)=\sup_{\psi\in\cX'}\rho(X|\psi).
\end{equation}
Now, take $\psi\in\cX'\setminus\cB$. Clearly, for every $Y\in\cX$ we find $Z\in\cA$ for which $\psi(Z-Y)\leq0$ holds, so that $Y\in\cA_\psi$. Thus, $\cA_\psi=\cX$ and, by virtue of Lemma~\ref{lemma: rho(X|psi)}, we have $\rho(X|\psi)=\inf\{V_0(x) \,; \ x\in\cP\}$. Note also that $\rho(X|0)=\rho(X)$ again by Lemma~\ref{lemma: rho(X|psi)}. As a result, $\rho(X|\psi)\leq\rho(X|0)$. Since $0\in\cB$, the supremum in~\ref{eq: quasiconvex} can be restricted to $\cB$. This concludes the proof.
\end{proof}

The next example shows that the domain of the dual representation of quasiconvex risk measures cannot generally be restricted further as in the convex case. On the one hand, we show that $\cB$ cannot be replaced with either $\cD_{str}$, even under the assumptions of Theorem~\ref{theo: improved dual}, or $\cD$ as in Theorem~\ref{theo: dual repr convex}. On the other hand, $\cB$ cannot be generally replaced with the smaller domain
\[
\cB_{str}:=\{\psi\in\cB \,; \ \psi(X)>0, \ \forall X\in\cA\setminus\{0\}\}.
\]
These observations seem to be new already in a frictionless setting.

\begin{example}
Let $\cX$ be the space of random variables on the probability space $(\Omega,\cF,\probp)$ where $\Omega=\{\omega_1,\omega_2\}$, $\cF$ coincides with the parts of $\Omega$, and $\probp$ assigns probability $1/2$ to each scenario. Set $\cP=\R$ and define $\cA=\{X\in\cX \,; \ X(\omega_1)\geq0, \ X(\omega_2)\geq0\}$ and
\[
V_0(x)=\begin{cases}
x+\tfrac{1}{2} & \mbox{if} \ x<-1,\\
\tfrac{1}{2}x & \mbox{if} \ -1\leq x<0,\\
x & \mbox{if} \ x\geq0,\\
\end{cases}
 \ \ \ V_1(x)=xS, \ \ \ x\in\R,
\]
where $S=1_\Omega\in\cX$. The assumptions on the financial primitives stipulated in Theorem~\ref{theo: improved dual} are all satisfied. In particular, $\cL = \{x\in\R_+ \,; \ V^\infty_0(x)\leq 0\}=\{0\}$. 
Note that for every $X\in\cX$
\[
\rho(X) = \inf\{V_0(x) \,; \ x\in\R, \ x\geq\max\{-X(\omega_1),-X(\omega_2)\}\} = V_0(\max\{-X(\omega_1),-X(\omega_2)\}).
\]
This shows that $\rho$ is proper, quasiconvex, and lower semicontinuous. Quasiconvexity and lower semicontinuity also follow from Proposition~\ref{prop: properties of rho} and Theorem~\ref{thm:lsc_NoA_Llinear}. Note that
\[
\cB=\{\psi\in\cX' \,; \ \psi(1_{\{\omega_1\}})\geq0, \ \psi(1_{\{\omega_2\}})\geq0\}, \ \ \ 
\cB_{str}=\{\psi\in\cX' \,; \ \psi(1_{\{\omega_1\}})>0, \ \psi(1_{\{\omega_2\}})>0\}.
\]
For every $\psi\in\cX'$ it is easy to see that
\[
\inf_{x\geq0}\{V_0(x)-\psi(V_1(x))\}>-\infty \ \iff \ \inf_{x>0}\{x-\psi(S)x\}>-\infty \ \iff \ \psi(S)\leq1,
\]
\[
\inf_{x<0}\{V_0(x)-\psi(V_1(x))\}>-\infty \ \iff \ \inf_{x<-1}\{x-\psi(S)x\}>-\infty \ \iff \ \psi(S)\geq1.
\]
As a result, we infer from Proposition~\ref{prop: on D} that
\[
\cD=\{\psi\in\cB \,; \ \psi(S)=1\}, \ \ \ \cD_{str}=\{\psi\in\cB_{str} \,; \ \psi(S)=1\}.
\]
Note that for every nonzero $\psi\in\cB$ we have $\cA_\psi=\cX$. Indeed, for $X\in\cX$ define
\[
Y=\begin{cases}
\min\left\{X(\omega_1),
-\tfrac{\psi(1_{\{\omega_2\}})}{\psi(1_{\{\omega_1\}})}X(\omega_2)\right\}1_{\{\omega_1\}}+X(\omega_2)1_{\{\omega_2\}} & \mbox{if} \ \psi(1_{\{\omega_1\}})>0,\\
X(\omega_1)1_{\{\omega_1\}}+\min\left\{X(\omega_2),0\right\}1_{\{\omega_2\}} & \mbox{if} \ \psi(1_{\{\omega_1\}})=0,
\end{cases}
\]
and observe that $X-Y\in\cA$ and $\psi(Y)\leq 0$. Since $\cA_\psi=\cX$, we infer from Lemma~\ref{lemma: rho(X|psi)} that $\rho(X|\psi)=-\infty$ for every $X\in\cX$. In addition, note that $0\notin\cD$ because $\sigma_{\cP,V_0,V_1}(0)=\inf\{V_0(x) \,; \ x\in\cP\}=-\infty$. In conclusion, for every $X\in\cX$ we obtain
\[
\rho(X)\neq\sup_{\psi\in\cB_{str}}\rho(X|\psi)=\sup_{\psi\in\cD}\rho(X|\psi)=\sup_{\psi\in\cD_{str}}\rho(X|\psi)=-\infty.
\]
\end{example}


\appendix


{\small

\section{Appendix}
\label{sect: math background}

In the appendix we review the key mathematical notions used in the paper with special emphasis on recession cones and functions. For more details we refer the reader to \cite{book:zalinescu2002convex}.

\smallskip

We adopt the convention $\infty-\infty=-\infty$ and $0\cdot\infty=0$, and define $\R_+:=[0,\infty)$ and $\R_{++}:=(0,\infty)$. Let $\cX$ be a real topological vector space. A nonempty set $\cC\subset\cX$ is star shaped if $\lambda X\in\cC$ for all $\lambda\in[0,1]$ and $X\in\cC$, a cone if $\lambda X\in\cC$ for all $\lambda\in\R_+$ and $X\in\cC$, convex if $\lambda X+(1-\lambda)Y\in\cC$ for all $\lambda\in[0,1]$ and $X,Y\in\cC$, closed under addition if $X+Y\in\cC$ for all $X,Y\in\cC$. Let $\cX$ be a real topological vector space and take a nonempty set $\cC\subset\cX$. The asymptotic cone of $\cC$ is
\[
\cC^\infty := \{X\in\cX \,; \ \exists (X_\alpha)\subset\cC, \ (\lambda_\alpha)\subset\R_+, \ \lambda_\alpha\to0, \ \lambda_\alpha X_\alpha\to X\}.
\]
Note that $\cC^\infty$ is always a closed cone, which coincides with the closure of $\cC$ when $\cC$ is itself a cone. If $\cC$ is closed and star shaped, then $\cC^\infty\subset\cC$. If $\cC$ is convex, then $\cC^\infty$ is also convex. If $\cC$ is convex and closed, for every $X\in\cC$
\[
\cC^\infty = \bigcap_{\lambda\in\R_{++}}\lambda(\cC-X\big).
\]
In this case, $\cC+\cC^\infty\subset\cC$. Now, let $\cY$ be another real topological vector space equipped with a partial order. A function $f:\cX\to\cY$ is star shaped if $f(\lambda X)\leq\lambda f(X)$ for all $\lambda\in[0,1]$ and $X\in\cX$, positively homogeneous if $f(\lambda X)=\lambda f(X)$ for all $\lambda\in\R_+$ and $X\in\cX$, convex if $f(\lambda X+(1-\lambda)Y)\leq\lambda f(X)+(1-\lambda)f(Y)$ for all $\lambda\in[0,1]$ and $X,Y\in\cX$, subadditive if $f(X+Y)\leq f(X)+f(Y)$ for all $X,Y\in\cX$. If $\cY$ is equipped with a lattice, then $f$ is called quasiconvex if $f(\lambda X+(1-\lambda)Y)\leq\max\{f(X),f(Y)\}$ for all $\lambda\in[0,1]$ and $X,Y\in\cX$.
We say that $f$ is anti-star shaped, concave, quasiconcave, superadditive whenever $-f$ is star shaped, convex, quasiconvex, subadditive. Now, consider a function $f:\cX\to[-\infty,\infty]$. We say that $f$ is proper if $f$ takes some finite value and never takes the value $-\infty$. We say that $f$ is lower semicontinuous if for all nets $(X_\alpha)\subset\cX$ and $X\in\cX$ we have $f(X)\leq\liminf f(X_\alpha)$ as $X_\alpha\to X$. The asymptotic function of $f$ is the unique function $f^\infty:\cX\to[-\infty,\infty]$ such that $\epi(f^\infty)=\epi(f)^\infty$, where $\epi(f) := \{(X,m)\in\cX\times\R \,; \ f(X)\leq m\}$ is the epigraph of $f$. 
The function $f^\infty$ is always lower semicontinuous and we either have $f^\infty(0)=0$ or $f^\infty(0)=-\infty$. In the former case, $f^\infty$ is positively homogeneous. If $f$ is lower semicontinuous and star shaped, then $f^\infty\leq f$. If $f$ is convex, then $f^\infty$ is also convex. If $f$ is proper, convex, and lower semicontinuous, then
\begin{equation*}
f^\infty(X) = \sup_{\lambda\in\R_{++}}\tfrac{f(Z+\lambda X)-f(Z)}{\lambda}, \ \ \ X\in\cX,
\end{equation*}
for every $Z\in\cX$ with $f(Z)\in\R$.

}


{\small

\bibliographystyle{apalike}
\bibliography{bibliography}

}


\end{document}